\documentclass{article}
\usepackage[preprint]{neurips_2026}
\usepackage[utf8]{inputenc}
\usepackage[T1]{fontenc}
\usepackage{amsmath,amssymb,amsfonts}
\usepackage{amsthm}
\usepackage{graphicx}
\usepackage{booktabs}
\usepackage{hyperref}
\usepackage{url}
\usepackage{microtype}
\usepackage{xcolor}
\usepackage{subcaption}
\usepackage{algorithm}
\usepackage{algorithmic}
\usepackage{multirow}
\usepackage{enumitem}
\usepackage{float}

\graphicspath{{figures/}}

\title{The Ringelmann Effect in Multi-Agent LLM Systems: A Scaling Law for Effective Team Size}

\author{%
  Bla\v{z} Bertalani\v{c}, Carolina Fortuna\\
  Jozef Stefan Institute\\
}

\newtheorem{theorem}{Theorem}
\newtheorem{proposition}{Proposition}
\newtheorem{corollary}{Corollary}
\newtheorem{definition}{Definition}

\begin{document}
\maketitle

\begin{abstract}

Inference-time multi-agent LLM scaling lacks a shared unit: counting nominal agents conflates cost with independent evidence. We derive a two-parameter scaling law $R(N) = N_\text{eff}/N = 1/(1+c(N-1)N^{-\beta})$ where the regime exponent $\beta$ classifies any configuration into one of three asymptotic regimes --- hard-ceiling at $1/c$ ($\beta = 0$), sublinear at $N^\beta/c$ ($0 < \beta < 1$), or linear ($\beta \ge 1$), and a mean-field theorem predicts that peer count $k$ and rounds $\tau$ during agent debate enter the dynamics only through their product $k\tau$. The law applies at two levels: answer diversity and correctness redundancy.

Across 44 (model $\times$ task $\times$ condition) cells spanning peer debate, self-correction, random-noise placebo, self-consistency, three open-weight families (Qwen, Llama, Ministral) at scales from 7B to 32B with a frontier API check (Gemini), thinking models, heterogeneous teams, and sparse communication, the functional form fits every condition at $R^2 > 0.99$; only $(c, \beta)$ shifts. On free-form math, dense peer influence collapses the answer-level regime from sublinear into hard-ceiling; correctness-level fits remain hard-ceiling throughout. Three findings have practical implications. \emph{(i)}~Thirty dense debating agents produce no more answer diversity than one on MMLU-Hard. \emph{(ii)}~A noise placebo tracks self-correction on free-form math and at $4\times$ scale, so within homogeneous teams the gain commonly attributed to ``debate'' comes from re-evaluation, not peer content. \emph{(iii)}~A single $N \le 5$ pilot predicts the $N=30$ structural ceiling, and within the configurations tested only architectural diversity (heterogeneous teams) lowers $c$ and escapes the hard-ceiling regime, communication-mode interventions do not.
\end{abstract}

\section{Introduction}

Multi-agent LLM systems are emerging as a promising candidate for AI automation. While pretraining scaling laws gave the field a shared unit (parameters, tokens, compute) for reasoning about when and how to scale models\citep{kaplan2020scaling,hoffmann2022training}, inference-time multi-agent scaling has no such unit. Papers report ``$N{=}10$ agents helped'' \citep{du2023improving} and ``$N{=}10$ agents didn't help'' \citep{wang2024rethinking,smit2024should,choi2025debatevote} without a shared axis for comparison. The main contribution of this paper is a \emph{measurement framework} inspired by psychosocial theory. It consists of a two-parameter scaling law $(c, \beta)$ that, estimating from a small pilot, informs how many of $N$ nominal agents actually contribute independent evidence. The law applies at two levels: \emph{answer diversity} (how distinct the team's answers are) and \emph{correctness redundancy} (how correlated the binary correct/wrong indicators are). It also has two practical consequences. First, deployment planning collapses from a three-dimensional grid search over team size, peer count, and revision rounds into a single small pilot. Second, the estimated $\beta$ tells deployers when adding agents pays off ($\beta \approx 0$ means a hard ceiling, $\beta > 0$ means scaling still adds independent evidence), and candidate interventions (architectural diversity, sparse topology, MoA pipelines, verifier filtering) can be evaluated on the same $(c, \beta)$ axis.

\textbf{Contributions.}\label{sec:intro-contributions}

\textbf{C1. (regime classifier)} \textbf{A new two-parameter scaling law $R(N){=}1/(1{+}c(N{-}1)N^{-\beta})$ that classifies an agentic  configuration into hard-ceiling ($\beta{=}0$), sublinear ($0{<}\beta{<}1$), or linear ($\beta{\ge}1$) regimes.} The law applies at both answer-diversity and correctness-redundancy levels, and parameters estimated on $N \le 5$ extrapolate to $N{=}30$ across every evaluated configuration (\S\ref{sec:framework}).

\textbf{C2. (falsifiable mechanism)} Mean-field theorem communication-density (Thm.~\ref{thm:comm-density}, \S\ref{sec:dynamics}) derives the scaling law from explicit conformity dynamics and yields the falsifiable prediction that peer count $k$ and rounds $\tau$ collapse to their product $k\tau$. Empirical confirmation within $\Delta\rho \approx 0.04$ (Table~\ref{tab:ktau_collapse}; full stratified analysis in App.~\ref{app:k_sweep}).

\textbf{C3. (unification)} The same form describes 44 configurations across debate, self-correction, noise placebo, self-consistency, three model families at two scales, a thinking model, heterogeneous teams, sparse communication, and even classical human group studies (App. B). Conditions previously treated as separate phenomena are different points in $(c, \beta)$. On free-form math, peer interaction triggers a regime crossing debate flattens to $\beta \to 0$ (hard ceiling) while self-correction stays sublinear at $\beta \approx 0.25$ (\S\ref{sec:results}).

\textbf{C4. (design-space collapse)} Once $\beta \approx 0$, which holds for instruct-class models on bounded tasks, only $\rho^{(0)}$ (initial inter-agent correlation) escapes the ceiling. Communication mode, peer count, rounds, and aggregator are interchangeable in this regime. The actionable lever is architectural diversity. The translation to accuracy goes through Condorcet majority voting, tight on MCQA and approximate on heterogeneous teams (up to ${\sim}7$~pp; \S\ref{sec:correlation}, App.~\ref{app:accuracy_projection}).

\section{Related work}
\label{sec:related}

Prior work either documents diminishing or non-monotone returns at fixed team sizes \citep{wang2024rethinking,wynn2025talk,chen2024more} or characterizes individual conformity mechanisms \citep{sharma2024towards,bellina2026conformity} without combining them into a predictive framework. What has been missing is a regime classification that operates from a small-team pilot and bears a falsifiable mechanistic prediction. We supply both.

\textbf{Multi-agent scaling.} Multi-agent debate was proposed for scalable oversight \citep{irving2018ai} and inference-time scaling \citep{du2023improving,liang2023encouraging,khan2024debating,chan2023chateval}, with \citet{li2024agentforest} claiming monotone gains from sampling-and-voting. A growing body of work reports diminishing or negative returns at fixed team sizes without a predictive framework \citep{wang2024rethinking,choi2025debatevote,yang2025revisitingmad,wu2025can,smit2024should,iclr2025mad,kaesberg2025voting,wynn2025talk}, and \citet{qian2025scaling} reported logistic saturation past $1{,}000$ agents. \citet{kim2025towards} fit a predictive model across 260 configurations on agentic benchmarks. While prior works \textit{describe}, we \textit{classify and predict} uniquely connecting saturation to inter-agent correlation. We share \citet{wynn2025talk}'s conformity diagnosis but provide a two parameter scaling law validated across three model families with $R^2>0.99$. Concurrent \citet{yang2026understanding} studies the same scaling bottleneck through diversity metrics.  Our axis is variance-matching via the Kish design effect demonstrated to be predictive: $(c, \beta)$ fitted on $N \le 5$ extrapolate to $N{=}30$ at $\le 12\%$  mean relative error across three open-weight families, beating power-law ($\ge 31\%$) and logistic ($\ge 68\%$) alternatives (\S\ref{sec:universality}, Table~\ref{tab:held_out_audit} in App.~\ref{app:cross_model}).

\textbf{Mechanistic ingredients.} Sycophancy \citep{perez2022discovering,sharma2024towards}, conformity \citep{zhu2024conformity,bellina2026conformity}, identity-weighted belief updating \citep{choi2025identity}, inter-agent sycophancy \citep{yao2025sycophancydebate,oh2025biasdebate}, and non-monotone majority-vote scaling \citep{chen2024more} have been studied as separate phenomena, and \citet{denisov2026consensus} show consensus is not verification. The Ringelmann framework absorbs these mechanisms as parameter shifts in $(c, \beta)$: sycophancy and inter-agent sycophancy raise $\rho^{(0)}$, peer conformity raises $\alpha$, identity weighting modulates $\alpha$ asymmetrically, and our two-rate decomposition (\S\ref{sec:dynamics}, App.~\ref{app:two_rate}) extends \citet{bellina2026conformity} by separating learning from sycophancy rates. The scaling law translates each parameter shift into a regime prediction.

\textbf{Baselines and classical foundations.} Self-consistency \citep{wang2023selfconsistency} is the no-interaction baseline. \citet{brown2024monkeys} establish scaling for repeated independent sampling, and \citet{snell2024scaling} for test-time compute more broadly: more compute yields more accuracy at a measurable rate. Our claim is orthogonal. Inference-time scaling describes \emph{throughput} (more samples improve accuracy under a given aggregator). The Ringelmann law describes \emph{redundancy} (more agents under interaction contribute diminishing independent evidence under a regime-classified rate). Independent sampling has $\rho = 0$ by construction, so its $N_\text{eff} = N$ and the Ringelmann curve is flat at $R(N) = 1$. Any communication introduces $\rho > 0$ and moves the system into one of three regimes (C1, Prop.~\ref{prop:ringelmann-regimes}). Our self-correction control extends \citet{madaan2023selfrefine} to isolate re-evaluation from peer influence: the gap to debate is ${\le}3$~pp while re-evaluation alone adds up to $+18$~pp. The framework builds on the Ringelmann effect \citep{ringelmann1913}, the Kish design effect \citep{kish1965survey}, and correlated-Condorcet theory \citep{ladha1992condorcet,boland1989majority}. We contribute three new results: regime classification (Prop.~\ref{prop:ringelmann-regimes}), a hard-ceiling escape condition (Prop.~\ref{prop:generic-ceiling}), and a falsifiable $k\tau$-product theorem (Thm.~\ref{thm:comm-density}).

\section{Framework: a Ringelmann scaling law for multi-agent systems}
\label{sec:framework}

\subsection{Effective team size and correlation}

LLM agents conform under interaction \citep{sharma2024towards,perez2022discovering,zhu2024conformity}, so their answers correlate rather than being independent. This is the standard statistical problem of clustered samples: when observations within a sample are correlated, the nominal sample size overstates the information content. The Kish design effect \citep{kish1965survey} provides the canonical correction, converting $N$ correlated observations into an equivalent number of independent observations whose mean has the same variance:
\begin{equation}
N_\text{eff}(N) \,=\, N/(1 + (N-1)\rho_N),
\label{eq:neff}
\end{equation}
where $\rho_N$ is the equicorrelation-scale pairwise agreement at team size $N$ (the convention is that all agent pairs are treated as equally correlated). At $\rho_N = 0$ (independent agents) $N_\text{eff}(N) = N$. As $\rho_N \to 1$ (full consensus) $N_\text{eff}(N) \to 1$, regardless of $N$. The Kish form is derived for binary correctness indicators (App.~\ref{app:variance-matching}). Applied to answer-level agreement it is a variance-matching definitional rescaling onto a common axis.

\textbf{Measuring $\rho_N$.} We use a hat to mark the empirical estimator computed from data: for a team of $N$ agents,
$
\hat{\rho}_N \,:=\, \frac{N \cdot A_\text{pair} - 1}{N-1},
$
where $A_\text{pair}$ is the probability that two randomly drawn agents give the same answer post-communication. The formula converts $A_\text{pair}$ to a correlation on the same scale Eq.~\ref{eq:neff} uses.

\textbf{From a point relation to a scaling law.} Eq.~\ref{eq:neff} converts $\rho_N$ to $N_\text{eff}$ at one team size. To characterize how $N_\text{eff}$ scales as we add agents, we need to know how $\rho_N$ depends on team size. Running the experiment at $N \in \{2, 3, 5, \ldots, 30\}$ produces an empirical sequence $\{\hat{\rho}_N\}$, to get the \emph{measured correlation profile}. We model the underlying $\rho_N$ as a power law, with classical precedent in declining-efficiency literature (App.~\ref{app:classical_ringelmann}):

\begin{definition}[Ringelmann Exponent]
We model the correlation profile as
$
\rho_N = cN^{-\beta}, \qquad c > 0,
$
and call $\beta$ the \emph{Ringelmann exponent} of the multi-agent system. We estimate $(c, \beta)$ by fitting this model to the measured profile $\{\hat{\rho}_N\}$.
\end{definition}

\noindent
Substituting the model into Eq.~\ref{eq:neff} gives the Ringelmann efficiency curve:
\begin{equation}
R(N) \,:=\, N_\text{eff}(N)/N \,=\, 1/(1 + c(N{-}1)N^{-\beta}),
\label{eq:ringelmann}
\end{equation}
with two interpretable parameters: $c$ sets the floor (asymptotic ceiling at $1/c$ when $\beta = 0$), and $\beta$ controls how fast correlation falls with team size. In this section we distinguish the measured $\hat{\rho}_N$ from the model $\rho_N$. Elsewhere in the paper, where context makes clear we are reporting a measurement, we drop the hat for brevity and write bare $\rho$ or $\rho_N$. Other subscripts indicate qualifiers (eff, ICC, mf, cond), tabulated in App.~\ref{app:effective_team_model}.

\subsection{Scaling regimes: from correlation to ceiling}

Previous section introduced the model with $\rho_N = cN^{-\beta}$ as a two-parameter family. We now classify what scaling behavior each region of the $(c, \beta)$ plane produces. The intuition: $N_\text{eff}$ grows when adding an agent contributes new independent evidence and saturates when it does not. Concretely, the asymptotic growth of $N_\text{eff}$ is governed by the product $N \cdot \rho_N$ in Eq.~\ref{eq:neff}, which counts the redundant correlation injected by the team: linear scaling ($N_\text{eff} = \Theta(N)$) requires this product to stay bounded: $\rho_N = O(1/N)$. Any constant $\rho > 0$ instead pushes $N \cdot \rho_N \to \infty$ and forces $N_\text{eff} \to 1/\rho$ (Proposition~\ref{prop:effective-scaling} in App.~\ref{sec:proposition2}). Under the power-law parameterization, the exponent $\beta$ controls which side of $1/N$ the correlation sits on, partitioning the system into three discrete regimes:

\begin{proposition}[Power-Law Ringelmann Regimes with proof in App.~\ref{app:proposition1}.]
\label{prop:ringelmann-regimes}
Within the effective-team model, suppose $\rho_N = cN^{-\beta}$ with $c>0$. Then
\[
N_\text{eff}(N) \sim
\begin{cases}
1/c, & \beta = 0 \quad \text{(\emph{hard ceiling})},\\[4pt]
N^{\beta}/c, & 0 < \beta < 1 \quad \text{(\emph{sublinear})},\\[4pt]
\Theta(N), & \beta \ge 1 \quad \text{(\emph{linear})}.
\end{cases}
\]
The linear regime gives $N_\text{eff} \to N/(1+c)$ at the boundary $\beta = 1$ and $N_\text{eff} \to N$ for $\beta > 1$.
\end{proposition}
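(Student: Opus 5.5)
The plan is a direct asymptotic analysis of the closed form obtained by substituting $\rho_N = cN^{-\beta}$ into Eq.~\ref{eq:neff}:
\[
N_\text{eff}(N) \,=\, \frac{N}{1 + c(N-1)N^{-\beta}}.
\]
Everything hinges on the size of the denominator term $D_N := c(N-1)N^{-\beta}$, since $N-1 \sim N$ gives $D_N \sim cN^{1-\beta}$. I will split on the sign of $1-\beta$, which is exactly the trichotomy in the statement, and in each case identify whether $D_N \to \infty$, $D_N \to c$, or $D_N \to 0$.

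\textbf{Case $\beta<1$.} Here $D_N \to \infty$, so $1 + D_N \sim cN^{1-\beta}$. Hence
\[
N_\text{eff}(N) \sim \frac{N}{cN^{1-\beta}} = \frac{N^{\beta}}{c}.
\]
For $\beta = 0$ this reads $N_\text{eff}(N) \to 1/c$, the hard ceiling. This is the content of Proposition~\ref{prop:effective-scaling} with constant $\rho = c$, and I would cite it. Exactly, $N_\text{eff}(N) = N/(1 + c(N-1))$, which is increasing to $1/c$; the limit is clean when $c \le 1$. For $0<\beta<1$ the same computation gives the sublinear rate $N^\beta/c$.

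To make ``$\sim$'' precise, I will write
\[
\frac{N_\text{eff}(N)}{N^\beta/c} \,=\, \frac{cN^{1-\beta}}{1 + c(N-1)N^{-\beta}}
\]
and check that this ratio tends to $1$. Dividing numerator and denominator by $cN^{1-\beta}$ leaves $1/\bigl(N^{\beta-1}/c + (N-1)/N\bigr) \to 1$.

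\textbf{Case $\beta = 1$.} Now $D_N = c(N-1)/N \to c$. Therefore $N_\text{eff}(N)/N \to 1/(1+c)$, that is, $N_\text{eff} \sim N/(1+c)$, which is $\Theta(N)$.

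\textbf{Case $\beta>1$.} Here $D_N \le cN^{1-\beta} \to 0$, so $N_\text{eff}(N)/N \to 1$. This is also $\Theta(N)$, with explicit two-sided bounds: for all $N \ge 1$,
\[
\frac{N}{1+c} \,\le\, N_\text{eff}(N) \,\le\, N,
\]
since $0 \le D_N \le c$ whenever $\beta \ge 1$. This one inequality handles the whole linear regime uniformly, and the two limits above sharpen it at the boundary and beyond.

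The main obstacle is not computational but interpretive. First, one must state that $\sim$ means the ratio tends to $1$. Second, one must note that the model is only admissible while $\rho_N \le 1$. Within the effective-team model, that requires $c \le 1$ when $\beta = 0$ and otherwise holds for $N$ large enough. Since only large-$N$ asymptotics are claimed, this needs a remark rather than an argument.
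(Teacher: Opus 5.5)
Your proposal is correct and follows essentially the same route as the paper's proof in App.~\ref{app:proposition1}: substitute $\rho_N = cN^{-\beta}$ into the Kish denominator, observe that it behaves like $1 + cN^{1-\beta}$, and split on whether this term diverges ($\beta<1$), tends to $c$ ($\beta=1$), or vanishes ($\beta>1$). The paper states the argument slightly more generally, allowing $\rho_N = cN^{-\beta} + o(N^{-\beta})$. Your explicit ratio check for $\sim$ and the uniform bound $N/(1+c) \le N_\text{eff} \le N$ for $\beta \ge 1$ are tidy additions. One small correction: $N/(1+c(N-1))$ is increasing only for $c<1$ and is constant at $c=1$, although the limit $1/c$ holds for every $c>0$.
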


\textbf{Reading $(c, \beta)$.} $c$ controls the floor: at $\beta{=}0$, the asymptotic ceiling is $1/c$. $\beta$ controls how $\rho_N$ shrinks with $N$: $\beta{=}0$ means correlation is constant (hard ceiling), $\beta{=}1$ means it falls like $1/N$ and $N_\text{eff}$ grows linearly, discounted by $1{+}c$, and $0{<}\beta{<}1$ is sublinear. Empirically one estimates $(c,\beta)$ from $R(N){=}N_\text{eff}(N)/N$ and asks whether an intervention lowers $c$, increases $\beta$, or both. Algorithm and worked example in App.~\ref{app:measurement}.

\subsection{Dynamic model: where does correlation come from?}
\label{sec:dynamics}

The preceding subsection takes $\rho_N$ as given and derives what it implies for scaling. We now close the loop: \emph{what determines $\rho_N$?} A simple mean-field closure, in the spirit of \citet{degroot1974reaching}'s classical consensus model, assumes that residual disagreement contracts by a constant factor per communication round. Writing $\tau$ for the number of post-communication updates and $d^{(\tau)}:=1-\rho^{(\tau)}$, we assume $d^{(\tau+1)}\approx(1-\alpha)d^{(\tau)}$. This yields the approximation:
\begin{equation}
\rho_{\text{mf}}^{(\tau)} \approx 1 - (1-\rho^{(0)})(1-\alpha)^\tau
\label{eq:rho}
\end{equation}
Here $\rho^{(0)}$ is the pre-communication correlation and $\alpha \in [0,1]$ is the fraction of remaining disagreement removed per round.

Thm.~\ref{thm:comm-density} derives this contraction under explicit conformity dynamics and finds that the effective $\alpha$ depends on peer count $k$ as $\alpha = 1 - (1-\alpha_1)^k$, where $\alpha_1$ is the per-modal-peer adoption probability. The single rate $\alpha$ averages over a learning rate $\alpha_l$ (wrong agents adopting correct modal peers) and a sycophancy rate $\alpha_s$ (correct agents adopting wrong modal peers). When $\alpha$ and $\rho^{(0)}$ do not depend on $N$, $\rho_\text{mf}$ is $N$-independent, giving $\beta = 0$ (hard ceiling). Escaping the ceiling requires both $\alpha_N \to 0$ and $\rho_{0,N} \to 0$ as $N$ grows (where $\alpha_N$ and $\rho_{0,N}$ denote $\alpha$ and $\rho^{(0)}$ at team size $N$), and neither alone suffices (Proposition~\ref{prop:generic-ceiling}).

\begin{theorem}[Mean-Field Convergence and the $k\tau$ Product Structure]
\label{thm:comm-density}
Consider $N \ge 2$ agents over discrete rounds. Coarsen each agent's answer into a binary indicator $\in \{\text{modal},\,\text{non-modal}\}$, where ``modal'' denotes the most common team answer at round $\tau$, and assume three conditions. \emph{(A1)} Each non-modal agent observes $k$ peers ($1 \le k \le N{-}1$) drawn uniformly without replacement from the $N{-}1$ other agents, adopting the modal answer with probability $\alpha_1 \in (0,1)$ per modal peer, independently. \emph{(A2)} Modal agents do not switch (an idealization, App.~\ref{app:thm-proof} shows the $k\tau$ structure is robust to small modal-switching rates). \emph{(A3)} The mode is stable across rounds.

Let $d^{(\tau)} = 1-\rho^{(\tau)}$ denote pairwise disagreement at round~$\tau$. In the high-agreement mode ($d^{(\tau)} \ll 1$):
\begin{equation}
d^{(\tau+1)} = d^{(\tau)}\,(1-\alpha_1)^{k} + O\!\bigl((d^{(\tau)})^2\bigr)
\label{eq:contraction}
\end{equation}
and iterating yields:
\begin{equation}
\rho_{\textup{mf}}(\tau,k) \approx 1 - (1-\rho^{(0)})(1-\alpha_1)^{k\tau}
\label{eq:rho-ext}
\end{equation}
with a single effective conformity rate $\alpha_1$. 
\end{theorem}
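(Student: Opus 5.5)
The plan is to track the fraction $m^{(\tau)}$ of non-modal agents and relate it to pairwise disagreement $d^{(\tau)}$. Under the binary coarsening, two randomly drawn distinct agents disagree only if exactly one of them is non-modal, or if both are non-modal with different non-modal answers. I will show first that $d^{(\tau)} = 2m^{(\tau)} + O\bigl((m^{(\tau)})^2\bigr)$; with finite-$N$ corrections the leading term is $2m(1-m)N/(N-1)$, which differs from $2m$ only at second order. Conversely, $m^{(\tau)} = d^{(\tau)}/2 + O\bigl((d^{(\tau)})^2\bigr)$. It therefore suffices to prove the contraction for $m$. Since the map $m\mapsto d$ is linear at leading order, the contraction factor transfers unchanged.

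\textbf{One-round update.} Fix round $\tau$ and a non-modal agent $i$. By (A1), agent $i$ samples $k$ peers uniformly without replacement from the other $N-1$ agents. The number $J$ of modal peers among them is hypergeometric, with $N-1$ items, $M$ of them modal ($M$ counted among the others), and $k$ draws. Conditional on $J$, agent $i$ fails to adopt with probability $(1-\alpha_1)^J$, by independence across modal peers. So
\[
\Pr[\text{$i$ stays non-modal}] \;=\; \mathbb{E}\bigl[(1-\alpha_1)^J\bigr].
\]
In the high-agreement regime the number of non-modal others is about $(N-1)m$. The probability that any sampled peer is non-modal is then at most $k\,m^{(\tau)}\cdot\frac{N}{N-1} = O(m^{(\tau)})$ by a union bound. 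Hence $J=k$ except on an event of probability $O(m^{(\tau)})$, and
\[
\mathbb{E}\bigl[(1-\alpha_1)^J\bigr] \;=\; (1-\alpha_1)^k + O\bigl(m^{(\tau)}\bigr).
\]
By (A2) modal agents never leave. By (A3) the label "modal" refers to the same answer at round $\tau+1$, so adopters join the existing mode rather than creating a new one. Taking expectations,
\[
m^{(\tau+1)} \;=\; m^{(\tau)}\bigl[(1-\alpha_1)^k + O\bigl(m^{(\tau)}\bigr)\bigr] \;=\; m^{(\tau)}(1-\alpha_1)^k + O\bigl((m^{(\tau)})^2\bigr).
\]
Converting back to $d$ through the leading-order relation gives Eq.~\eqref{eq:contraction}.

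\textbf{Iteration.} Drop the quadratic remainder and iterate the linear recursion $\tau$ times from $d^{(0)} = 1-\rho^{(0)}$. This gives $d^{(\tau)} \approx (1-\rho^{(0)})\bigl[(1-\alpha_1)^k\bigr]^\tau = (1-\rho^{(0)})(1-\alpha_1)^{k\tau}$, which is Eq.~\eqref{eq:rho-ext}. Because $d^{(\tau)}$ is monotonically decreasing, the remainders stay controlled. The accumulated relative error is bounded by $\prod_{s}\bigl(1+O(d^{(s)})\bigr)$. Since the $d^{(s)}$ decay geometrically, their sum is $O(d^{(0)})$ and this product is $1+O(d^{(0)})$. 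The approximation is therefore uniform in $\tau$ within the high-agreement regime, which is how I would interpret the "$\approx$". Matching with Eq.~\eqref{eq:rho} identifies $\alpha = 1-(1-\alpha_1)^k$. The dependence on $k$ and $\tau$ enters only through $k\tau$.

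\textbf{Main obstacle.} The delicate step is making the stochastic-to-mean-field passage honest. The equation for $m$ holds in expectation, while $d$ is a nonlinear (quadratic) function of the configuration. Replacing $\mathbb{E}[d^{(\tau+1)}]$ by $d(\mathbb{E}[m^{(\tau+1)}])$ introduces a variance term. Given $m$, this term is $O(m/N)$, because adoptions are independent Bernoulli trials across the roughly $Nm$ non-modal agents. It is also second order in the regime of interest. I would first try to absorb it into the $O(d^2)$ remainder, at the cost of stating the result in expectation or as $N\to\infty$. A second, smaller issue is that modal peers are sampled without replacement. The correction from sampling without replacement to sampling with replacement is only $O(m\,k/N)$ and is absorbed the same way.
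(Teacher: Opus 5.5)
Your proposal is correct and follows the paper's proof essentially step for step. Both arguments write a one-round recurrence for the non-modal mass, linearize it in the high-agreement regime, convert via $d = 2\varepsilon + O(\varepsilon^2)$, and iterate to $d^{(\tau)} = r^\tau d^{(0)}\bigl(1+O(d^{(0)})\bigr)$ with $r=(1-\alpha_1)^k$. The differences are local. You bound $\mathbb{E}[(1-\alpha_1)^J]$ directly by a union bound on the hypergeometric count, which holds at finite $N$. The paper instead passes to the binomial limit and uses the PGF to get the exact map $\varepsilon \mapsto \varepsilon(1-\alpha_1+\alpha_1\varepsilon)^k$, then Taylor-expands. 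Your remarks on the gap between the expected and realized configuration go beyond the paper's purely mean-field treatment and do not signal a flaw in your argument.
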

Three consequences: \emph{(i)}~$k$ and $\tau$ enter only through $k\tau$ (peer-round equivalence), \emph{(ii)}~fixed budget $B{=}k\tau$ is split-invariant to first order, \emph{(iii)}~on the pairwise agreement scale, $\rho^{(\tau)}$ is approximately $N$-independent for fixed $k \ge 1$ in the high-agreement regime, giving $\beta_\text{agr} \approx 0$. The entropy-scale exponent $\beta_\text{ent}$ (estimated against the entropy-derived $\rho_\text{eff}^{\text{ans}}$) may differ on free-form tasks where wrong-answer mass spreads across distinct strings. Proof and scope conditions in App.~\ref{app:thm-proof} (extension to placebo conditions with $\alpha_1 \to 0$, self-consistency at $\tau{=}0$, and the answer-level vs.\ pairwise-agreement scale distinction). The empirical $k\tau$ collapse and two-rate decomposition (with $\alpha_l > \alpha_s$ asymmetry) are reported in Appendices~\ref{app:k_sweep} and~\ref{app:two_rate}.

\section{Experimental setup}
\label{sec:setup}

\textbf{Models, Tasks and Team Dynamics.} The core identification models are: Qwen2.5-7B-Instruct (full grid, $N{=}1$--$30$). Cross-model studies involve: Llama-3.1-8B and Ministral-8B. Within-family $4\times$ scale: Qwen2.5-32B-Instruct. Stronger reasoning: Qwen3-8B (thinking mode) and Gemini Flash-Lite (minimal thinking). We consider the following representative multiple choice and free form tasks. \emph{MMLU-Hard}~\citep{hendrycks2021measuring}: 724 four-choice items curated from MMLU's complex-reasoning subjects (college physics, college mathematics, formal logic, econometrics, professional accounting; bounded MCQA). \emph{GSM-Hard}~\citep{gao2022pal}: 1{,}017 free-form numeric word problems (open-ended numeric answers, where wrong answers scatter across many distinct strings). \emph{GPQA Diamond}~\citep{rein2023gpqa}: 198 graduate-level four-choice science items (frontier-difficulty MCQA). We study team sizes $N \in \{1, 2, 3, 5, 7, 10, 15, 20, 30\}$, fully connected topology, three communication rounds (R1 independent, R2--R3 post-debate).
Detailed per-model scope insights in App.~\ref{sec:modelscope}.

\textbf{Conditions.} Three communication conditions per item: \emph{debate} (peers see true prior answers), \emph{self-correction} (own prior answer as ``peer''), \emph{noise} (peer answers drawn from a pre-generated pool of independent R1 outputs to other items, length-matched to debate). Self-correction and noise jointly isolate the contribution of genuine peer information: self-correction holds the prompt structure fixed while removing peer content, noise holds peer-text length and style fixed while removing topical relevance. The independent round-1 baseline ($\tau{=}0$, no communication) corresponds to self-consistency under majority voting.

\textbf{Prompt format.} All evaluations are zero-shot, with no in-context exemplars or chain-of-thought demonstrations. Agents respond in a structured RATIONALE/FINAL/CONF format (App.~\ref{sec:prompts}) held constant across conditions, enabling automated answer extraction and confidence elicitation. Holding the format fixed across conditions ensures it does not confound pairwise comparisons.

\textbf{Metrics.} For each (model, task, condition, $N$) cell we report:
(i)~\emph{team accuracy} under plurality voting at the final round, ties broken at random;
(ii)~the \emph{measured pairwise agreement} $\rho$ on the equicorrelation scale (Section~\ref{sec:framework});
(iii)~\emph{effective team size} at two levels, answer-level $N_\text{eff}^{\text{ans}} = 2^{H(\mathbf{a})}$ as primary (where $H(\mathbf{a})$ is the Shannon entropy in bits over the team's $N$ answers) and correctness-level $N_\text{eff}^{\text{corr}}$ from the marginal ICC of binary correctness indicators ($\rho_\text{ICC}^{\text{corr}}$) as complementary (Appendix~\ref{app:corr_level});
(iv)~the \emph{Ringelmann parameters} $(c, \beta)$ estimated by fitting $\rho_N = cN^{-\beta}$ to the measured profile across team sizes (algorithm in Appendix~\ref{app:measurement});
(v)~\emph{token cost}: prompt and output tokens, total and per effective agent (Appendix~\ref{app:cost}).
The two $N_\text{eff}$ levels coincide on bounded MCQA and decouple on free-form math, where distinct wrong numeric answers inflate answer entropy without lowering correctness redundancy.

\textbf{Ablations.} \emph{Communication density} ($k$-sweep): peer count $k \in \{1, 2, 4, 9, 29\}$ and rounds $\tau \in \{1, \ldots, 6\}$ at fixed $N$, evaluating the $k\tau$ product structure. \emph{Heterogeneous teams}: equal-proportion mixes (Qwen2.5-7B / Llama-3.1-8B / Ministral-8B in $1{:}1{:}1$ ratio) at $N \in \{3, 6, 9, 15, 30\}$, evaluating the block-exchangeable extension.

\textbf{Statistical methodology at a glance.} Pairwise agreement is reported on the equicorrelation scale $\rho$ (\S~\ref{sec:framework}). Confidence intervals use item-level bootstrap with at least $300$ iterations ($1{,}000$ for the held-out extrapolation, $5{,}000$ for sanity-check verification). We estimate $(c, \beta)$ via bounded least-squares on $R(N) = N_{\text{eff}}/N$, with $\beta \in [0, 1]$ ($\beta > 1$ would imply super-linear scaling, empirically irrelevant here) and $c \in [0, 20]$ (a permissive upper bound that prevents the optimizer drifting in the $N{=}1$ regime). The held-out audit estimates parameters from $N \in \{2, 3, 5\}$ and extrapolates to $N \in \{7, \ldots, 30\}$. Full protocol and significance tests in App.~\ref{app:statistics}. Other details such as generation parameters, answer extraction,  hardware, limitations etc. are in App.~\ref{app:experimental_details}.

\section{Results}
\label{sec:results}
\subsection{C1: Regime classification and small-$N$ pilots predict large-$N$}
\label{sec:regimes}
On Qwen2.5-7B across three tasks (Table~\ref{tab:regime_comparison}), three communication conditions (debate, self-correction, noise), and the self-consistency baseline, the same Ringelmann form $R(N) = 1/(1{+}c(N{-}1)N^{-\beta})$ applies, with parameter shifts that classify each configuration into a regime. Under dense debate, all three tasks sit in the hard-ceiling regime ($\beta \approx 0$, $c \in \{0.85, 0.57, 0.81\}$ on MMLU-Hard, GSM-Hard, GPQA respectively, asymptotic ceilings $1/c \approx 1.2$, $1.8$, $1.2$). On GSM-Hard, removing peer influence (self-correction, noise, self-consistency) shifts the regime to sublinear ($c \approx 0.33$--$0.43$, $\beta \approx 0.25$). MCQA tasks remain hard-ceiling across all conditions.

\textbf{Peer influence crosses an answer-level regime boundary on GSM-Hard.} Figure~\ref{fig:ringelmann} visualizes the crossing: noise and self-correction grow toward $N_\text{eff} \approx 6$ at $N{=}30$, while debate flattens at $N_\text{eff} \approx 1.8$. Self-consistency (round-1 only, no revision) also produces a sublinear fit ($c = 0.43$, $\beta = 0.25$). The crossing is specifically an answer-level phenomenon: at the correctness level, all GSM-Hard conditions are hard ceiling (App.~\ref{app:corr_level}, Table~\ref{tab:corr_level_fits}). Cross-family replication on Llama-3.1-8B and Ministral-8B confirms the crossing is not Qwen-specific (App.~\ref{app:cross_model}).

\textbf{Conformity collapses team productivity.} Agent-level transitions are net positive at every team size ($+8$~pp wrong $\to$ correct on GSM-Hard, App.~\ref{app:transitions}), yet team productivity remains at or below the matched no-peer control. Answer-diversity efficiency $R(N) = N_\text{eff}/N$ declines steeply with $N$ (${\sim}15{\times}$ lower at $N{=}30$ than at $N{=}2$), and $N_\text{eff}$ saturates near $1.8$ on GSM-Hard and ${\sim}1.2$ on MCQA (Table~\ref{tab:diversity}). Individual learning is real but does not translate into team gains because communication collapses $N_\text{eff}$ via inter-agent correlation. The conformity dynamics are visible to different degrees across tasks: on MMLU-Hard, round-1 agreement is already near-saturated ($\rho^{(0)} \approx 0.97$), so communication maintains rather than further raises correlation; on GSM-Hard, round-1 agreement starts low ($0.51$) and rises post-revision to $0.78$, making the conformity-amplification visible. 

\begin{table}[ht]
\caption{Canonical Ringelmann regime estimates per condition (Qwen2.5-7B).
Ringelmann form $R(N)=1/(1{+}c(N{-}1)N^{-\beta})$ estimated against observed $N_\text{eff}/N$ across team sizes. Brackets: 95\% bootstrap CIs (item-level resamples). Self-consistency uses round-1 answers (pre-communication) and is reported as a point estimate without bootstrap brackets. Full per-condition breakdown plus cross-family replication on Llama-3.1-8B and Ministral-8B in App.~\ref{app:cross_model}, Table~\ref{tab:master_fits}.}
\label{tab:regime_comparison}
\centering
\small
\begin{tabular}{llcccc}
\toprule
Task & Condition & $c$ [95\% CI] & $\beta$ [95\% CI] & Regime & $R^2$ \\
\midrule
\multirow{4}{*}{MMLU-Hard} & Self-consist. & .93 & .000 & hard ceiling & 1.000 \\
                            & Debate & .85 [.83, .86] & .000 [.00, .00] & hard ceiling & .9995 \\
                            & Self-corr. & .84 [.81, .88] & .011 [.00, .03] & hard ceiling & .9999 \\
                            & Noise & .65 [.64, .67] & .000 [.00, .00] & hard ceiling & .9991 \\
\midrule
\multirow{4}{*}{GSM-Hard}  & Self-consist. & .43 & .249 & sublinear & .9997 \\
                            & Debate & .57 [.55, .61] & .014 [.00, .04] & hard ceiling & .9989 \\
                            & Self-corr. & .34 [.31, .36] & .252 [.22, .29] & sublinear & .9995 \\
                            & Noise & .33 [.31, .36] & .272 [.24, .30] & sublinear & .9994 \\
\midrule
\multirow{4}{*}{GPQA}      & Self-consist. & .86 & .000 & hard ceiling & .9999 \\
                            & Debate & .81 [.77, .85] & .000 [.00, .01] & hard ceiling & .9993 \\
                            & Self-corr. & .78 [.73, .85] & .006 [.00, .04] & hard ceiling & 1.000 \\
                            & Noise & .53 [.49, .59] & .000 [.00, .05] & hard ceiling & .9984 \\
\bottomrule
\end{tabular}
\end{table}

\begin{figure}[t]
\centering
\includegraphics[width=\textwidth]{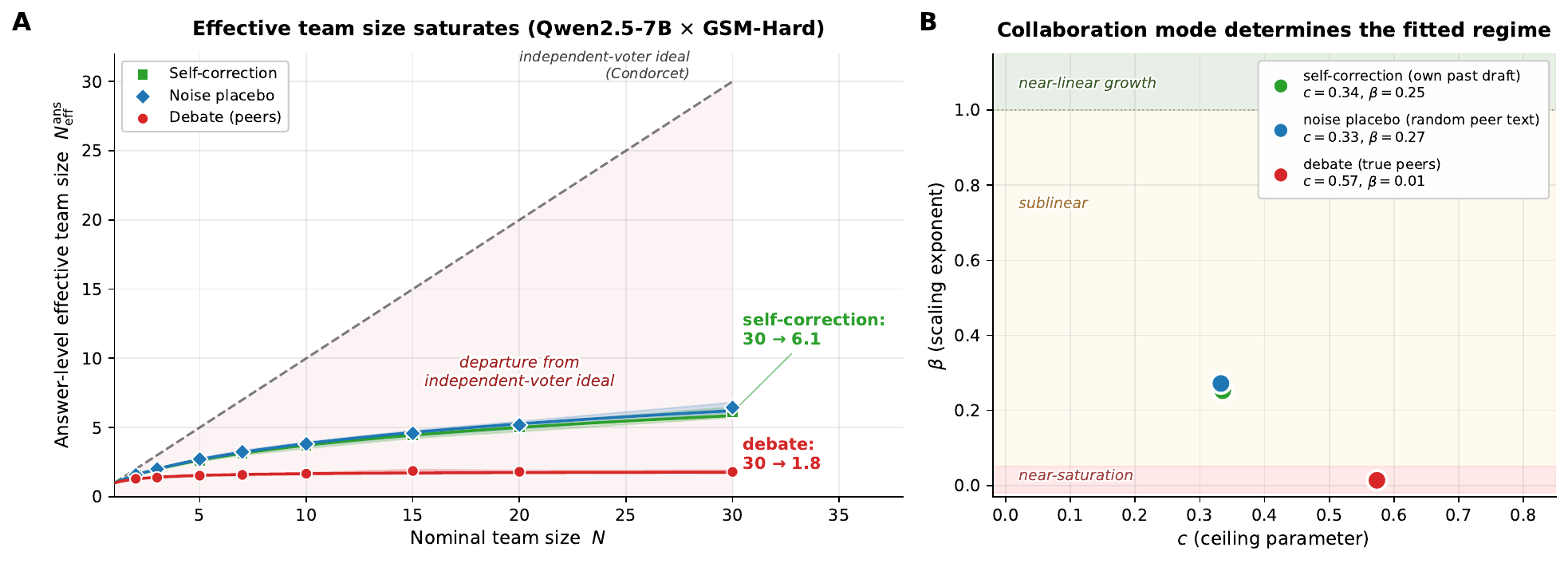}
\caption{\textbf{Effective team size saturates and the regime is set by collaboration mode}
(Qwen2.5-7B $\times$ GSM-Hard). \textbf{A:} Answer-level effective team size $N_\text{eff}^{\text{ans}}$ vs.\ nominal $N$. Self-correction and noise placebo grow sublinearly toward $N_\text{eff} \approx 6$ at $N{=}30$. Debate flattens at $N_\text{eff} \approx 1.8$. The dashed line marks the independent-voter ideal $N_\text{eff} = N$, and the shaded gap is the departure of the observed team from that ideal. \textbf{B:} Fitted Ringelmann parameters $(c, \beta)$ for the three modes. Self-correction and noise sit in the sublinear band ($\beta \approx 0.25$), debate sits in the hard-ceiling band ($\beta \approx 0$).}
\label{fig:ringelmann}
\end{figure}
\textbf{Held-out extrapolation audit.} We estimate parameters on $N \in \{2,3,5\}$ and predict $N \in \{7,10,15,20,30\}$ across 9 task$\times$condition cells. The Ringelmann form attains $5.5\%$ mean relative error, against $32.2\%$ for a generic power law and $73.4\%$ for logistic saturation (full audit table and extrapolation figure in App.~\ref{app:cross_model}, Table~\ref{tab:held_out_audit} and Figure~\ref{fig:efficiency_extrapolation}). The audit replicates on the other two open-weight families: Llama-3.1-8B preserves the ranking (Ringelmann $11.9\%$, Const-$\rho$ Kish $14.1\%$, power law $40.8\%$, logistic $69.7\%$), and Ministral-8B sharpens it (Ringelmann $5.8\%$, Const-$\rho$ Kish $6.6\%$, power law $31.6\%$, logistic $68.8\%$, on $8$ of $9$ cells, App.~\ref{app:cross_model}). Across the three open-weight families the Ringelmann mean relative error stays at $5$--$12\%$ while logistic saturation never beats $74\%$. The Const-$\rho$ Kish form (the $\beta{=}0$ special case) is competitive on hard-ceiling cells but fails on sublinear cells with $\beta > 0.2$, so the extra parameter earns its keep precisely where regime structure demands it. As an external sanity check, the estimated $c$ rank-orders the same as an independently measured correctness-level ICC across the 9 task–condition cells (full comparison in App.~\ref{app:corr_level}, Table~\ref{tab:c_vs_rho}), so $c$ tracks a measurable quantity rather than absorbing residual variance.

\subsection{C2: Peer count $k$ and rounds $\tau$ collapse to their product $k\tau$}
\label{sec:ktau}

\begin{table}[t]
\caption{$k\tau$ collapse test: pairwise agreement $\rho$ for matched $k\tau$ products at fixed $N$ (Qwen2.5-7B). For each $k\tau$, multiple $(k,\tau)$ configurations yield $\rho$ in a narrow band, confirming the theorem's prediction. Spread is the max range across configurations sharing the same product. Small $\Delta\rho$ confirms peer-round equivalence.}
\label{tab:ktau_collapse}
\centering
\small
\begin{tabular}{clccc}
\toprule
$k\tau$ & $(k,\tau)$ configs & $\rho$ at $N{=}10$ & $\rho$ at $N{=}30$ & max $\Delta\rho$ \\
\midrule
\multicolumn{5}{l}{\emph{GSM-Hard (free-form)}} \\
2  & (1,2), (2,1)               & .484--.537 & .455--.531 & $\le .076$ \\
4  & (1,4), (2,2), (4,1)        & .509--.557 & .476--.534 & $\le .058$ \\
6  & (1,6), (2,3)               & .561--.610 & .520--.548 & $\le .049$ \\
8  & (2,4), (4,2)               & .555--.610 & .515--.571 & $\le .056$ \\
12 & (2,6), (4,3)               & .659--.690 & .592--.631 & $\le .039$ \\
\midrule
\multicolumn{5}{l}{\emph{MMLU-Hard (MCQA, $K{=}4$)}} \\
2  & (1,2), (2,1)               & .889--.970 & .878--.974 & $\le .096$ \\
4  & (1,4), (2,2), (4,1)        & .857--.974 & .841--.974 & $\le .133$ \\
6  & (1,6), (2,3)               & .855--.880 & .840--.874 & $\le .034$ \\
8  & (2,4), (4,2)               & .877--.905 & .862--.888 & $\le .028$ \\
12 & (2,6), (4,3)               & .887--.896 & .853--.868 & $\le .015$ \\
\bottomrule
\multicolumn{5}{l}{\footnotesize Spread at low $k\tau$ is dominated by single-round configurations $(k,1)$.} \\  
\multicolumn{5}{l}{\footnotesize Restricting to $\tau \ge 2$ tightens the collapse (App.~\ref{app:k_sweep}, Fig.~\ref{fig:ktau_stratified}).} \\
\end{tabular}
\end{table}

The mean-field theorem (Thm.~\ref{thm:comm-density}) predicts that peer count $k$ and rounds $\tau$ enter the dynamics only through their product $k\tau$. We test this at fixed $N$ on Qwen2.5-7B with $k \in \{1, 2, 4, 9, 29\}$ and $\tau \in \{1, \ldots, 6\}$. Configurations sharing the same $k\tau$ yield agreement values within $\Delta\rho \approx 0.04$ in the high-agreement regime (Table~\ref{tab:ktau_collapse}), with looser collapse at $\tau{=}1$ where the theorem's binary linearization breaks down. The looseness at low $k\tau$ is dominated by single-round configurations $(k,1)$: on MMLU at $N{=}10$, $(4,1)$ reaches $\rho \approx 0.97$ while multi-round low-$k$ configurations sit at $\rho \approx 0.86$--$0.89$. Restricting to $\tau \ge 2$ tightens the collapse on both tasks; a per-item stratification by initial agreement (median $\Delta\rho = 0.027$ in high-agreement items vs $0.079$ in low-agreement items) is in App.~\ref{app:k_sweep}, Fig.~\ref{fig:ktau_stratified}. On the pairwise-agreement scale (the theorem's native variable) the regime-classifying exponent $\beta_\text{agr}$ decreases monotonically with $k$ toward the predicted $\beta_\text{agr} \approx 0$, and is small on both tasks ($\beta_\text{agr} \in [0.02, 0.06]$ on MMLU-Hard, $\in [0.09, 0.17]$ on GSM-Hard; Table~\ref{tab:k_sweep}). The same product structure extends under the asymmetric-rate model of App.~\ref{app:two_rate} (Table~\ref{tab:two_rate}; $\alpha_\text{eff}$ replaces $\alpha_1$, leaving $k\tau$ unchanged).

\subsection{C3: The Ringelmann form fits 44 mechanistically distinct conditions}
\label{sec:universality}

Across 44 (model $\times$ task $\times$ condition) configurations, the Ringelmann form $R(N){=}1/(1{+}c(N{-}1)N^{-\beta})$ fits with $R^2 \ge 0.97$ in-sample (median $0.999$); only the estimated $(c, \beta)$ shifts across conditions, not the form itself. We anchor on Qwen2.5-7B as the canonical example, then evaluate boundaries spanning model scale, family, paradigm, and team composition (Table~\ref{tab:master_fits}, Figure~\ref{fig:universality}; full per-cell estimates in App.~\ref{app:cross_model}; same form fits classical human Ringelmann studies in App.~\ref{app:classical_ringelmann}).

\textbf{Canonical case (Qwen2.5-7B).} As per Table~\ref{tab:master_fits}, on Qwen2.5-7B across three tasks, three communication conditions (debate, self-correction, noise), and the self-consistency baseline, the same form applies with parameter shifts that classify each configuration into a regime. Under dense debate, all three tasks sit in the hard-ceiling regime ($\beta \approx 0$, $c \in \{0.85, 0.57, 0.81\}$ on MMLU-Hard, GSM-Hard, GPQA respectively, asymptotic ceilings $1/c \approx 1.2$, $1.8$, $1.2$). On GSM-Hard, removing peer influence shifts the regime to sublinear ($c \approx 0.33$--$0.43$, $\beta \approx 0.25$). MCQA tasks remain hard-ceiling across all conditions. 

\textbf{Cross-family on MCQA: same regime, shifted ceiling.} Llama-3.1-8B and Ministral-8B both produce $\beta = 0$ on MMLU-Hard with $c \in [0.62, 0.71]$ (Qwen $c = 0.85$): the regime classification is shared, only the ceiling level differs. The same hard-ceiling pattern replicates on GPQA across all three families (Llama $c{=}0.55$, Ministral $c{=}0.57$, Qwen $c{=}0.81$, all $\beta=0$) Detailed analysis of cross-family in App.~\ref{app:noise_detail} (Table~\ref{tab:master_fits}).

\textbf{Model scale tightens the ceiling on all three tasks.} Qwen2.5-32B has $c = 0.96$ on MMLU-Hard, $c = 0.77$ on GSM-Hard, and $c = 0.97$ on GPQA (all $\beta \approx 0$), versus $c = 0.85$, $c = 0.57$, and $c = 0.81$ respectively at 7B. Noise $\approx$ self-correction replicates at 32B. Detailed analysis in App.~\ref{app:noise_detail} (Table~\ref{tab:master_fits})

\textbf{Thinking models and frontier API: tightest ceiling.} Qwen3-8B in thinking mode achieves $87.2\%$ solo accuracy on MMLU-Hard, yet debate drives $\rho = 0.983$ at $N{=}5$ ($\widehat{N}_\text{eff} = 1.02$, $c = 1.00$, $\beta \approx 0$). Gemini~3.1~Flash-Lite (minimal thinking) lands in the same regime on both GSM-Hard ($c = 0.95$) and GPQA ($c = 0.86$). As predicted, $\rho^{(0)} \to 1$ collapses $N_\text{eff}$ toward $1$ across both scale and paradigm. Detailed thinking-model analysis in App.~\ref{app:thinking} (Table~\ref{tab:thinking}).

\subsection{C4: Initial inter-agent correlation, not communication mode, controls the ceiling}
\label{sec:correlation}

Three results converge to identify the pre-communication correlation $\rho^{(0)}$ as the actionable axis.

\textbf{(i) Noise placebo $\approx$ self-correction.} On GSM-Hard, the noise placebo is indistinguishable from self-correction ($c = 0.33$ vs $c = 0.34$, $\beta = 0.27$ vs $\beta = 0.25$, $\widehat{N}_\text{eff}$ within $0.4$ at every $N$, Table~\ref{tab:regime_comparison}): once revision is held fixed, peer content per se contributes nothing to the scaling regime. The same equivalence replicates at $4\times$ scale on Qwen2.5-32B ($c = 0.45$ for both, $\beta = 0.18$ for both). Detailed analysis in App.~\ref{app:noise_detail}; the condition ordering survives item-difficulty conditioning (Table~\ref{tab:icc_decomp}).

\textbf{(ii) Mode-invariant team accuracy.} Across homogeneous conditions, communication mode shifts $(c, \beta)$ across the plane (Table~\ref{tab:regime_comparison}) yet team accuracy is invariant. A two-one-sided-tests (TOST) equivalence test, which jointly rejects ``debate is more than $5$~pp better'' and ``self-correction is more than $5$~pp better'', bounds the gap within $\pm 5$~pp on all three tasks ($p_\text{TOST} < 0.02$). Full statistics are in App.~\ref{app:corr_level} (Table~\ref{tab:decomposition}); aggregator-robustness checks in App.~\ref{app:aggregator}. The scaling law explains the convergence: on MCQA both conditions are hard-ceiling with similar $c$, and on GSM-Hard the different regimes produce similar $N_\text{eff}$ at large $N$.

\textbf{(iii) Architectural diversity raises the ceiling.} Cross-family teams (Qwen+Llama+Ministral) lower $c$ from $0.85$ to $0.54$ on MMLU-Hard, $0.57$ to $0.35$ on GSM-Hard, and $0.81$ to $0.56$ on GPQA, with $\beta \approx 0$ preserved on all three tasks. The shifts match the framework's prediction (Prop.~\ref{prop:hetero}, derived in App.~\ref{app:hetero_theory}: lower $\rho^{(0)}$ via architectural diversity raises $1/c$). Block-exchangeable validation: within-family pairs agree more than cross-family pairs in every combination (App.~\ref{app:hetero}, Table~\ref{tab:within_between_rho}, Fig.~\ref{fig:hetero_extrapolation}).

Taken together, these results pin $\rho^{(0)}$ as the only validated lever that escapes the hard-ceiling regime once $\beta \approx 0$. Communication-mode interventions (debate vs self-correction vs noise) shift parameters within the plane but leave team accuracy and the underlying ceiling structure intact.

\subsection{Deployment implications}
\label{sec:deployment}

\textbf{Where the law binds.} Multi-agent debate runs at production volume on instruct-class models ($\sim\$10^{-3}$ per problem vs.\ $\sim\$1$ on frontier reasoning models; full cost arithmetic in App.~\ref{app:deployment_econ}), so the law's predictions apply directly to the deployed regime. For frontier-class configurations with high solo accuracy, the framework predicts $c \to 1$ (\S\ref{sec:dynamics}). Qwen3-8B in thinking mode realizes this at $87.2\%$ solo accuracy on MMLU-Hard ($c = 1.00$), and Gemini 3.1 Pro's solo numbers sit in the same saturated regime (App.~\ref{app:cross_model}).

\textbf{Sizing and mode.} Once $N_\text{eff}$ saturates, additional agents spend tokens on redundant computation. On strong-model tasks ($p > 0.5$), $3$--$5$ self-correcting agents capture the full ceiling. On harder tasks, the practical knee is $N \approx 10$ (Table~\ref{tab:scaling}). Debate adds no significant accuracy benefit over self-correction while costing $10$--$45\times$ more in prompt tokens per effective agent (Table~\ref{tab:scaling}).

\textbf{Pilot procedure.} A small pilot at $N \in \{2, 3, 5\}$ recovers $(c, \beta)$ via bounded least-squares on $R(N) = N_\text{eff}/N$. This extrapolates to $N{=}30$ within $\le 12\%$ mean relative error across the evaluated families (full recipe in App.~\ref{app:measurement}; per-task projections in Table~\ref{tab:accuracy_prediction}).

\textbf{Solo accuracy moderates the accuracy returns from teaming.} On strong tasks ($p > 0.5$, MMLU-Hard), correlation prevents majority-vote gain from materializing. On harder tasks ($p \approx 0.10$--$0.30$, GSM-Hard, GPQA), re-evaluation still yields large absolute gains: GSM accuracy rises from $10\%$ to $26$--$28\%$, with self-correction ($+18.2$ pp) and debate ($+16.4$ pp) within $2$~pp of each other. Wrong-answer scatter lets the correct answer win plurality even at $p \ll 0.5$ (\S\ref{sec:framework}).

\subsection{Discussion}
\label{sec:discussion}

\textbf{What this changes.} The $(c, \beta)$ form fits 44 conditions across three open-weight families, a $4\times$ scale, a thinking model, a frontier API, and heterogeneous teams, with $5$--$12\%$ held-out error (Figure~\ref{fig:universality}, Table~\ref{tab:held_out_audit}). Protocols documented as orthogonal interventions in prior work \citep{wang2024rethinking,wynn2025talk,smit2024should,choi2025debatevote,yang2025revisitingmad} reduce to scale-and-shift transformations of one form. Concurrent diversity work \citep{yang2026understanding} reaches a similar diagnosis through entropy; we add the regime classifier and small-$N$ predictor.

\textbf{Field implication: a 1D plane.} In the regime where multi-agent debate actually deploys, communication mode is accuracy-invariant and peer content is inert; below-threshold gains ($10\% \to 26$--$28\%$ on GSM-Hard) are re-evaluation effects, as the noise-placebo result confirms. A large fraction of the multi-agent literature, varying prompts, voting rules, debate scripts, or topology while holding the model fixed, explores a plane the ceiling does not respond to. The ceiling moves only with $\rho^{(0)}$, lowered through architectural diversity (App.~\ref{app:hetero}).

\textbf{What would falsify or extend.} If verifier-aggregator pipelines lower $c$ at constant pre-verifier $\rho^{(0)}$, $\rho^{(0)}$ is not the lever. If MoA \citep{wang2024moa} or frontier heterogeneous teams produce $\beta \to 1$, the regime structure breaks. Open-ended generation extends the framework: a HumanEval pilot (App.~\ref{app:humaneval}) finds the same form on both pass-vector and canonicalized-code axes at $R^2 \ge 0.99$.

\section{Conclusion}

Across the role-symmetric LLM teams we evaluate, effective team size follows a re-parameterized Ringelmann style scaling law $R(N) = 1/(1+c(N-1)N^{-\beta})$ at both answer-diversity and correctness-redundancy levels. Systems differ not in \emph{whether} the law applies but in \emph{where} they fall in the $(c, \beta)$ landscape. On free-form math, peer influence crosses an answer-level regime boundary, collapsing sublinear scaling into a hard ceiling. Two practical implications follow: a single $N \le 5$ pilot identifies the regime, and once $\beta \approx 0$ the only validated lever that escapes the ceiling is reducing initial inter-agent correlation $\rho^{(0)}$ via architectural diversity (heterogeneous ensembles).

\bibliographystyle{plainnat}
\bibliography{references}

\clearpage
\appendix

\noindent\textbf{Appendix roadmap.}
The appendix maps to the four contributions enumerated in \S\ref{sec:intro-contributions}.

\emph{Foundations.} App.~\ref{app:effective_team_model} derives the effective-team model and proves all propositions and Thm.~\ref{thm:comm-density}. App.~\ref{app:classical_ringelmann} relates the form to classical Ringelmann effects.

\emph{Contributions.} \textbf{C1 (regime-classifying scaling law)} is supported by App.~\ref{app:corr_level} (correctness-level fits and $\rho$ comparisons), App.~\ref{app:diversity} (answer-diversity metrics), and the agent-vs-team paradox in App.~\ref{app:transitions}.  \textbf{C2 ($k\tau$ product theorem)} is grounded by App.~\ref{app:k_sweep} ($k\tau$ collapse and low-agreement stratification) and App.~\ref{app:two_rate} (asymmetric conformity rates and transition diagnostics for assumption A2). \textbf{C3 (shared form across conditions)} is supported by App.~\ref{app:hetero} (heterogeneous teams), App.~\ref{app:cross_model} (master fits across three open-weight families plus Gemini, $4\times$ scale, held-out audit), and App.~\ref{app:thinking} (thinking model). \textbf{C4 (initial correlation as the lever)} is detailed in \S\ref{app:noise_detail} (noise-placebo controls and the inert-peer paradox).

\emph{Deployment.} App.~\ref{app:deployment} bundles deployment guidance: small-$N$ measurement procedure (\S\ref{app:measurement}), token-cost scaling, heuristic accuracy projection, and economic argument.

\emph{Reproducibility.} App.~\ref{app:experimental_details} documents prompts, generation parameters, answer extraction, team aggregation, and hardware. \S\ref{app:statistics} covers bootstrap, fitting, and significance methodology. \S\ref{app:scope} states the formal scope and limitations. App.~\ref{app:aggregator} reports aggregator-robustness checks.

\section{Derivation of the effective-team model}
\label{app:effective_team_model}

\noindent\textit{Notation guide for $\rho$ variants used throughout the paper.}
Subscript on $\rho$ indicates a qualifier (mf, eff, ICC, cond) or team size $N$. Bare $\rho$ denotes the equicorrelation-scale pairwise agreement (\S~\ref{sec:framework}). Superscript with parens indicates round: $\rho^{(0)}$ is the initial (pre-communication) correlation and $\rho^{(\tau)}$ is the correlation at round $\tau$.
\begin{itemize}[nosep]
\item $\rho^{(0)}$: pre-communication pairwise answer correlation, baseline before any peer interaction.
\item $\rho^{(\tau)}$: pairwise correlation at round $\tau$, used in the dynamic mean-field model (Thm.~\ref{thm:comm-density}).
\item $\rho_N$: post-debate (final-round) pairwise correlation as a function of team size $N$. The empirical Ringelmann profile $\rho_N = c N^{-\beta}$ describes this.
\item $\rho_\text{mf}$: mean-field correlation from
Thm.~\ref{thm:comm-density}, tracking pairwise agreement through $d = 1 - \rho^{(\tau)}$.
\item $\rho_\text{eff}^{\text{ans}}$: equicorrelation implied by inverting
$N_\text{eff}^{\text{ans}} = 2^{H(\mathbf{a})}$ through Eq.~\ref{eq:neff}. The Ringelmann curve is estimated against this quantity.
\item $\rho_\text{ICC}^{\text{corr}}$: marginal ICC of binary correctness
indicators across items ($(\mathbb{E}[X_iX_j]-\bar{p}^2)/(\bar{p}(1{-}\bar{p}))$).
\item $\rho$: pairwise answer agreement on the equicorrelation
scale, $(N \cdot A_\text{pair} - 1)/(N - 1)$.
\item $\rho_\text{cond}^{\text{corr}}$: correctness ICC conditioned on round-1 item difficulty (App.~\ref{app:corr_level}), isolating within-item social dependence from shared item-difficulty effects.
\end{itemize}

\noindent\textit{Other notation.}
\begin{itemize}[nosep]
\item $\mathbf{a} = (a_1, \ldots, a_N)$: vector of the $N$ agents' final
answers in a single team trial. Each $a_i$ is the post-debate answer of agent $i$ on a single item, extracted from the agent's free-form output (App.~\ref{app:experimental_details}).
\item $H(\mathbf{a})$: Shannon entropy of $\mathbf{a}$ in bits, $H = -\sum_v p_v \log_2 p_v$ with $p_v$ the empirical frequency of answer value $v$ in the team.
\item $N_\text{eff}^{\text{ans}} = 2^{H(\mathbf{a})}$: answer-level effective team size. Ranges from $1$ (full consensus, $H{=}0$) to $N$ (all agents disagree, $H{=}\log_2 N$). Per-team quantity, reported as a mean across items.
\item $N_\text{eff}^{\text{corr}} = N / (1 + (N{-}1) \rho_\text{ICC}^{\text{corr}})$:
correctness-level effective team size, derived from the Kish design effect on the binary correctness indicator (App.~\ref{app:corr_level}).
\item $R(N) = N_\text{eff}/N$: efficiency ratio. The Ringelmann fit
$R(N) = 1/(1 + c(N{-}1)N^{-\beta})$ targets this quantity.
\end{itemize}

\subsection{From equicorrelated votes to effective team size}
\label{app:variance-matching} 
Let $X_i \in \{0,1\}$ indicate whether agent $i$'s final answer is correct after debate. Write
\[
\mathbb{E}[X_i] = p, \qquad \mathrm{Corr}(X_i, X_j) = \rho_\text{eff} \quad (i \neq j),
\]
where $p$ is the per-agent post-debate accuracy (the probability that an arbitrary agent finishes the debate at the correct answer) and $\rho_\text{eff}$ is the equicorrelation between any pair of agents' final correctness indicators. Under exchangeability,
\[
\mathrm{Var}(X_i) = p(1-p), \qquad \mathrm{Cov}(X_i, X_j) = \rho_\text{eff}\,p(1-p).
\]
For the team mean $\bar X = \frac{1}{N}\sum_{i=1}^N X_i$,
\begin{align}
\mathrm{Var}(\bar X)
&= \frac{1}{N^2}\left[\sum_{i=1}^N \mathrm{Var}(X_i) + \sum_{i \neq j}\mathrm{Cov}(X_i, X_j)\right] \\
&= \frac{1}{N^2}\left[Np(1-p) + N(N-1)\rho_\text{eff}p(1-p)\right] \\
&= \frac{p(1-p)}{N}\big(1 + (N-1)\rho_\text{eff}\big).
\label{eq:appendix_varmean}
\end{align}
Relative to independent votes, the variance is inflated by the Kish design effect $D_\text{eff}(N) := 1 + (N-1)\rho_\text{eff}$. Matching this variance to that of an independent sample of size $N_\text{eff}$ gives $N_\text{eff} = N/(1 + (N-1)\rho_\text{eff})$, which is Eq.~\ref{eq:neff}. This identity is exact for exchangeable Bernoulli votes with common pairwise correlation. When heterogeneous dependence is compressed into a single effective parameter $\rho_\text{eff}$, the formula becomes a variance-matching approximation.

\subsection{Correlated-Condorcet approximation}

Let $S_N := \sum_{i=1}^N X_i$ denote the number of correct final votes. Majority voting is correct when $S_N > N/2$ (strict majority, ties broken by fair coin). Under the same exchangeable model, $\mathrm{Var}(S_N) = Np(1-p)(1 + (N-1)\rho_\text{eff}) = N^2 p(1-p)/N_\text{eff}$. Approximating $S_N$ by a Gaussian yields
\begin{align}
\mathbb{P}(S_N > N/2)
&\approx
\Phi\!\left(
\frac{p - 1/2}{\sqrt{p(1-p)/N_\text{eff}}}
\right),
\label{eq:appendix_cc}
\end{align}
where $\Phi$ is the standard normal CDF. Thus the correlated-vote system is approximated by an independent-vote system with only $N_\text{eff}$ effective voters.

In the paper we use $\text{PP}(N_\text{eff}, p)$ as shorthand for this variance-matched correlated-Condorcet approximation. The experiments observe answer distributions, from which we derive answer-level effective-diversity proxies such as $N_\text{eff}^{\text{ans}}=2^{H(\mathbf{a})}$ and pairwise answer agreement. Eq.~\ref{eq:neff} therefore bridges the answer-level diversity we measure with the correctness-level ceiling we predict.

\subsection{Proposition~\ref{prop:effective-scaling} and proof}
\label{sec:proposition2}
\begin{proposition}[Exact Criterion for Effective Scaling]
\label{prop:effective-scaling}
Within the effective-team model with $\rho_N \ge 0$: if $\rho_N \to \rho_\infty > 0$, then $N_\text{eff}(N) \to 1/\rho_\infty$ (finite ceiling). Linear effective scaling $N_\text{eff}(N)=\Theta(N)$ is possible if and only if $\rho_N = O(1/N)$.
\end{proposition}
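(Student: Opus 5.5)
The plan is to work directly from the Kish identity, Eq.~\ref{eq:neff}, rewritten as
\[
\frac{1}{N_\text{eff}(N)} \;=\; \frac{1}{N} + \frac{N-1}{N}\,\rho_N .
\]
Both parts of Proposition~\ref{prop:effective-scaling} then reduce to elementary limit and order-of-magnitude statements about this expression. Throughout we use $\rho_N \ge 0$, so that $N_\text{eff}(N) \in (0, N]$. This is the only place the nonnegativity hypothesis enters.

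\textbf{Finite ceiling.} Suppose $\rho_N \to \rho_\infty > 0$. Then $1/N \to 0$ and $(N-1)/N \to 1$. Hence $1/N_\text{eff}(N) \to \rho_\infty$, which lies in $(0,\infty)$. Continuity of $x \mapsto 1/x$ at $\rho_\infty$ then gives $N_\text{eff}(N) \to 1/\rho_\infty$. No further estimates are needed.

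\textbf{Linear scaling criterion.} Multiply the identity by $N$ to get
\[
\frac{N}{N_\text{eff}(N)} = 1 + (N-1)\rho_N .
\]
Because $\rho_N \ge 0$, the right-hand side is always at least $1$, so $N_\text{eff}(N) \le N$ automatically. The upper half of $\Theta(N)$ therefore comes for free, and $N_\text{eff} = \Theta(N)$ is equivalent to $N_\text{eff}(N) \ge \kappa N$ for some $\kappa > 0$ and all large $N$. That lower bound is equivalent to $1 + (N-1)\rho_N \le 1/\kappa$ for all large $N$, that is, to $(N-1)\rho_N = O(1)$.

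\emph{($\Leftarrow$)} If $\rho_N \le C/N$, then $(N-1)\rho_N \le C$, and so $N_\text{eff}(N) \ge N/(1+C)$.

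\emph{($\Rightarrow$)} If $N_\text{eff}(N) \ge \kappa N$, then $(N-1)\rho_N \le 1/\kappa - 1$. This gives $\rho_N \le (1/\kappa - 1)/(N-1) \le 2(1/\kappa - 1)/N$ for $N \ge 2$, which is $O(1/N)$.

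The main obstacle is interpretive rather than technical: $\Theta(N)$ has to be read as an eventual two-sided bound, and $O(1/N)$ as an eventual upper bound. Once the conventions are fixed, the only subtlety is noting that the upper bound $N_\text{eff} \le N$ relies on $\rho_N \ge 0$, which the proposition assumes. The corner case $\kappa \ge 1$ needs no special handling, since it forces $\rho_N = 0$ eventually, which is trivially $O(1/N)$.
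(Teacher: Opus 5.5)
Your proof is correct and follows the same route as the paper. The paper also rewrites the Kish identity so that the asymptotics are governed by $N\rho_N$ (equivalently $(N-1)\rho_N$), and reads off both the $1/\rho_\infty$ limit and the equivalence $N_\text{eff}=\Theta(N) \iff \rho_N = O(1/N)$. Your version is more explicit: it notes that $\rho_N \ge 0$ gives the upper half $N_\text{eff}\le N$ of the $\Theta(N)$ bound, and it tracks the constants in both directions.
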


\noindent\emph{Proof.}
Starting from Eq.~\ref{eq:neff}, $N_\text{eff}(N)/N = 1/(1 + \frac{N-1}{N} \cdot N \cdot \rho_N)$. Since $(N-1)/N \to 1$, the asymptotic behavior is determined by $N \cdot \rho_N$. If $\rho_N \to \rho_\infty > 0$, then $N_\text{eff}(N) \to 1/\rho_\infty$. $N_\text{eff}(N)=\Theta(N)$ holds iff $N \cdot \rho_N=O(1)$, that is, $\rho_N = O(1/N)$.

\subsection{Derivation of Proposition~\ref{prop:ringelmann-regimes}}
\label{app:proposition1}

Assume $\rho_N = cN^{-\beta} + o(N^{-\beta})$ with $c>0$. From Eq.~\ref{eq:neff}, the Kish denominator expands to $1 + (N-1)\rho_N = 1 + cN^{1-\beta} + o(N^{1-\beta})$. The proof tracks which term dominates as $N \to \infty$ in each of the three regimes.

\emph{$\beta = 0$ (hard ceiling).} $\rho_N \to c$, so $(N-1)\rho_N \approx cN$ dominates the $1$ and $N_\text{eff}(N) = N/(1+cN+o(N)) \to 1/c$.

\emph{$0 < \beta < 1$ (sublinear).} The $cN^{1-\beta}$ term still dominates the $1$ (since $1-\beta > 0$), giving $N_\text{eff}(N) \sim N/(cN^{1-\beta}) = N^\beta/c$.

\emph{$\beta \ge 1$ (linear).} For $\beta = 1$, both terms are $O(1)$ and $N_\text{eff}(N) \to N/(1+c)$. For $\beta > 1$, $cN^{1-\beta} \to 0$, the $1$ dominates, and $N_\text{eff}(N) \to N$ (correlation falls fast enough that voters are asymptotically independent).

\begin{corollary}[Heterogeneity Shifts the Ceiling]
\label{cor:hetero}
Consider two debate systems with correlation profiles $\rho_N^{(1)} \sim c_1N^{-\beta}$ and $\rho_N^{(2)} \sim c_2N^{-\beta}$ sharing the same exponent $\beta$. If $c_2 < c_1$, then system~2 has strictly larger asymptotic effective team size. In the hard-ceiling regime $\beta=0$, the ceiling rises from $1/c_1$ to $1/c_2$.
\end{corollary}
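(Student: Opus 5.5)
The plan is to treat Corollary~\ref{cor:hetero} as a direct specialization of Proposition~\ref{prop:ringelmann-regimes}, applied separately to the two systems and then compared regime by regime. Both profiles share the exponent $\beta$, so they fall into the same regime. The case split is therefore the same three-way split used in App.~\ref{app:proposition1}: $\beta=0$, $0<\beta<1$, and $\beta\ge 1$. In each case I will compare the asymptotic forms of $N_\text{eff}^{(1)}$ and $N_\text{eff}^{(2)}$ obtained from the Kish expression $N_\text{eff}(N)=N/(1+(N-1)\rho_N)$ in Eq.~\ref{eq:neff}.

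\emph{Case $\beta=0$.} The derivation of Proposition~\ref{prop:ringelmann-regimes} gives $N_\text{eff}^{(i)}\to 1/c_i$. Since $c_2<c_1$, we have $1/c_2>1/c_1$. This is both the strict inequality of asymptotic sizes and the stated rise of the ceiling from $1/c_1$ to $1/c_2$.

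\emph{Case $0<\beta<1$.} Here $N_\text{eff}^{(i)}\sim N^\beta/c_i$. Hence
\[
N_\text{eff}^{(2)}/N_\text{eff}^{(1)}\to c_1/c_2>1,
\]
so system~2 is eventually strictly larger, by a constant factor.

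\emph{Case $\beta=1$.} Here $N_\text{eff}^{(i)}\sim N/(1+c_i)$, and the ratio tends to $(1+c_1)/(1+c_2)>1$.

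The main obstacle is the meaning of ``strictly larger asymptotic effective team size.'' In the cases $\beta<1$ and $\beta=1$ this is unambiguous, either as limits or as leading-order asymptotics, and I will state it precisely as $\lim N_\text{eff}^{(2)}/N_\text{eff}^{(1)}>1$. For $\beta>1$, however, both systems satisfy $N_\text{eff}\sim N$, and the leading-order comparison degenerates.

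For this case I will go to the next order. Writing
\[
N/N_\text{eff}^{(i)}=1+c_iN^{1-\beta}+o(N^{1-\beta}),
\]
I get $N-N_\text{eff}^{(i)}\sim c_iN^{2-\beta}$. So system~2 loses strictly less, and $N_\text{eff}^{(2)}>N_\text{eff}^{(1)}$ for all sufficiently large $N$. This requires the $o(\cdot)$ remainders to be genuinely smaller than $(c_1-c_2)N^{1-\beta}$, which holds because each $\rho_N^{(i)}=c_iN^{-\beta}(1+o(1))$.

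In fact this next-order comparison handles every case uniformly. Using the monotonicity of $x\mapsto N/(1+(N-1)x)$ and the fact that eventually
\[
\rho_N^{(2)}<\rho_N^{(1)}\quad\text{(since } c_2<c_1\text{)},
\]
it follows that $N_\text{eff}^{(2)}(N)>N_\text{eff}^{(1)}(N)$ for all large $N$ and every $\beta\ge0$. I will present this as the core argument, and give the regime-wise limits as the quantitative refinement, with the $\beta=0$ ceiling statement singled out.
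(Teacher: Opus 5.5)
Your proposal is correct. On $\beta<1$ it coincides with the paper's proof, which is a single line reading off the leading-order asymptotics of Proposition~\ref{prop:ringelmann-regimes} ($N_\text{eff}\sim N^\beta/c$ for $0<\beta<1$, and $N_\text{eff}\to 1/c$ at $\beta=0$). That proof says nothing about $\beta\ge 1$.

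Your core argument is genuinely different. From $\rho_N^{(2)}/\rho_N^{(1)}\to c_2/c_1<1$ you get $\rho_N^{(2)}<\rho_N^{(1)}$ eventually. Combined with strict monotonicity of $x\mapsto N/(1+(N-1)x)$, this gives $N_\text{eff}^{(2)}(N)>N_\text{eff}^{(1)}(N)$ for all large $N$ and every $\beta\ge 0$, with no expansion needed. This buys two things the paper's route does not:
\begin{itemize}
\item It covers $\beta\ge 1$, including the $\beta=1$ ratio $(1+c_1)/(1+c_2)$.
\item It makes explicit that for $\beta>1$ the phrase ``strictly larger asymptotic effective team size'' cannot mean a strict leading-order gap, since the ratio tends to $1$. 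It can only mean eventual pointwise dominance, or the next-order deficit $N-N_\text{eff}^{(i)}\sim c_iN^{2-\beta}$ that you compute.
\end{itemize}
The paper's one-liner buys brevity. It also suffices for how the corollary is actually used, since all heterogeneous fits sit at $\beta=0$.

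You are right to keep the explicit limits for $\beta=0$. Pointwise dominance alone would not exclude equal limits, so the ceiling claim (from $1/c_1$ to $1/c_2$) still needs Proposition~\ref{prop:ringelmann-regimes}, exactly as in the paper.

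One small wording fix: it is the monotonicity argument, not the next-order expansion, that treats all regimes uniformly. The expansion is only needed if you want to quantify the size of the gap when $\beta>1$.
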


\noindent\emph{Proof.}
Immediate from $N_\text{eff} \sim (1/c)N^\beta$ for $\beta < 1$ and $N_\text{eff} \to 1/c$ for $\beta = 0$. \hfill$\square$

\subsection{Proposition~\ref{prop:generic-ceiling}: hard ceiling is generic}

\begin{proposition}[Hard Ceiling Is Generic]
\label{prop:generic-ceiling}
Within the mean-field model (Eq.~\ref{eq:rho}), if the conformity rate $\alpha \in (0,1)$ and initial correlation $\rho^{(0)} \in [0,1)$ do not depend on team size $N$, then post-debate correlation $\rho_\text{mf}$ is constant in $N$, giving $\beta = 0$ (hard-ceiling regime) with ceiling $N_\text{eff}^* = 1/\rho_\text{mf}$.

Let $\alpha_N$ and $\rho^{(0)}_N$ denote $\alpha$ and $\rho^{(0)}$ at team size $N$. Escaping the hard ceiling ($\beta > 0$) requires, for fixed $\tau$, that \emph{both} $\alpha_N \to 0$ and $\rho^{(0)}_N \to 0$ as $N \to \infty$. Neither alone suffices: if $\alpha_N \to 0$ but $\rho^{(0)}_N \to \rho_\infty > 0$, then $\rho_\text{mf}^{(\tau)} \to \rho_\infty > 0$. If $\rho^{(0)}_N \to 0$ but $\alpha$ stays positive, then $\rho_\text{mf}^{(\tau)} \to 1 - (1{-}\alpha)^\tau > 0$.
\end{proposition}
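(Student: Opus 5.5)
The plan is to work directly from the closed form in Eq.~\ref{eq:rho}, $\rho_{\text{mf}}^{(\tau)} = 1-(1-\rho^{(0)})(1-\alpha)^\tau$, treating $\tau$ as fixed throughout, and then to read off the regime from Proposition~\ref{prop:effective-scaling} and Proposition~\ref{prop:ringelmann-regimes}.

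\emph{Constant parameters.} If $\alpha$ and $\rho^{(0)}$ do not depend on $N$, then the right-hand side of Eq.~\ref{eq:rho} contains no $N$, so $\rho_N \equiv \rho_{\text{mf}}$ is constant. This is the power-law profile with $c=\rho_{\text{mf}}$ and $\beta=0$. To apply the ceiling formula we need $c>0$. This holds because $\rho^{(0)}<1$ and $\alpha\in(0,1)$ give $(1-\rho^{(0)})(1-\alpha)^\tau \le (1-\alpha)^\tau < 1$ for $\tau\ge1$, and for $\tau=0$ we need $\rho^{(0)}>0$ (I will note this edge case). Proposition~\ref{prop:effective-scaling}, or the $\beta=0$ case of Proposition~\ref{prop:ringelmann-regimes}, then gives $N_\text{eff}\to 1/\rho_{\text{mf}}$. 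For finite $N$, $N/(1+(N-1)\rho_{\text{mf}})$ increases monotonically to this ceiling.

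\emph{Necessity of both limits.} I will argue by contraposition. Escaping the ceiling means $\beta>0$ in $\rho_N=cN^{-\beta}$, which forces $\rho_{\text{mf},N}^{(\tau)}\to0$. Write $1-\rho_{\text{mf},N}^{(\tau)} = (1-\rho^{(0)}_N)(1-\alpha_N)^\tau$. For this to tend to $1$, both factors must tend to $1$, since each lies in $[0,1]$. Hence $\rho^{(0)}_N\to0$, and $(1-\alpha_N)^\tau\to1$, which for fixed $\tau\ge1$ gives $\alpha_N\to0$.

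\emph{Neither alone suffices.} I will substitute the two limits directly. If $\alpha_N\to0$ and $\rho^{(0)}_N\to\rho_\infty>0$, then $\rho_{\text{mf}}\to 1-(1-\rho_\infty)=\rho_\infty>0$. If $\rho^{(0)}_N\to0$ and $\alpha$ is fixed and positive (or more generally bounded below), then $\rho_{\text{mf}}\to1-(1-\alpha)^\tau>0$. In both cases Proposition~\ref{prop:effective-scaling} gives a finite ceiling, so $\beta=0$.

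\emph{Main obstacle.} The delicate point is the interpretation of ``escaping requires'' rather than the algebra. Both limits tending to zero is necessary but not sufficient for $\beta>0$. Sufficiency would additionally require power-law decay rates: by linearizing, $\rho_{\text{mf}}\approx \rho^{(0)}_N+\tau\alpha_N$, so we would need $\rho^{(0)}_N,\alpha_N = O(N^{-\beta})$. I will state this as a remark rather than claim it, and keep the proposition strictly as a necessary condition. I will also remark that the argument uses only Eq.~\ref{eq:rho} with $\tau$ fixed, so letting $\tau$ grow with $N$ falls outside its scope.
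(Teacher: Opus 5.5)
Your proposal is correct and takes the same route as the paper: direct substitution into Eq.~\ref{eq:rho} with $\tau$ fixed. Your observation that $(1-\rho^{(0)}_N)(1-\alpha_N)^\tau \to 1$ forces each factor in $[0,1]$ to tend to $1$ is exactly the justification behind the paper's one-line claim that ``only when both vanish can $\rho_\text{mf}^{(\tau)} \to 0$,'' and your extra remarks (the $\tau=0$, $\rho^{(0)}=0$ edge case with no ceiling, and necessity-but-not-sufficiency for a power-law $\beta>0$) are correct refinements the paper leaves implicit.
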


\noindent\emph{Proof.}
When $\alpha$ and $\rho^{(0)}$ are $N$-independent, $\rho_\text{mf}^{(\tau)}$ is constant in $N$, so $\beta=0$. The two failure modes stated in the proposition follow by direct substitution into Eq.~\ref{eq:rho}, and only when both vanish can $\rho_\text{mf}^{(\tau)} \to 0$. \hfill$\square$

\subsection{Proof of Thm.~\ref{thm:comm-density} (mean-field convergence)}
\label{app:thm-proof}
We prove the recurrence under assumptions (A1)--(A3). Let $q^{(\tau)}$ denote the modal share and $\varepsilon^{(\tau)} = 1-q^{(\tau)}$ the non-modal mass.

\emph{Step~1: Recurrence for non-modal mass.} Under (A1), a non-modal agent samples $k$ peers without replacement. Let $M$ denote the number of modal peers observed. For fixed $k$ as $N \to \infty$, the hypergeometric sampling distribution converges to $\text{Bin}(k, q^{(\tau)})$ with deviation $O(k/N)$, and we use the binomial form throughout. Conditional on $M$, the agent retains its non-modal answer with probability $(1-\alpha_1)^M$ (each modal peer independently fails to trigger adoption). By the binomial PGF, $\mathbb{E}[(1{-}\alpha_1)^M] = \bigl(1 - \alpha_1 q^{(\tau)}\bigr)^k$. Under (A2), modal agents do not switch. The non-modal mass therefore evolves as
\begin{equation}
\varepsilon^{(\tau+1)}
= \varepsilon^{(\tau)}\,(1-\alpha_1 + \alpha_1\,\varepsilon^{(\tau)})^k.
\label{eq:eps_exact}
\end{equation}
Under (A3), the mode label is consistent across rounds.

\emph{Step~2: High-agreement expansion.} For $\varepsilon^{(\tau)} \ll 1$, Taylor-expand $(1{-}\alpha_1 + \alpha_1\varepsilon)^k = (1{-}\alpha_1)^k\bigl(1 + \tfrac{k\alpha_1}{1{-}\alpha_1}\varepsilon + O(\varepsilon^2)\bigr)$, so $\varepsilon^{(\tau+1)} = \varepsilon^{(\tau)}(1{-}\alpha_1)^k + O((\varepsilon^{(\tau)})^2)$.

\emph{Step~3: Convert non-modal mass to pairwise disagreement.} For two agents drawn uniformly, $d = 1 - \sum_a p_a^2 = 1 - q^2 - \sum_{a \neq \text{mode}} p_a^2$. The non-modal term satisfies $0 \le \sum_{a\neq\text{mode}} p_a^2 \le \varepsilon^2$, so $d = 2\varepsilon + O(\varepsilon^2)$ and $\varepsilon = d/2 + O(d^2)$. Substituting: $d^{(\tau+1)} = d^{(\tau)}(1{-}\alpha_1)^k + O((d^{(\tau)})^2)$.

\emph{Step~4: Iteration.} Let $r := (1{-}\alpha_1)^k \in (0,1)$. By induction, $d^{(\tau)} = r^\tau d^{(0)}(1 + O(d^{(0)}))$. Therefore $\rho^{(\tau)} \approx 1 - (1{-}\rho^{(0)})(1{-}\alpha_1)^{k\tau}$ for $d^{(0)} = 1{-}\rho^{(0)} \ll 1$, which is Eq.~\ref{eq:rho-ext}. \hfill$\square$

\subsubsection*{Remark on multi-class extensions and modal-switching robustness}
For multi-class answer spaces, (A1)--(A2) coarsen the dynamics to $\{\text{modal}, \text{non-modal}\}$. Non-modal-to-non-modal transitions preserve total non-modal mass $\varepsilon$ but redistribute minority answers. In convergent dynamics this typically concentrates the minority distribution, which decreases pairwise disagreement $d = 1 - \sum_a p_a^2$. The estimated per-peer learning rate $\alpha_l$ on GSM-Hard debate ranges from $0.014$ at $N{=}30$ to $0.062$ at $N{=}5$ (Table~\ref{tab:transitions}, App.~\ref{app:two_rate}). The $\alpha_1$ in this proof is the \emph{effective} rate absorbing both modal-copy and minority-redistribution effects.

Empirically, (A2) is also approximate: modal agents occasionally adopt non-modal peer answers at a small per-peer rate. We denote this rate $\alpha_{s,1}$, paralleling the per-peer learning rate $\alpha_1$. The notation distinguishes it from the agent-level sycophancy rate $\alpha_s$ of App.~\ref{app:two_rate}. Including $\alpha_{s,1} > 0$ in the recurrence of Step~1 yields
\[
\varepsilon^{(\tau+1)} \approx \varepsilon^{(\tau)}\bigl[(1-\alpha_1)^k + k\alpha_{s,1}\bigr] + O\bigl((\varepsilon^{(\tau)})^2\bigr),
\]
which preserves the geometric contraction and the $k\tau$ product structure as long as $\alpha_{s,1} \ll \alpha_1$. Empirically the unconditional per-round modal-switching rate is $\le 0.025$ on every evaluated task, and dividing by the typical peer count gives $\alpha_{s,1} \lesssim 0.003$, roughly an order of magnitude below $\alpha_1$, so the deviation from Eq.~\ref{eq:contraction} is negligible.

\subsubsection*{Scope: when the theorem applies}
The theorem describes contraction under iterated peer-influenced revision. It applies directly to debate (including sparse-$k$ and heterogeneous variants, Proposition~\ref{prop:hetero}). For conditions without peer exposure (self-correction, noise), $\alpha_1 \to 0$ and the theorem predicts $\rho^{(\tau)} \approx \rho^{(0)}$ (no systematic contraction). The empirical match between these placebo conditions (\S~\ref{sec:results}) is consistent with this null prediction. For self-consistency ($\tau = 0$), the theorem gives $\rho^{(0)} = \rho^{(0)}$ directly.

\subsubsection*{Two scales, one prediction}
Consequence~(iii) of the theorem lives on the \emph{pairwise agreement} scale $\rho$, the theorem's native variable. On free-form tasks the entropy-derived scale ($N_\text{eff}^{\text{ans}} = 2^H$) inflates with $N$ because wrong answers spread across many distinct numeric strings, so the entropy-scale exponent $\beta_\text{ent}$ is not the quantity the theorem predicts. Empirically, $\beta_\text{agr} \approx 0.09$--$0.17$ at fixed $k$ on GSM-Hard (close to the theorem's $\beta_\text{agr} \approx 0$ prediction), while $\beta_\text{ent} \approx 0.25$--$0.39$ on the same data. On MCQA tasks the two scales coincide (App.~\ref{app:k_sweep}).

\subsection{Block-exchangeable extension for heterogeneous teams}
\label{app:hetero_theory}

Thm.~\ref{thm:comm-density} assumes exchangeable agents. Heterogeneous teams violate exchangeability: within-family pairs (such as Qwen--Qwen) agree more than cross-family pairs (such as Qwen--Llama). We extend the framework to a block-exchangeable model.

\begin{proposition}[Heterogeneous Effective Correlation]
\label{prop:hetero}
Consider $N$ agents partitioned into $G$ groups of sizes $n_1, \ldots, n_G$ ($\sum_g n_g = N$) with $\mathrm{Corr}(X_i, X_j) = \rho_W$ within groups and $\rho_B$ between groups ($\rho_W \ge \rho_B \ge 0$). Then
\[
N_\textup{eff} = \frac{N}{1 + (\bar n - 1)\rho_W + (N - \bar n)\rho_B}, \qquad \bar n := \frac{1}{N}\sum_{g=1}^G n_g^2.
\]
Equivalently, the team behaves as a homogeneous team with effective correlation $\rho_\textup{eff} = [(\bar n {-} 1)\rho_W + (N {-} \bar n)\rho_B]/(N{-}1)$.
\end{proposition}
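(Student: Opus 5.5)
We follow the variance-matching route of App.~\ref{app:variance-matching}, now with block structure instead of full exchangeability. Let $X_i\in\{0,1\}$ be the correctness indicators, each with mean $p$ and variance $p(1-p)$. Within a group, pairwise covariance is $\rho_W\,p(1-p)$; across groups it is $\rho_B\,p(1-p)$. We compute $\mathrm{Var}(\bar X)$ by splitting the double sum $\sum_{i\neq j}\mathrm{Cov}(X_i,X_j)$ into within-group and between-group ordered pairs. We then match the result to $p(1-p)/N_\text{eff}$, exactly as in Eq.~\ref{eq:appendix_varmean}. The assumption of a common marginal $p$ across groups is what lets $p(1-p)$ factor out. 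We state it explicitly; otherwise the result holds only as the same variance-matching approximation noted after Eq.~\ref{eq:appendix_varmean}.

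The key step is counting ordered pairs. Group $g$ contributes $n_g(n_g-1)$ ordered within-group pairs, so the total is $\sum_g n_g^2 - N = N(\bar n-1)$, using the definition $\bar n=\frac1N\sum_g n_g^2$. There are $N(N-1)$ ordered pairs in all, so the between-group count is $N(N-1)-N(\bar n-1)=N(N-\bar n)$. This gives
\[
\mathrm{Var}(\bar X)=\frac{p(1-p)}{N^2}\Bigl[N+N(\bar n-1)\rho_W+N(N-\bar n)\rho_B\Bigr]
=\frac{p(1-p)}{N}\Bigl[1+(\bar n-1)\rho_W+(N-\bar n)\rho_B\Bigr].
\]
Setting this equal to $p(1-p)/N_\text{eff}$ yields the stated $N_\text{eff}$. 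For the equivalent homogeneous form, we equate the bracket to $1+(N-1)\rho_\text{eff}$ and solve for $\rho_\text{eff}$. As a consistency check, $G=1$ gives $\bar n=N$ and recovers Eq.~\ref{eq:neff}. Similarly, $\rho_W=\rho_B$ collapses back to the homogeneous case, since the coefficients sum to $N-1$.

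There is little real obstacle here; the only care needed is the ordered-versus-unordered pair bookkeeping and the identity $\sum_g n_g(n_g-1)=N(\bar n-1)$. We would also remark on the setting the paper uses: equal proportions $n_g=N/G$ give $\bar n=N/G$. In that case $\rho_\text{eff}\to\rho_W/G+(1-1/G)\rho_B$ as $N\to\infty$, which is below $\rho_W$ whenever $\rho_B<\rho_W$. The hard ceiling therefore rises, consistent with Cor.~\ref{cor:hetero} and with the lower $c$ observed for mixed teams.
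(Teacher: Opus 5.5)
Your proof is correct and follows the same route as the paper: you split $\sum_{i\neq j}\mathrm{Cov}(X_i,X_j)$ into $N(\bar n-1)$ within-group and $N(N-\bar n)$ between-group ordered pairs, then variance-match against $p(1-p)/N_\text{eff}$. Stating the common-marginal-$p$ assumption explicitly is a worthwhile addition, since the paper's proof uses a common $\sigma^2$ without saying so.
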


\noindent\emph{Proof.}
$\mathrm{Var}(\bar X) = (1/N^2)[\sum_i \mathrm{Var}(X_i) + \sum_{i \ne j} \mathrm{Cov}(X_i,X_j)]$. The covariance sum decomposes: $\sum_g n_g(n_g{-}1)\rho_W\sigma^2 + [N(N{-}1) - \sum_g n_g(n_g{-}1)]\rho_B\sigma^2$. Using $\sum_g n_g(n_g{-}1) = N(\bar n {-} 1)$, variance-matching against $(p(1{-}p)/N_\textup{eff})$ yields the result. \hfill$\square$

\smallskip\noindent\textbf{Consequences.}
\emph{(i)}~When $\rho_B = \rho_W$, the formula collapses to homogeneous $N_\textup{eff} = N/(1{+}(N{-}1)\rho_W)$. \emph{(ii)}~$\rho_B < \rho_W$ implies $\rho_\textup{eff} < \rho_W$, so the ceiling $1/\rho_\textup{eff}$ rises: cross-family independence lifts the effective team size. \emph{(iii)}~If $\rho_W, \rho_B$ are $N$-independent and groups grow proportionally ($\bar n / N$ constant), then $\rho_\textup{eff}$ is $N$-independent, giving $\beta = 0$.

\smallskip\noindent\textbf{Empirical validation.}
On MMLU-Hard heterogeneous teams (Qwen$+$Llama$+$Ministral, equal-sized groups) at $N{=}30$, within-family pairwise agreement gives $\rho_W = 0.879$ and between-family agreement gives $\rho_B = 0.834$. The predicted effective correlation under equal groups is $\rho_\textup{eff}^{\textup{pred}} = \tfrac{1}{3}\rho_W + \tfrac{2}{3}\rho_B = 0.849$, matching the observed overall $\rho_\textup{eff} = 0.843$ within $0.006$.\footnote{$\rho_W$, $\rho_B$, and the predicted $\rho_\textup{eff}$ are on the pairwise-agreement scale, consistent with the variance-matching derivation. Table~\ref{tab:hetero} reports $\rho_\textup{eff}^{\textup{ans}}$ on the entropy-inverted scale (App.~\ref{app:cost}), which is a different metric.} This confirms that heterogeneous teams lower $c$ (from ${\approx}0.85$ homogeneous to ${\approx}0.54$) because cross-family independence reduces the effective correlation, as Proposition~\ref{prop:hetero} predicts.

\section{Relation to classical Ringelmann effects}
\label{app:classical_ringelmann}

We use the Ringelmann label because the same \emph{process-loss structure} appears in our LLM experiments: actual productivity falls below matched potential productivity as group size grows \citep[recovered and translated by][]{kravitz1986ringelmann}. This is a structural analogy, not a claim that LLM agents reproduce human social psychology or social loafing.
\begin{figure}[H]
\centering
\includegraphics[width=\textwidth]{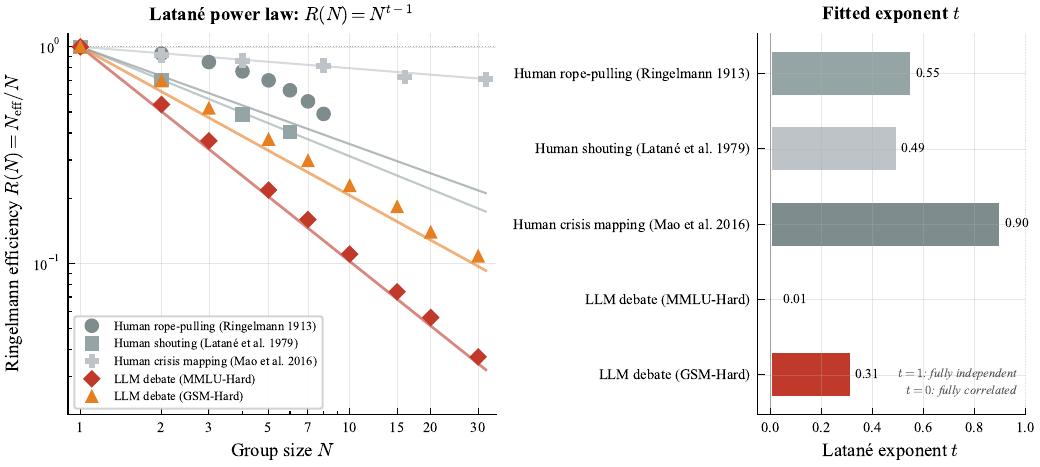}
\caption{\textbf{Structural comparison to classical Ringelmann data
via the Latan\'{e} power law.} We fit $R(N) = N^{t-1}$ to LLM debate and to three published human group datasets, on the same axes. The estimated exponent $t$ quantifies the decline rate: human groups have $t \in [0.49, 0.90]$ (gradual decline), LLM debate has $t \le 0.13$ (much steeper). The same form is used for both populations so the gap in estimated $t$ is structural, not an artifact of model choice. Latan\'{e}'s power law matches the Ringelmann form's asymptote at large $N$ but is not a strict special case (see this appendix's introduction for the relationship).}
\label{fig:latane}
\end{figure}

\begin{figure}[H]
\centering
\includegraphics[width=\textwidth]{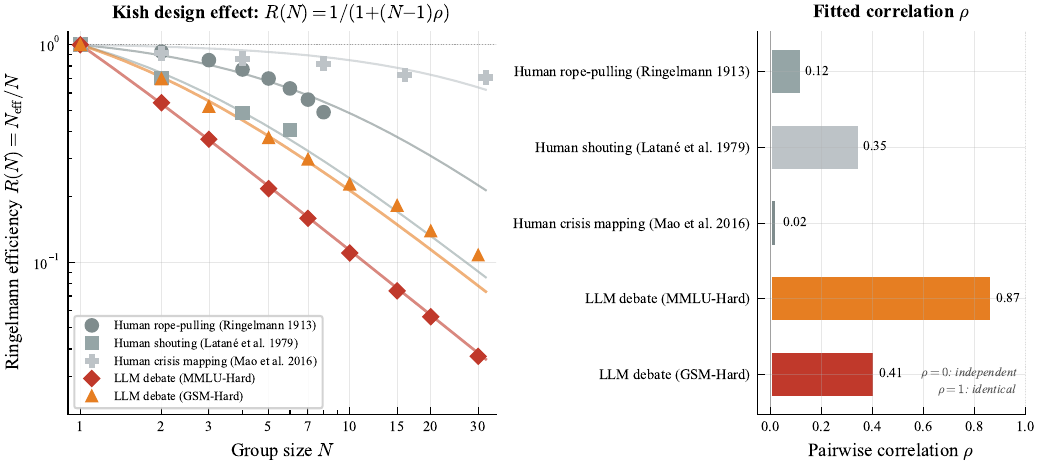}
\caption{\textbf{Structural comparison via the Kish design effect.}
Same datasets as Figure~\ref{fig:latane}, estimated with the constant-$\rho$ form $R(N) = 1/(1{+}(N{-}1)\rho)$, the $\beta = 0$ special case of the Ringelmann form used in the main text. The estimated pairwise correlation $\rho$ quantifies redundancy: human groups have $\rho \in [0.02, 0.35]$ (largely independent voters), LLM debate has $\rho \in [0.56, 0.85]$ (highly correlated). The Kish form is sufficient for the cross-population comparison shown here because both LLM debate and the human studies sit close to the hard-ceiling regime ($\beta \approx 0$). The main analyses use the full two-parameter form because $\beta$ becomes the regime classifier on conditions like self-correction and noise where $\beta > 0$.}
\label{fig:kish}
\end{figure}

Figures~\ref{fig:latane} and~\ref{fig:kish} fit two single-parameter forms used in the human Ringelmann literature: the Latan\'{e} power law $R(N) = N^{t-1}$ and the Kish design effect $R(N) = 1/(1{+}(N{-}1)\rho)$. Kish is the $\beta = 0$ special case of the two-parameter Ringelmann form $R(N) = 1/(1{+}c(N{-}1)N^{-\beta})$ used in the main analyses. Latan\'{e} is not a strict special case but matches the Ringelmann asymptote at large $N$. We use these single-parameter forms here for cross-population comparison on a common axis, but the main paper fits the full two-parameter form because $\beta$ is the regime classifier (hard-ceiling, sublinear, or linear). Forcing $\beta = 0$ collapses the regime distinction and fails on sublinear cells (Const-$\rho$ Kish: $8.3\%$ mean held-out relative error vs.\ $5.5\%$ for the two-parameter Ringelmann form, Table~\ref{tab:held_out_audit}).

For reference, we fit the same declining-efficiency families used in the human literature to our LLM data and to three published human group-productivity measurements. Fitting to Ringelmann's \citeyearpar{ringelmann1913} rope-pulling data ($N{=}1$--$8$) yields $t = 0.55$ ($c = 0.12$). Fitting to the Latan\'{e} et al.\ \citeyearpar{latane1979} shouting data ($N{=}1$--$6$) yields $t = 0.49$ ($c = 0.35$). We also include the \citet{mao2016experimental} crisis-mapping study ($N{=}1$--$32$). For LLM debate, the estimated exponents are much smaller: $t = 0.01$ on MMLU-Hard (implied constant $\rho = 0.85$) and $t = 0.13$ on GSM-Hard (implied $\rho = 0.56$, where this single-parameter Kish fit differs slightly from the two-parameter generalized Ringelmann $c = 0.57$ in the main text). Figure~\ref{fig:latane} plots all five datasets on the Latan\'{e} axis and Figure~\ref{fig:kish} on the Kish axis. Even the mildest human Ringelmann effect retains far more per-person efficiency than LLM debate at matched group sizes.

\section{Experimental details}
\label{app:experimental_details}

The experimental set-up is provided in \S\ref{sec:setup} while in this section additional clarification details regarding the experimentation are available.

\subsection{Model scope}
\label{sec:modelscope}

\textbf{Justifying the model scope of \S\ref{sec:setup}.} Table~\ref{tab:evidence_strength} makes the role of each experiment explicit. The sweep is layered, not a flat cross-product. A single core identification (Qwen2.5-7B, full $N{=}2$--$30$ grid, three conditions, full item budgets of 724 / 1017 / 198) carries the main scaling-law fit; two cross-family replications at matched scope (Llama-3.1-8B and Ministral-8B) test that the form is not Qwen-specific; and four boundary experiments each probe one framework prediction at narrower but sufficient coverage having $4\times$ within-family scale (Qwen2.5-32B), architectural diversity (heterogeneous teams), reasoning paradigm (Qwen3-8B in thinking mode), and closed-source paradigm (Gemini Flash-Lite). The frontier solo evaluation (Gemini 3.1 Pro at $N{=}1$ only) is included as a witness for the predicted-saturated regime rather than as part of the law's identification, since the framework predicts $c \to 1$ at its solo-accuracy level (\S\ref{sec:dynamics}). Item budgets and team-size grids are matched to the question each cell poses: full grids where extrapolation is audited, smaller item subsets where the test is qualitative (regime preservation, ceiling shift), and single-$N$ where the framework already predicts a degenerate fit. The 44 $(c, \beta)$ estimates that result therefore allocate marginal cost in proportion to the strength of the claim each cell supports.

\begin{table}[H]
\caption{Strength of evidence per experiment.
Core sweeps (full $N$ range, full item budget) carry the main claim. Replications, boundary checks, and the frontier solo evaluation play supporting roles. Items column: per-task item counts (M = MMLU-Hard, G = GSM-Hard, Q = GPQA Diamond). Cond.: d = debate, s = self-correction, n = noise placebo.}
\label{tab:evidence_strength}
\centering
\small
\begin{tabular}{llllll}
\toprule
Experiment & Tasks & Items & $N$ & Cond. & Role \\
\midrule
Qwen2.5-7B & MGQ & 724/1017/198 & 2--30 & d/s/n & Open-weight, full sweep \\
Llama-3.1-8B & MGQ & 724/1017/198 & 2--30 & d/s/n & Open-weight, full sweep \\
Ministral-8B & MGQ & 724/1017/198 & 2--30 & d/s/n & Open-weight, full sweep \\
Qwen2.5-32B & MGQ & 200/200/198 & 2--30 & d/s/n & $4\times$ scale \\
Heterogeneous & MGQ & 200/200/200 & 3--30 & d & Diversity \\
Qwen3-8B-think & M & 724 & 1--5 & d/s/n & Thinking \\
Gemini Flash-Lite & MGQ & 200/198/198 & 2--30 & d/s/n & Closed-source \\
Gemini 3.1 Pro & MGQ & 200/211/198 & 1 & ensemble & Frontier solo \\
\bottomrule
\end{tabular}
\end{table}

\subsection{Prompt templates}
\label{sec:prompts}

All agents receive structured prompts that request a final answer, confidence score, and brief rationale. The format is identical across conditions, and only the peer-response block varies.

\textbf{Initial solve (MCQA):}
\begin{verbatim}
Answer the multiple-choice question. Be concise.

Question: {question}
{choices}

Format EXACTLY (answer and confidence FIRST, then reasoning):
FINAL: <A, B, C, or D>
CONF: <0-100>
RATIONALE: <brief reasoning, max 4 lines>
\end{verbatim}
For math tasks, ``\texttt{FINAL: <integer>}'' replaces the letter constraint, and the rest of the template is analogous.

\textbf{Revision (debate / self-correction / noise):}
\begin{verbatim}
You already solved this problem. Now consider what
other team members said. Carefully evaluate their
reasoning -- they may be wrong. Update or keep your answer.

Question: {question}
{choices}

Your previous answer:
{own_answer}

Peer responses:
{peers}

Format EXACTLY (answer and confidence FIRST, then reasoning):
FINAL: <A, B, C, or D>
CONF: <0-100>
RATIONALE: <brief reasoning, note any disagreements>
\end{verbatim}

The three communication conditions differ only in how \texttt{\{peers\}} is populated:
\begin{itemize}
\item \textbf{Debate} (\texttt{comm\_true}): true peer snippets from the same item.
\item \textbf{Self-correction} (\texttt{comm\_self}): the agent's own previous output,
formatted identically to a peer snippet.
\item \textbf{Noise} (\texttt{comm\_random}): randomly shuffled snippets drawn from
other agents' responses to \emph{different} items.
\end{itemize}

Each peer snippet is formatted as ``\texttt{Agent \{i\}: FINAL: \{answer\} CONF: \{conf\} TEXT: \{reasoning\}}'' and truncated to 120 tokens to keep prompts within context limits at large $N$.

\subsection{Generation parameters}

All experiments use the same generation configuration: temperature~$= 0.4$, top-$p = 0.95$, max new tokens~$= 1024$. Seeded random number generators (keyed on item ID, run ID, and round number) ensure deterministic peer selection and tiebreaking across runs.

\subsection{Answer extraction}

Model outputs are parsed with a priority chain. (1)~\texttt{FINAL:} field (highest priority). (2)~\texttt{ANSWER:} field. (3)~\verb|\boxed{}| expressions. (4)~last matching letter (A--D) for MCQA or last integer for math, as fallback. Confidence is extracted from the \texttt{CONF:} field, clamped to $[0, 100]$. For thinking models, \texttt{<think>...</think>} tags are stripped before parsing. Invalid-rate monitoring flags conditions where $>5\%$ of outputs fail to parse.

\subsection{Team aggregation}

The team answer is determined by plurality vote over valid parsed answers. Ties are broken by a deterministic pseudorandom choice seeded on \texttt{hash(item\_id, run\_id)}, ensuring reproducibility. Confidence scores are \emph{not} used in aggregation.

\subsection{Hardware}

Homogeneous experiments run on NVIDIA H100 GPUs. Heterogeneous experiments run on A100 GPUs. Some Llama-3.1-8B sweeps complete on L40S GPUs. The full research project consumed approximately $2{,}000$ GPU-hours across H100, A100, and L40S devices, including preliminary and superseded configurations not reported in the main results. We report cost in prompt and output token counts rather than absolute joules, which makes the comparison hardware-independent and removes batch-utilization artifacts (a small microbatch underutilizing an H100 inflates J/token by a factor unrelated to the scaling claim).

\subsection{Statistical methodology}
\label{app:statistics}

\textbf{Bootstrap confidence intervals.} All reported $95\%$ intervals on estimated $(c, \beta)$ parameters, $N_\text{eff}$ values, and Steiner decomposition components use item-level non-parametric bootstrap with at least $300$ iterations (more where bootstrap precision matters: $1{,}000$ for extrapolation envelopes, $5{,}000$ for paper-number sanity checks). For each iteration we resample items with replacement, recompute the per-cell $N_\text{eff}/N$ profile, refit the Ringelmann curve with regime-constrained bounds ($c \in [0, 5]$, $\beta \in [0, 1]$), and collect the percentile interval. The bootstrap uses a tighter upper bound on $c$ than the point estimation in \S~\ref{sec:setup} ($c \in [0, 20]$) for optimizer stability under resampling. All observed estimates are $c \le 1$, so the tighter bound is non-binding in practice. Item-level resampling preserves the within-item correlation structure that drives $\rho$, which would be destroyed by agent-level resampling.

\textbf{Parameter estimation.} The Ringelmann form $R(N) = 1/(1 + c(N{-}1)N^{-\beta})$ is estimated by non-linear least squares (\texttt{scipy.optimize.curve\_fit}) on the observed $N_\text{eff}/N$ values across the available team sizes, with bounds matching the regime classification of \S~\ref{sec:framework}. The held-out audit (App.~\ref{app:cross_model}) uses the identical bounds.

\textbf{Significance tests.} $p$-values reported in the Steiner decomposition table (App.~\ref{app:corr_level}) come from a paired item-level bootstrap of the AP $-$ PP$_\text{rev}$ contrast: positive bootstrap mass below zero is the two-sided $p$-value. No multiple-comparison correction is applied, and the reported tests are pre-registered contrasts (debate vs.\ matched control on three tasks).

\subsection{Scope and limitations}
\label{app:scope}

\textbf{What was evaluated.} The empirical sweep spans four open-weight model families (Qwen2.5-7B/32B, Llama-3.1-8B, Ministral-8B, Qwen3-8B thinking), a closed-source paradigm point (Gemini Flash-Lite), and a frontier solo evaluation (Gemini 3.1 Pro at $N{=}1$ on all three tasks), three tasks (MMLU-Hard, GSM-Hard, GPQA), team sizes $N \in \{1, 2, 3, 5, 7, 10, 15, 20, 30\}$, three communication conditions (debate, self-correction, noise placebo) plus the self-consistency baseline, homogeneous and heterogeneous compositions, and a communication-density ablation $k \in \{1, 2, 4, 9, 29\}$, $\tau \in \{0,\ldots,5\}$ (App.~\ref{app:k_sweep}).

\textbf{Scope by design.} The empirical sweep is intentionally restricted to homogeneous, role-symmetric, fully-connected interaction so that the conformity signal can be isolated under controlled conditions. Asymmetric roles, external tools, and aggregator pipelines all introduce additional mechanisms (verifier feedback, tool-grounded evidence, asymmetric information flow) that would confound the inter-agent correlation we measure. The following configurations are testable extensions of the framework rather than gaps in its empirical claims:
\begin{itemize}[leftmargin=1.4em, itemsep=0pt, topsep=2pt]
\item \emph{Role-asymmetric debate} (judges, critics, moderators). Production systems often use a single underlying LLM playing multiple roles, where the conformity dynamics may persist because role-conditioning does not decompose the model.
\item \emph{Tool-augmented agents} (web search, code execution, calculators, retrieval).
\item \emph{Mixture-of-Agents \citep{wang2024moa} and verifier-aggregator pipelines} with a designated aggregator.
\item \emph{Multimodal tasks} (vision, audio).
\item \emph{Multi-agent frontier-class debate}. Gemini 3.1 Pro is evaluated solo here, since the framework predicts a saturated ceiling at its solo-accuracy regime.
\item \emph{Very large team sizes} ($N > 30$) and \emph{long-horizon debates} ($\tau > 5$).
\end{itemize}

The framework yields a determinate prediction for each: role asymmetry, external tools, and verifiers should lower $\rho^{(0)}$ and raise the ceiling, frontier-class scale should push $c \to 1$ on already-high-accuracy tasks, and very large $N$ and long $\tau$ should extend the $k\tau$-product collapse. Confirming these predictions requires dedicated experimental control rather than relaxing the current setup, and is left for follow-up.

\textbf{Aggregator and team-balance.} The reported $(c, \beta)$ assume plurality (majority vote) aggregation. Re-aggregating the same agent-level data under confidence-weighted, quartile-filtered, or top-1-confident rules leaves $\beta$ within numerical noise of plurality on MCQA and shifts team accuracy by at most $5$~pp across all $(\text{task}, N)$ cells (mean $\le 1.3$~pp), with the largest shifts concentrated at small $N$ and on top-1-confident (App.~\ref{app:aggregator}). Heterogeneous teams use balanced compositions ($N \in \{3, 6, 9, 15, 30\}$ split equally across three families), which makes block-weighted aggregation algebraically identical to plurality and family-then-vote within $1$~pp. \emph{Imbalanced} hetero designs would discriminate the block-exchangeable extension (App.~\ref{app:hetero_theory}) more sharply and remain a natural follow-up.

\textbf{Range bounded at $N{=}30$, $\tau{\le}6$.} The empirical sweep is bounded at these values. The held-out audit (App.~\ref{app:cross_model}) evaluates within-range interpolation but does not certify the law beyond it. Deployments at $N > 30$ should re-pilot to confirm regime stability.

\textbf{Decoding parameters fixed.} All estimates are conditioned on the decoding configuration in App.~\ref{app:experimental_details} (temperature $0.4$, top-$p$ $0.95$, fixed seed). Lower temperatures or restricted top-$k$/top-$p$ would raise $\rho^{(0)}$ (less initial diversity) and tighten the ceiling, and the opposite holds for higher temperature. A systematic decoding-parameter sweep is left to follow-up.

\textbf{Noise placebo, single form.} Our placebo uses shuffled correct/incorrect peer answers (length-matched to debate). Richer controls (token-length-only padding, fluent-but-off-topic text, adversarial peers) would more cleanly isolate the contribution of peer content from prompt-overhead effects, and remain a natural follow-up.

\textbf{$N_\text{eff}^{\text{ans}} = 2^H$ as an upper bound.} Treating answer entropy as a count of equivalent independent voters is heuristic: on free-form tasks, distinct wrong answers inflate $H$ without reducing $\rho$. The GSM-Hard contrast $\beta_\text{ent} \gg \beta_\text{agr}$ in App.~\ref{app:k_sweep} \emph{is} this bias visible directly. We report both scales rather than collapsing them, and the theorem operates on the agreement scale.

\textbf{Domain.} Tasks are bounded-answer (MCQA) or numeric-answer (arithmetic word problems with a unique correct integer). Open-ended generation, multi-step planning, and creative tasks are not covered.

\textbf{Sampling.} GPQA is reported on the Diamond split ($n{=}198$). MMLU-Hard is filtered to items where the underlying model's solo accuracy is below $0.7$. GSM-Hard uses the Chen et al.\ adversarial subset. Per-cell items range from $198$ to $1017$.

\section{Correctness-level analysis}
\label{app:corr_level}

The main paper presents the answer-level law as primary ($N_\text{eff}^{\text{ans}} = 2^{H(\mathbf{a})}$). This appendix supports C1 by presenting the complementary correctness-level law, estimating the same two-parameter form against $N_\text{eff}^{\text{corr}} = N / (1 + (N{-}1)\rho_\text{ICC}^{\text{corr}})$, where $\rho_\text{ICC}^{\text{corr}}$ is the marginal intraclass correlation of binary correctness indicators across items.

\textbf{Why the regime crossing is answer-level only.} The regime crossing on GSM-Hard between debate and self-correction/noise (\S~\ref{sec:universality}) is specifically an answer-level phenomenon: at the correctness level, all GSM-Hard conditions sit in the hard-ceiling regime (Table~\ref{tab:corr_level_fits}). Within the framework, this is the predicted answer-space signature of conformity. Peer influence raises $\rho$ enough to flatten $\beta$ at the answer level, consolidating wrong-answer mass into fewer distinct strings. Correctness-level ceilings, driven by shared item difficulty plus low solo accuracy, are already saturated and do not move. This consolidation appears on GSM-Hard's much larger answer space, not just on 4-choice MCQA, suggesting partial generalization toward open-ended tasks.

\textbf{Comparison of $\rho$ measures.} Table~\ref{tab:c_vs_rho} compares estimated $c$ against the three $\rho$ quantities (answer-level entropy-derived $\rho_\text{eff}^{\text{ans}}$, correctness-level ICC $\rho_\text{ICC}^{\text{corr}}$, and pairwise agreement $\rho$) at $N{=}30$ on Qwen2.5-7B. On MCQA tasks the three agree closely. On free-form GSM-Hard, $\rho_\text{ICC}^{\text{corr}} \gg \rho_\text{eff}^{\text{ans}}$ because many distinct wrong numeric answers inflate answer entropy while correctness redundancy stays high.

\begin{table}[H]
\caption{Scaling-law parameter $c$ compared with answer-level and
correctness-level correlation at $N{=}30$ (Qwen2.5-7B). $\rho_\text{eff}^{\text{ans}}$: equicorrelation implied by $N_\text{eff}^{\text{ans}} = 2^{H(\mathbf{a})}$. $\rho_\text{ICC}^{\text{corr}}$: marginal correctness ICC over binary indicators across items. $\rho$: pairwise answer agreement.}
\label{tab:c_vs_rho}
\centering
\small
\begin{tabular}{llcccc}
\toprule
Task & Condition & Fitted $c$ & $\rho_\text{eff}^{\text{ans}}$ & $\rho_\text{ICC}^{\text{corr}}$ & $\rho$ \\
\midrule
\multirow{3}{*}{MMLU-Hard} & Debate & .85 & .89 & .92 & .94 \\
                            & Self-corr. & .84 & .82 & .87 & .90 \\
                            & Noise & .65 & .72 & .79 & .84 \\
\midrule
\multirow{3}{*}{GPQA}      & Debate & .81 & .86 & .91 & .93 \\
                            & Self-corr. & .78 & .77 & .83 & .87 \\
                            & Noise & .53 & .55 & .66 & .70 \\
\midrule
\multirow{3}{*}{GSM-Hard}  & Debate & .57 & .54 & .91 & .78 \\
                            & Self-corr. & .34 & .14 & .69 & .42 \\
                            & Noise & .33 & .13 & .71 & .41 \\
\bottomrule
\end{tabular}
\end{table}

\textbf{Correctness-level Ringelmann fits.} Table~\ref{tab:corr_level_fits} shows that at the correctness level, \emph{all} conditions fall in the hard-ceiling regime ($\beta \approx 0$). The ``sublinear'' regime observed on GSM-Hard self-correction and noise (Table~\ref{tab:regime_comparison}) is an answer-space phenomenon: many distinct wrong numeric answers sustain non-trivial entropy even at large $N$, but correctness-level redundancy saturates. The qualitative condition ordering is preserved: $c_\text{debate} > c_\text{self-corr} \ge c_\text{noise}$ on all tasks.

\begin{table}[H]
\caption{Ringelmann estimates at the correctness level vs.\ answer level
(Qwen2.5-7B, $N \in \{2,\ldots,30\}$). All correctness-level estimates are hard-ceiling ($\beta \approx 0$). At the answer level, GSM-Hard self-correction and noise are sublinear ($\beta \approx 0.25$), and debate is hard-ceiling on both levels.}
\label{tab:corr_level_fits}
\centering
\small
\begin{tabular}{llcccc}
\toprule
 & & \multicolumn{2}{c}{Answer level} & \multicolumn{2}{c}{Correctness level} \\
\cmidrule(lr){3-4}\cmidrule(lr){5-6}
Task & Condition & $c$ & $\beta$ & $c$ & $\beta$ \\
\midrule
\multirow{3}{*}{MMLU-Hard} & Debate & .85 & .000 & .89 & .000 \\
                            & Self-corr. & .84 & .011 & .88 & .011 \\
                            & Noise & .65 & .000 & .72 & .000 \\
\midrule
\multirow{3}{*}{GSM-Hard}  & Debate & .57 & .014 & .88 & .000 \\
                            & Self-corr. & .34 & .252 & .72 & .013 \\
                            & Noise & .33 & .272 & .70 & .004 \\
\midrule
\multirow{3}{*}{GPQA}      & Debate & .81 & .000 & .87 & .000 \\
                            & Self-corr. & .78 & .006 & .85 & .008 \\
                            & Noise & .53 & .000 & .62 & .000 \\
\bottomrule
\end{tabular}
\end{table}

\textbf{Item-difficulty decomposition.} The marginal $\rho_\text{ICC}^{\text{corr}}$ conflates two sources of agent correlation: (i)~shared item difficulty (all agents tend to get easy items right) and (ii)~within-item social dependence (sycophancy, shared prompt bias). To isolate (ii), we condition on round-1 accuracy $\hat{p}_i^{(1)}$ as an external item-difficulty estimate (round~1 is pre-debate). The conditioned ICC
\[
\rho_\text{cond}^{\text{corr}} = \frac{ \mathbb{E}_i\!\bigl[\tfrac{1}{N(N{-}1)}\sum_{j\neq k}X_j^{(3)}X_k^{(3)} - (\hat{p}_i^{(1)})^2\bigr]}{ \mathbb{E}_i\!\bigl[\hat{p}_i^{(1)}(1{-}\hat{p}_i^{(1)})\bigr]}
\]
strips out the item-difficulty component. Table~\ref{tab:icc_decomp} shows that on MCQA tasks, debate induces substantial within-item social dependence ($\rho_\text{cond}^{\text{corr}} = 0.49$--$0.74$), self-correction induces moderate dependence ($0.33$ on MMLU-Hard), and noise adds near-zero social correlation ($0.04$). The condition ordering \emph{debate $>$ self-correction $>$ noise} is preserved after conditioning. On GPQA, negative $\rho_\text{cond}^{\text{corr}}$ for self-correction ($-0.15$) and noise ($-0.23$) indicates that these conditions inject \emph{more} answer diversity than item difficulty alone would predict, consistent with stochastic re-evaluation redistributing mass across answers. We omit GSM-Hard from the conditioned analysis because the round-1 item-difficulty estimate is too noisy for free-form numeric answers, producing $\rho_\text{cond}^{\text{corr}} > 1$ (not a valid ICC).

\begin{table}[H]
\caption{Marginal vs.\ item-difficulty-conditioned ICC at $N{=}30$.
$\rho_\text{r1}$: round-1 marginal ICC (pre-debate baseline). $\rho_\text{r3}$: round-3 marginal ICC. $\rho_\text{cond}^{\text{corr}}$: round-3 ICC conditioned on round-1 item difficulty.}
\label{tab:icc_decomp}
\centering
\small
\begin{tabular}{llcccc}
\toprule
Task & Condition & $\rho_\text{r1}$ & $\rho_\text{r3}$ & $\rho_\text{cond}^{\text{corr}}$ & Ordering \\
\midrule
\multirow{3}{*}{MMLU-Hard} & Debate & .957 & .916 & .741 & \multirow{3}{*}{D$>$S$>$N \checkmark} \\
                            & Self-corr. & .957 & .865 & .332 & \\
                            & Noise & .957 & .794 & .039 & \\
\midrule
\multirow{3}{*}{GPQA}      & Debate & .923 & .912 & .495 & \multirow{3}{*}{D$>$S$>$N \checkmark} \\
                            & Self-corr. & .923 & .833 & $-.147$ & \\
                            & Noise & .923 & .662 & $-.232$ & \\
\bottomrule
\multicolumn{6}{l}{\footnotesize GSM-Hard omitted: round-1 item-difficulty estimates are too noisy for}\\
\multicolumn{6}{l}{\footnotesize free-form numeric answers, producing $\rho_\text{cond}^{\text{corr}} > 1$ (not a valid ICC).}
\end{tabular}
\end{table}

\textbf{Accuracy implications: Steiner decomposition.} Beyond correctness-level correlation, \S~\ref{sec:correlation} reports that team accuracy is invariant to communication mode (debate vs.\ self-correction). Table~\ref{tab:decomposition} provides the matched-control bookkeeping behind that claim: actual productivity ($\text{AP}$, debate accuracy) versus potential productivity ($\text{PP}_{\text{rev}}$, self-correction as the no-peer matched control), with social process loss $\text{PL}_{\text{social}} = \text{PP}_{\text{rev}} - \text{AP}$. Across all three tasks the gap is within $\pm 2$ pp at $N{=}30$ and the bootstrap intervals straddle zero, consistent with the equivalence test in \S\ref{sec:correlation}. The re-evaluation gain column reports the absolute accuracy lift over the round-1 (pre-communication) baseline.

\begin{table}[H]
\caption{Steiner decomposition at $N{=}30$ (Qwen2.5-7B).
$\text{PP}_{\text{rev}}$: self-correction (matched control). $\text{AP}$: debate. Brackets: 95\% bootstrap CIs.}
\label{tab:decomposition}
\centering
\small
\begin{tabular}{lccccc}
\toprule
Task & $\text{PP}_{\text{rev}}$ & $\text{AP}$ & $\text{PL}_{\text{social}}$ & Re-eval.\ gain & $p$ \\
\midrule
MMLU-Hard & 58.7 [55.1, 62.3] & 57.9 [54.3, 61.5] & 0.8 [$-$1.0, 2.6] & +3.2 pp & 0.478 \\
GSM-Hard  & 28.2 [25.5, 31.0] & 26.4 [23.6, 29.0] & 1.9 [$-$0.2, 3.9] & +18.2 pp & 0.090 \\
GPQA      & 31.3 [25.3, 37.9] & 31.8 [25.8, 38.4] & $-$0.5 [$-$4.5, 3.5] & +1.5 pp & 1.000 \\
\bottomrule
\end{tabular}
\end{table}

\section{Diversity metrics}
\label{app:diversity}

This appendix supports C1 by reporting the answer-diversity scale quantities ($N_\text{eff}^{\text{ans}}$, $\rho_\text{eff}^{\text{ans}}$, $\rho$) on which the Ringelmann fit operates. Table~\ref{tab:diversity} shows their per-condition values at $N{=}30$ on Qwen2.5-7B.

\begin{table}[H]
\caption{Diversity metrics at $N{=}30$, final round (Qwen2.5-7B).
$\rho_\text{eff}^{\text{ans}}$ from entropy-based $N_\text{eff}^{\text{ans}}$, and $\rho$ from pairwise agreement.}
\label{tab:diversity}
\centering
\small
\begin{tabular}{llccccc}
\toprule
Task & Condition & $N_\text{eff}^{\text{ans}}$ & $\rho_\text{eff}^{\text{ans}}$ & $\rho$ & \% Unan. & \% Unan.\ wrong \\
\midrule
\multirow{3}{*}{MMLU} & Debate & 1.12 & .886 & .939 & 80\% & 30\% \\
                       & Self-corr. & 1.21 & .822 & .904 & 69\% & 24\% \\
                       & Noise & 1.37 & .722 & .842 & 48\% & 14\% \\
\midrule
\multirow{3}{*}{GSM} & Debate & 1.79 & .542 & .784 & 47\% & 26\% \\
                      & Self-corr. & 6.10 & .135 & .416 & 12\% & 3\% \\
                      & Noise & 6.44 & .126 & .408 & 11\% & 2\% \\
\midrule
\multirow{3}{*}{GPQA} & Debate & 1.15 & .864 & .928 & 76\% & 48\% \\
                       & Self-corr. & 1.29 & .767 & .865 & 57\% & 35\% \\
                       & Noise & 1.76 & .552 & .703 & 24\% & 11\% \\
\bottomrule
\end{tabular}
\end{table}

\textbf{Key patterns.} The condition ordering $\text{debate} > \text{self-correction} > \text{noise}$ on $\rho$ holds across all three tasks, with the noise placebo close to self-correction throughout (on GSM, $\rho = 0.408$ vs.\ $0.416$). The GSM regime crossing is visible directly in $N_\text{eff}^{\text{ans}}$: debate saturates at $1.79$ while self-correction and noise reach $6.10$ and $6.44$ respectively, whereas on MCQA every condition stays close to the hard ceiling ($N_\text{eff}^{\text{ans}} \le 1.76$). The \%~unanimous-wrong column captures the failure mode no aggregator can recover: it peaks at $48\%$ on GPQA debate and falls to $11$--$14\%$ under noise, consistent with peer influence concentrating mass on correlated errors.

\section{Agent-level transition analysis}
\label{app:transitions}

This appendix supports C1 by resolving an apparent paradox. The team-level scaling collapse documented in \S~\ref{sec:results} is not driven by a failure of agents to learn from peers. The opposite holds: agents \emph{do} update toward correct peer answers, yet the aggregate efficiency $R(N) = N_\text{eff}/N$ collapses steeply with $N$. Figure~\ref{fig:transitions} resolves this paradox.

\textbf{Individual learning is positive.} Counting agent-level wrong-to-correct (W$\to$C) and correct-to-wrong (C$\to$W) transitions between rounds R1 and R3, the net rate W$\to$C $-$ C$\to$W is positive at every team size on every task. On GSM-Hard at $N{=}10$, individual W$\to$C transitions outnumber C$\to$W by roughly $18\times$ (Table~\ref{tab:two_rate} reports the unconditional rates: $0.088$ vs.\ $0.005$). So the conformity dynamics do net \emph{toward} correctness on average. Debate is not a noise process at the agent level.

\textbf{Collective gains do not scale.} The bottleneck is informational, not behavioral. Even though most individual updates are W$\to$C, the redirected agents land on whatever the modal answer is, and on items where the round-1 modal is wrong the modal-leak rate $\alpha_s$ (App.~\ref{app:two_rate}) is non-trivial. Per-item contributions therefore correlate sharply across agents: the team contracts to a single answer rather than aggregating $N$ near-independent votes. The effective team size $N_\text{eff}$ tracks this contraction directly and stays at or below the matched no-peer control across $N{=}2$--$30$.

\textbf{Mechanistic reading.} Two facts stand in tension. Individual learning is positive, while team efficiency $R(N) = N_\text{eff}/N$ collapses with $N$. This is exactly what the Ringelmann scaling law predicts when conformity raises pairwise correlation faster than per-agent accuracy improves. The framework treats these as the same phenomenon viewed at two scales: agent-level W$\to$C transitions (driven by $\alpha_l$) and team-level redundancy (driven by $\rho$). Both are positive, but only the second matters for aggregation, and the second saturates first.

\begin{figure}[H]
\centering
\includegraphics[width=\textwidth]{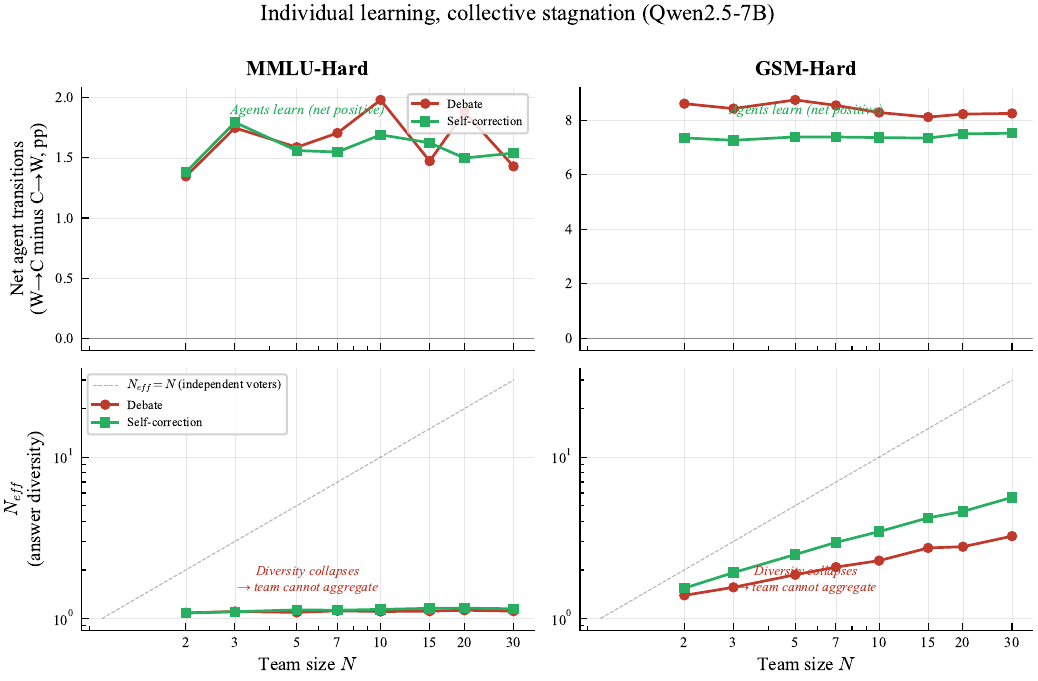}
\caption{\textbf{The Ringelmann paradox: individual learning, collective stagnation.}
\textbf{Top:} Net agent-level transitions (W$\to$C minus C$\to$W) are positive at all team sizes, on both tasks, under both debate and self-correction. Agents do learn from interaction. \textbf{Bottom:} Effective team size $N_\text{eff}$ stays close to the no-aggregation lower bound on MMLU-Hard and grows sublinearly on GSM-Hard, in both cases far below the independent-voter line $N_\text{eff} = N$. Individual learning cannot be aggregated because diversity collapses.}
\label{fig:transitions}
\end{figure}

\section{Heterogeneous teams}
\label{app:hetero}

This appendix supports C3 by evaluating whether the same Ringelmann form applies when teams mix three model families (Qwen, Llama, Ministral) at team sizes $N \in \{3, 6, 9, 15, 30\}$ on MMLU-Hard, GSM-Hard, and GPQA (200 items each). The framework predicts that architectural diversity lowers initial correlation $\rho^{(0)}$ and raises the ceiling $1/c$ while preserving the regime $\beta$. Within-family vs.\ between-family pairwise agreement (Table~\ref{tab:within_between_rho}) also tests the block-exchangeable extension of the variance-matching identity (App.~\ref{app:hetero_theory}).

\begin{table}[H]
\caption{Heterogeneity shifts the Ringelmann ceiling.
Matched-subset comparison at $N{=}30$ (200 items per task).}
\label{tab:hetero}
\centering
\small
\begin{tabular}{llccccc}
\toprule
Task & Setting & $N_\text{eff}^{\text{ans}}$ & $\rho_\text{eff}^{\text{ans}}$ & Fitted $c$ & $\beta$ & Ceiling $1/c$ \\
\midrule
\multirow{2}{*}{MMLU-Hard} & Homogeneous & 1.12 & .886 & .85 & .000 & 1.18 \\
                            & Heterogeneous & 1.34 & .735 & .54 & .000 & 1.86 \\
\midrule
\multirow{2}{*}{GSM-Hard}  & Homogeneous & 1.79 & .542 & .57 & .014 & 1.74 \\
                            & Heterogeneous & 2.25 & .426 & .35 & .000 & 2.85 \\
\midrule
\multirow{2}{*}{GPQA}      & Homogeneous & 1.15 & .864 & .81 & .000 & 1.24 \\
                            & Heterogeneous & 1.27 & .778 & .56 & .000 & 1.80 \\
\bottomrule
\end{tabular}
\end{table}

The matched-subset comparison in Table~\ref{tab:hetero} shows the same declining-efficiency structure as homogeneous debate, with $c$ falling and the asymptotic ceiling $1/c$ rising on every task while $\beta$ stays at $0$. The shift is a ceiling change, not a regime change. Figure~\ref{fig:hetero_extrapolation} shows that the small-$N$ Ringelmann fit also extrapolates to the largest $N$ on heterogeneous data.

\begin{figure}[H]
\centering
\includegraphics[width=\textwidth]{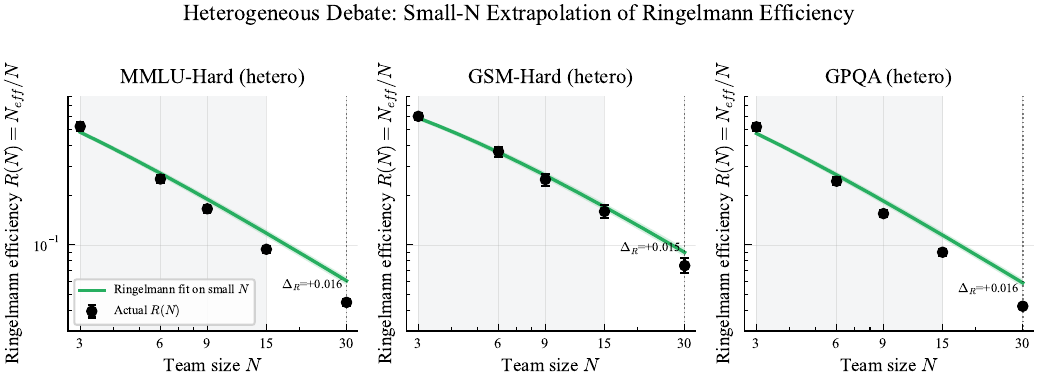}
\caption{\textbf{Small-$N$ extrapolation of the Ringelmann scaling law
on heterogeneous debate teams} across MMLU-Hard, GSM-Hard, and GPQA. Green curves are the Ringelmann form $R(N)=1/(1{+}c(N{-}1)N^{-\beta})$ estimated on small $N$ (training range varies by task), with $95\%$ bootstrap envelopes, extrapolated to the largest $N$ available. The same form continues to fit, with a higher ceiling ($1/c$) than in the homogeneous setting.}
\label{fig:hetero_extrapolation}
\end{figure}

\textbf{Within- vs.\ between-family pairwise agreement.} The block-exchangeable extension (App.~\ref{app:hetero_theory}) predicts that pairs from the same model family should agree more strongly than pairs from different families. We confirm this directly on the heterogeneous data by enumerating all final-round agent pairs within each item and partitioning by family pairing (Table~\ref{tab:within_between_rho}). Aggregated across all cells with $N \ge 6$ (so within-family pairs exist), $\bar\rho_{\text{within}} = 0.824$ vs.\ $\bar\rho_{\text{between}} = 0.761$ ($\Delta = +0.063$), with all six family-pair combinations ranking in the predicted direction: same-family pairs (Qwen-Qwen $0.80$, Llama-Llama $0.77$, Ministral-Ministral $0.80$) agree more than cross-family pairs (Qwen-Llama $0.68$, Qwen-Ministral $0.72$, Llama-Ministral $0.68$). The gap is largest on free-form GSM-Hard ($\Delta$ up to $+0.15$ at $N{=}6$), where families differ most in solution style, and smallest on MCQA where the bounded answer space forces structural agreement. The same hetero teams whose \emph{aggregate} $\rho$ is lower than homogeneous teams (Table~\ref{tab:hetero}) thus exhibit the predicted internal block structure, supporting the variance-matching bridge under the block-exchangeable assumption.

\begin{table}[H]
\caption{Within- and between-family pairwise agreement on
heterogeneous teams (final round, weighted means across all cells with $N{\ge}6$). Same-family pairs agree more than cross-family pairs in every combination, validating the block-exchangeable prediction empirically.}
\label{tab:within_between_rho}
\centering
\small
\begin{tabular}{lcccc}
\toprule
Task & $\bar\rho_{\text{within}}$ & $\bar\rho_{\text{between}}$ &
$\Delta$ & Pairs (W / B) \\
\midrule
MMLU-Hard & 0.868 & 0.819 & $+0.050$ & 35{,}400 / 82{,}800 \\
GSM-Hard  & 0.722 & 0.616 & $+0.105$ & 35{,}400 / 82{,}800 \\
GPQA      & 0.882 & 0.849 & $+0.033$ & 35{,}037 / 81{,}952 \\
\midrule
Overall   & 0.824 & 0.761 & $+0.063$ & 105{,}837 / 247{,}552 \\
\bottomrule
\end{tabular}
\end{table}

\section{Cross-model and cross-scale verification experiments}
\label{app:cross_model}

This appendix supports C3 by extending the core scaling-law identification (\S~\ref{sec:results}, Qwen2.5-7B, full $N{=}2$--$30$, three conditions) across model families, scale, paradigm, and team composition. Each verification experiment below evaluates a specific framework prediction. Table~\ref{tab:evidence_strength} in App. \ref{sec:modelscope}tabulates scope per experiment. Table~\ref{tab:master_fits} summarizes every $(c, \beta)$ estimate. Table~\ref{tab:cross_model} organizes the same estimates by framework prediction.

\begin{table}[H]
\caption{Master table of all estimated Ringelmann parameters $(c, \beta)$.
$R(N) = 1/(1{+}c(N{-}1)N^{-\beta})$. Regime: H = hard-ceiling ($\beta \approx 0$), S = sublinear ($\beta > 0.1$). $\widehat{N}_\text{eff}$: observed effective team size at the largest $N$ evaluated.}
\label{tab:master_fits}
\centering
\small
\begin{tabular}{lllcccc}
\toprule
Model & Task & Cond. & $c$ & $\beta$ & Reg. & $\widehat{N}_\text{eff}$ \\
\midrule
\multicolumn{7}{l}{\emph{Qwen2.5-7B (core, $N{=}2$--$30$)}} \\
Qwen2.5-7B & MMLU & debate & .85 & .00 & H & 1.12 \\
Qwen2.5-7B & MMLU & self-c & .84 & .01 & H & 1.21 \\
Qwen2.5-7B & MMLU & noise  & .65 & .00 & H & 1.37 \\
Qwen2.5-7B & GSM  & debate & .57 & .01 & H & 1.79 \\
Qwen2.5-7B & GSM  & self-c & .34 & .25 & S & 6.10 \\
Qwen2.5-7B & GSM  & noise  & .33 & .27 & S & 6.44 \\
Qwen2.5-7B & GPQA & debate & .81 & .00 & H & 1.15 \\
Qwen2.5-7B & GPQA & self-c & .78 & .01 & H & 1.29 \\
Qwen2.5-7B & GPQA & noise  & .53 & .00 & H & 1.76 \\
\midrule
\multicolumn{7}{l}{\emph{Llama-3.1-8B (cross-family, $N{=}2$--$30$)}} \\
Llama-3.1-8B & MMLU & debate & .62 & .00 & H & 1.21 \\
Llama-3.1-8B & MMLU & self-c & .49 & .00 & H & 1.94 \\
Llama-3.1-8B & MMLU & noise  & .51 & .00 & H & 1.81 \\
Llama-3.1-8B & GSM  & debate & .47 & .00 & H & 1.94 \\
Llama-3.1-8B & GSM  & self-c & .22 & .36 & S & 10.8 \\
Llama-3.1-8B & GSM  & noise  & .18 & .30 & S & 10.4 \\
Llama-3.1-8B & GPQA & debate & .55 & .00 & H & 1.21 \\
Llama-3.1-8B & GPQA & self-c & .44 & .00 & H & 2.12 \\
Llama-3.1-8B & GPQA & noise  & .45 & .00 & H & 2.13 \\
\midrule
\multicolumn{7}{l}{\emph{Ministral-8B (cross-family, $N{=}2$--$30$)}} \\
Ministral-8B & MMLU & debate & .71 & .00 & H & 1.39 \\
Ministral-8B & MMLU & self-c & .58 & .01 & H & 1.74 \\
Ministral-8B & MMLU & noise  & .54 & .00 & H & 1.81$^{p}$ \\
Ministral-8B & GSM  & debate & .38 & .00 & H & 2.45 \\
Ministral-8B & GSM  & self-c & .29 & .24 & S & 5.74$^{q}$ \\
Ministral-8B & GSM  & noise  & .32 & .17 & S & 4.54$^{q}$ \\
Ministral-8B & GPQA & debate & .57 & .00 & H & 1.56 \\
Ministral-8B & GPQA & self-c & .41 & .00 & H & 2.14 \\
Ministral-8B & GPQA & noise  & .37 & .00 & H & 2.36 \\
\midrule
\multicolumn{7}{l}{\emph{Qwen2.5-32B ($4{\times}$ scale, $N{=}2$--$30$)}} \\
Qwen2.5-32B & MMLU & debate & .96 & .00 & H & 1.04 \\
Qwen2.5-32B & MMLU & self-c & .88 & .00 & H & 1.13 \\
Qwen2.5-32B & MMLU & noise  & .88 & .02 & H & 1.21 \\
Qwen2.5-32B & GSM & debate & .77 & .00 & H & 1.24 \\
Qwen2.5-32B & GSM & self-c & .45 & .18 & S & 3.81 \\
Qwen2.5-32B & GSM & noise  & .45 & .18 & S & 3.82 \\
Qwen2.5-32B & GPQA & debate & .97 & .00 & H & 1.03 \\
Qwen2.5-32B & GPQA & self-c & .92 & .04 & H & 1.16 \\
Qwen2.5-32B & GPQA & noise  & .85 & .02 & H & 1.23$^{p}$ \\
\midrule
\multicolumn{7}{l}{\emph{Heterogeneous teams (Qwen+Llama+Ministral, $N{=}3$--$30$)}} \\
Hetero & MMLU & debate & .54 & .00 & H & 1.34 \\
Hetero & GSM  & debate & .35 & .00 & H & 2.25 \\
Hetero & GPQA & debate & .56 & .00 & H & 1.27 \\
\midrule
\multicolumn{7}{l}{\emph{Thinking models}} \\
Qwen3-8B-think & MMLU & debate & 1.00 & .00 & H & 1.02$^c$ \\
Qwen3-8B-think & MMLU & self-c & .89 & .01 & H & 1.11$^c$ \\
Qwen3-8B-think & MMLU & noise  & .88 & .00 & H & 1.10$^c$ \\
Gemini-FL (min.\ think) & GSM  & debate & .95 & .00 & H & 1.04 \\
Gemini-FL (min.\ think) & GPQA & debate & .86 & .00 & H & 1.10 \\
\bottomrule
\multicolumn{7}{l}{\footnotesize $^c$\,$N \le 5$. $^p$\,partial item coverage at largest $N$. $^q$\,largest $N$ available is $N{=}20$.}
\end{tabular}
\end{table}

\textbf{Frontier solo evaluation (Gemini 3.1 Pro).} The master fits table excludes Gemini 3.1 Pro because no Ringelmann curve can be estimated from a single team size. Solo accuracy at $N{=}1$: $97.0\%$ on MMLU-Hard ($n{=}200$), $80.6\%$ on GSM-Hard ($n{=}211$), and $88.9\%$ on GPQA Diamond ($n{=}198$). The framework predicts $c \to 1$ at this accuracy regime (\S\ref{sec:dynamics}), making the implied $1/c$ ceiling negligibly above one effective agent. We therefore omit multi-agent runs on Gemini 3.1 Pro: the framework's prediction makes them uninformative for the law's identification. The solo numbers confirm that the high-solo-accuracy regime is reached at the frontier, complementing Qwen3-8B thinking mode's $c = 1.00$ at $87.2\%$ solo (Table~\ref{tab:master_fits}).

\textbf{The $(c, \beta)$ landscape.} Figure~\ref{fig:universality} plots all $44$ estimates from Table~\ref{tab:master_fits} on the $(c, \beta)$ plane, color-coded by communication condition and shape-coded by model variant. The same form fits every cell, with parameter shifts that classify each into the regime predicted by \S~\ref{sec:framework}.

\begin{table}[H]
\caption{Fitted Ringelmann parameters (debate condition) organized by
framework prediction. $R(N)=1/(1{+}c(N{-}1)N^{-\beta})$ estimated against observed $N_\text{eff}/N$. $\widehat{N}_\text{eff}$: observed (not predicted) effective team size at the largest $N$ evaluated.}
\label{tab:cross_model}
\centering
\small
\begin{tabular}{llccccc}
\toprule
Prediction evaluated & Model $\times$ Task & $c$ & $\beta$ & $R^2$ & $\widehat{N}_\text{eff}$ \\
\midrule
\multicolumn{6}{l}{\emph{Hard ceiling on MCQA (3 families, $N{=}2$--$30$)}} \\
 & Qwen2.5-7B $\times$ MMLU & .85 & .000 & .9995 & 1.12 \\
 & Llama-3.1-8B $\times$ MMLU & .62 & .000 & .9942$^c$ & 1.21 \\
 & Ministral-8B $\times$ MMLU & .71 & .000 & .9996 & 1.39 \\
 & Qwen2.5-7B $\times$ GPQA & .81 & .000 & .9993 & 1.15 \\
 & Llama-3.1-8B $\times$ GPQA & .55 & .000 & .9920 & 1.21 \\
 & Ministral-8B $\times$ GPQA & .57 & .000 & .9904 & 1.56 \\
\midrule
\multicolumn{6}{l}{\emph{Regime structure on free-form math (answer level, $N{=}2$--$30$)}} \\
 & Qwen2.5-7B $\times$ GSM & .57 & .014 & .9989 & 1.79 \\
 & Llama-3.1-8B $\times$ GSM & .47 & .000 & .9993 & 1.94 \\
 & Ministral-8B $\times$ GSM & .38 & .000 & .9999 & 2.45 \\
\midrule
\multicolumn{6}{l}{\emph{Scale / paradigm tightens ceiling on debate}} \\
 & Qwen2.5-32B $\times$ GSM ($4\times$ scale) & .77 & .000 & .9982 & 1.24 \\
 & Gemini-FL $\times$ GSM (frontier API) & .95 & .000 & .9992 & 1.04 \\
\midrule
\multicolumn{6}{l}{\emph{Thinking models: tightest ceiling}} \\
 & Qwen3-8B-think $\times$ MMLU & 1.00 & .000 & --- & 1.02$^a$ \\
 & Gemini-FL (min.\ think) $\times$ GPQA & .86 & .000 & --- & 1.10 \\
\bottomrule
\multicolumn{6}{l}{\footnotesize $^a$\,$N \le 5$.} \\
\multicolumn{6}{l}{\footnotesize $^c$Lowest $R^2$: Llama $N_\text{eff}$ is mildly non-monotonic
  at large $N$, fit overshoots.}
\end{tabular}
\end{table}

\begin{table}[H]
\caption{\textbf{Ringelmann form wins on held-out extrapolation
across three open-weight families and heterogeneous teams.} Each cell is estimated on $N \in \{2,3,5\}$ and predicts $\widehat{N}_\text{eff}$ at held-out $N \in \{7,10,15,20,30\}$ (9 cells per family on Qwen and Llama, 8 on Ministral; 3 cells for heterogeneous, debate only). Across all three open-weight families the Ringelmann form attains $5$--$12\%$ mean relative error while logistic saturation never beats $74\%$, confirming the form is structurally chosen, not curve-fit. Const-$\rho$ Kish is the special case $\beta = 0$ enforced. On heterogeneous teams the regime-constrained fit pins to $\beta = 0$ everywhere, making Ringelmann and Const-$\rho$ Kish numerically identical.}
\label{tab:held_out_audit}
\centering
\small
\begin{tabular}{lcccc}
\toprule
Functional family & Form & Params & Mean RMSE & Mean rel.\ err.\ \\
\midrule
\multicolumn{5}{l}{\emph{Qwen2.5-7B (9 cells)}} \\
\textbf{Ringelmann} & $1/(1{+}c(N{-}1)N^{-\beta})$ & 2 & \textbf{0.13} & \textbf{5.5\%} \\
Const-$\rho$ Kish     & $1/(1{+}(N{-}1)\rho)$        & 1 & 0.37 & 8.3\% \\
Power law            & $a + bN^{c}$                  & 3 & 0.86 & 32.2\% \\
Logistic saturation  & $L/(1{+}e^{k(N{-}N_0)})$      & 3 & 1.69 & 73.4\% \\
\midrule
\multicolumn{5}{l}{\emph{Llama-3.1-8B (9 cells)}} \\
\textbf{Ringelmann} & --- & 2 & \textbf{0.45} & \textbf{11.9\%} \\
Const-$\rho$ Kish     & --- & 1 & 0.66 & 14.1\% \\
Power law            & --- & 3 & 1.02 & 40.8\% \\
Logistic saturation  & --- & 3 & 2.31 & 69.7\% \\
\midrule
\multicolumn{5}{l}{\emph{Ministral-8B (8 cells)$^*$}} \\
\textbf{Ringelmann} & --- & 2 & \textbf{0.18} & \textbf{5.8\%} \\
Const-$\rho$ Kish     & --- & 1 & 0.23 & 6.6\% \\
Power law            & --- & 3 & 0.79 & 31.6\% \\
Logistic saturation  & --- & 3 & 1.64 & 68.8\% \\
\midrule
\multicolumn{5}{l}{\emph{Heterogeneous teams (3 cells, debate only)}} \\
Ringelmann ($=$Kish) & --- & 2 & 0.43 & 27.4\% \\
Const-$\rho$ Kish     & --- & 1 & 0.43 & 27.4\% \\
Power law            & --- & 3 & 0.96 & 38.5\% \\
Logistic saturation  & --- & 3 & 1.34 & 80.6\% \\
\bottomrule
\multicolumn{5}{l}{\footnotesize $^*$Ministral GSM noise excluded: insufficient $N \in \{2,3\}$ coverage to fit the small-team training set.}
\end{tabular}
\end{table}

\begin{figure}[H]
\centering
\includegraphics[width=\textwidth]{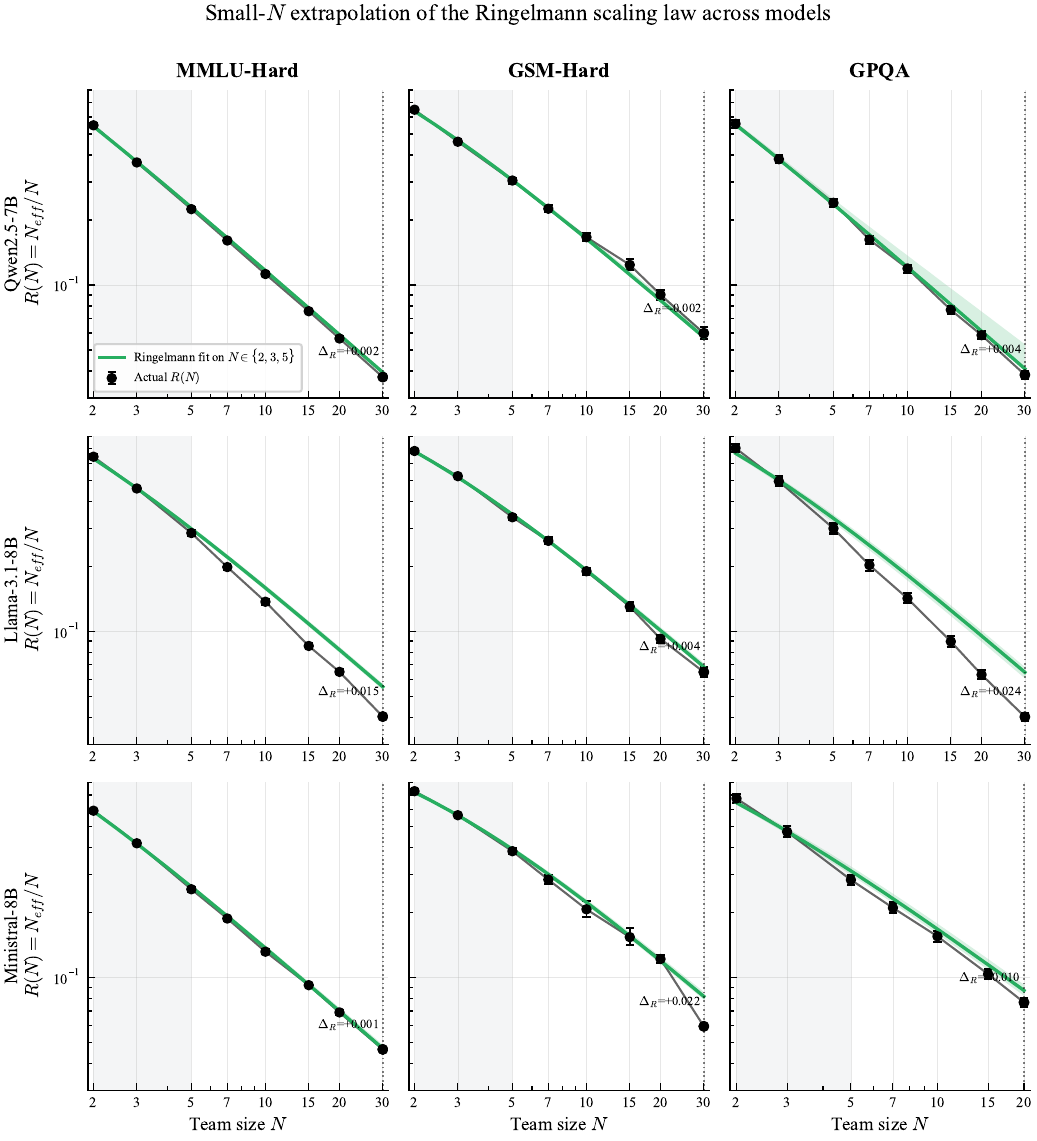}
\caption{\textbf{Small-team runs predict the large-team Ringelmann ceiling
across models and tasks.} Rows: Qwen2.5-7B (top), Llama-3.1-8B (middle), Ministral-8B (bottom). Columns: MMLU-Hard, GSM-Hard, GPQA. Black points show observed answer-diversity efficiency $R(N)=N_\text{eff}/N$ with $95\%$ item-level bootstrap intervals. Green curves are the Ringelmann form $R(N)=1/(1{+}c(N{-}1)N^{-\beta})$ estimated only on $N\in\{2,3,5\}$ (gray training band), with $95\%$ bootstrap envelope, extrapolated to the largest $N$ evaluated for that model and task. Residuals $\Delta_R$ at the largest $N$ are reported per panel. Both axes are on log scale (matching Figure~\ref{fig:hetero_extrapolation}). Cross-family comparison against alternative functional forms (Const-$\rho$ Kish, Latan\'{e} power law, logistic) appears in Table~\ref{tab:held_out_audit}, where held-out errors across all three open-weight families confirm the same ranking.}
\label{fig:efficiency_extrapolation}
\end{figure}

\begin{figure}[H]
\centering
\includegraphics[width=\textwidth]{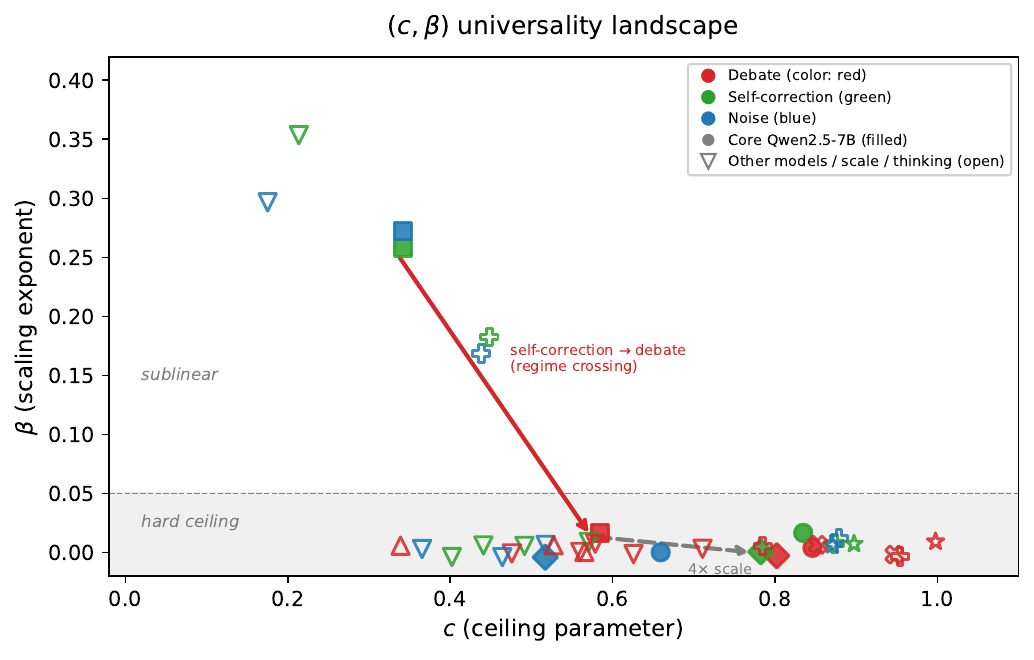}
\caption{\textbf{The $(c, \beta)$ landscape across evaluated configurations.}
Each point is one Ringelmann estimate across 6~models plus heterogeneous teams, 3~tasks, and 3~conditions (44~estimates total, also reported in Table~\ref{tab:master_fits}). \emph{Color}: condition (red = debate, green = self-correction, blue = noise). \emph{Filled shapes}: core Qwen2.5-7B (circles = MMLU-Hard, squares = GSM-Hard, diamonds = GPQA). \emph{Open shapes}: cross-model variants (inverted triangles = Llama-3.1-8B / Ministral-8B, plus = $4\times$ scale, upward triangles = heterogeneous, stars = thinking). \textbf{Red arrow}: answer-level regime crossing on GSM-Hard (self-correction sublinear $\to$ debate hard-ceiling), same pattern on Llama-3.1-8B (Table~\ref{tab:regime_comparison}). \textbf{Gray dashed arrow}: $4\times$ model scale tightens the hard ceiling on GSM debate ($c{:}\ 0.57 \to 0.77$), and the same ceiling-tightening pattern holds on MMLU debate ($0.85 \to 0.96$).}
\label{fig:universality}
\end{figure}

\subsection{Self-correction, noise, and self-consistency: detailed analysis}
\label{app:noise_detail}

This subsection supports C4 (initial correlation $\rho^{(0)}$ as the lever) by examining three sub-conditions in detail. The shared-form result of \S~\ref{sec:universality} rests on these sub-conditions.

\textbf{Self-correction (no peers, same law).} Self-correction uses the same revision prompt and model population as debate but removes all peer information. On GSM-Hard it produces $c = 0.34$, $\beta = 0.25$ (sublinear regime), with different parameters from debate but the same functional form and the same quality of fit. The shift is interpretable: without peer-induced conformity, correlation is lower and answer diversity decays more slowly with $N$, placing self-correction in a qualitatively better scaling regime.

\textbf{Noise placebo (random peers, same law).} The noise placebo replaces true peer answers with randomly shuffled text from other items. It produces $c = 0.33$, $\beta = 0.27$ on GSM-Hard, statistically indistinguishable from self-correction ($\widehat{N}_\text{eff}$ within $0.4$ at every team size). Even though the law is derived under explicit peer-conformity dynamics, it fits the no-peer placebo with matching parameters. The law therefore captures a structural property of correlated LLM outputs under revision, not a property of peer debate specifically.

\textbf{Self-consistency (no revision, predicted from $\rho^{(0)}$).} Self-consistency, majority vote over round-1 answers without any revision or communication, is the true zero-interaction baseline. The framework predicts its scaling directly from the pre-communication correlation $\rho^{(0)}$. Measuring round-1 pairwise agreement gives $\rho^{(0)} \approx 0.97$ (MMLU-Hard) and $\rho^{(0)} \approx 0.51$ (GSM-Hard). On MMLU-Hard, fitting the Ringelmann curve to round-1 answers recovers $c = 0.93$, close to $\rho^{(0)}$.\footnote{Round-1 $\rho^{(0)} = 0.97$ is slightly higher than the post-debate $\rho = 0.94$ in Table~\ref{tab:c_vs_rho} because debate redistributes some answers on MMLU-Hard, breaking prior consensus on a small fraction of items.} On GSM-Hard, self-consistency lands in the sublinear answer-level regime ($c = 0.43$, $\beta = 0.25$, $R^2 = 0.9997$), the same regime as self-correction and noise. Accuracy is flat from $N{=}2$ to $N{=}30$: low solo accuracy ($p \approx 0.10$) prevents the correct answer from winning a plurality even as answer-space entropy grows.

\subsection{Robustness of the functional form}
\label{app:fit_robustness}

Three independent lines of evidence argue against an overfitting reading.

\noindent\emph{(i)~Out-of-sample extrapolation.}
Parameters estimated on $N \le 5$ predict $N{=}30$ across the evaluated configurations (Figure~\ref{fig:efficiency_extrapolation}). Out-of-sample extrapolation is the standard test for a power-law fit versus a flexible curve, formalized in the held-out audit (Table~\ref{tab:held_out_audit}).

\noindent\emph{(ii)~Interpretable parameter shifts.}
The shifts across conditions follow the mean-field prediction: peer debate increases $c$ and lowers $\beta$ (more conformity, tighter ceiling), noise and self-correction overlap (no peer signal, same re-evaluation effect), and self-consistency is predicted parametrically from $\rho^{(0)}$ with no fitting at all.

\noindent\emph{(iii)~Cross-validation against an external metric.}
The condition ordering in estimated $c$ (debate $>$ self-corr.\ $\approx$ noise) is confirmed by the independently computed correctness-level ICC $\rho_\text{ICC}^{\text{corr}}$ (Table~\ref{tab:c_vs_rho}), which does not enter the fitting procedure. The ordering survives item-difficulty conditioning: $\rho_\text{cond}^{\text{corr}} = 0.74$ (debate) $> 0.33$ (self-corr.)\ $> 0.04$ (noise) on MMLU-Hard (App.~\ref{app:corr_level}). On MCQA, $c \approx \rho_\text{eff}^{\text{ans}} \approx \rho_\text{ICC}^{\text{corr}}$. On GSM-Hard the three diverge because distinct wrong answers inflate answer entropy, but the ordering holds.

\subsection{Across 7--8B model families}

\textbf{Regime crossing on GSM-Hard replicates across families.} The main-text canonical case (Qwen2.5-7B, §\ref{sec:universality}) shows debate at $c = 0.57$, $\beta \approx 0$ (hard ceiling) versus self-correction at $c = 0.34$, $\beta = 0.25$ and noise at $c = 0.33$, $\beta = 0.27$ (both sublinear). On Llama-3.1-8B the same crossing is visible with sharper contrast: debate $c = 0.47$, $\beta = 0.000$, self-correction $c = 0.22$, $\beta = 0.361$, noise $c = 0.18$, $\beta = 0.300$ (Table~\ref{tab:master_fits}). On Ministral-8B, debate sits at $c = 0.38$, $\beta \approx 0$ (hard ceiling), while self-correction and noise are sublinear ($c = 0.29$, $\beta = 0.24$ and $c = 0.32$, $\beta = 0.17$, both fitted on $N \le 20$). All three families exhibit the regime crossing: hard-ceiling debate, sublinear when peer content is removed.

\textbf{Hard ceiling on MCQA across families.} All three 7--8B families produce $\beta = 0$ on MCQA (Table~\ref{tab:cross_model}, top block). Cross-model variation appears in $c$ (Llama $0.62$ vs.\ Qwen $0.85$ vs.\ Ministral $0.71$) rather than in the regime. The hard-ceiling regime on GPQA, originally established on Qwen, also replicates on Llama-3.1-8B ($\beta=0$, $c=0.55$) and on Ministral-8B ($\beta=0$, $c=0.57$, with self-correction $c=0.41$ and noise $c=0.37$ also at $\beta=0$), giving three-family confirmation that GPQA tracks MMLU rather than GSM. The lower $c$ for the smaller models on GSM debate is consistent with more diverse errors from weaker solo accuracy.

\textbf{Cross-model parameter consistency.} Re-running the parameter-estimation audit (Table~\ref{tab:held_out_audit}) and the modal/non-modal transition diagnostics (Table~\ref{tab:transitions}) on Llama-3.1-8B preserves both rankings: Ringelmann remains the lowest held-out error family ($11.9\%$ mean relative error vs $14.1\%$ for const-$\rho$ Kish, $40.8\%$ for power law, $69.7\%$ for logistic), and the qualitative transition pattern matches Qwen (modal-stay high on MCQA, lower on GSM-Hard, debate non-modal-to-modal rate $2$--$3{\times}$ the rate under self-correction or noise, per-peer $\alpha_l$ decaying with $N$). Llama's modal-stay rate on GSM is lower than Qwen's ($26$--$40\%$ vs $54$--$63\%$), even though Llama's GSM solo accuracy ($\sim 17\%$) is slightly higher than Qwen's ($10.5\%$). ``Less capable'' is therefore not a complete account of the parameter differences. The framework's prediction is consistent with what we observe: regimes are shared (both at $\beta \approx 0$ on GSM debate) while parameters $c$ and $\alpha_s$ differ in the predicted direction (lower $c$ correlates with higher modal leakage). Cross-model parameter shifts without regime crossing are exactly what \S~\ref{sec:framework} predicts.

\subsection{Scale sensitivity: 7B vs.\ 32B}

A key question is whether the hard-ceiling regime ($\beta \approx 0$) persists at larger model scale. Qwen2.5-32B-Instruct provides a within-family evaluation at $4\times$ parameter scale on GSM-Hard, MMLU-Hard, and GPQA (200 / 200 / 198 items, team sizes $N{=}2$--$30$, all three conditions).

Three findings emerge. First, \textbf{32B debate tightens the hard ceiling}. At 7B, GSM-Hard debate has $c = 0.57$, $\beta \approx 0$. At 32B, $c$ rises to $0.77$ ($\beta = 0.00$, $R^2 = 0.998$), compressing the diversity ceiling from $1/c = 1.75$ to $1.30$. The same tightening shows on GPQA: $c$ moves from $0.81$ at 7B to $0.97$ at 32B (debate, $\beta \approx 0$, Table~\ref{tab:master_fits}), placing the 32B GPQA ceiling within $1.03$ effective agents at $N{=}30$. Second, \textbf{the 32B model is more correlated, not less}: $\widehat{N}_\text{eff}(30) = 1.24$ at 32B versus $1.79$ at 7B on GSM-Hard, and $1.03$ versus $1.15$ on GPQA. Solo accuracy rises but this gain is driven by the smarter base model, not by better collective reasoning. Third, \textbf{noise $\approx$ self-correction replicates at 32B with $N$ up to $30$}: self-correction gives $c = 0.45$, $\beta = 0.18$ and noise gives $c = 0.45$, $\beta = 0.18$, nearly identical, confirming the mechanism-agnosticity of the Ringelmann law across model scales. On GPQA the noise condition has partial item coverage at $N{=}30$ ($122/198$), but the $(c, \beta)$ fit is stable on the full $N{=}2$--$30$ grid.

\section{Communication density ablation}
\label{app:k_sweep}

This appendix supports C2 (the $k\tau$ product theorem) by ablating communication density. The main experiments use fully connected teams ($k = N{-}1$, every agent sees all peers). A natural question is whether conformity can be reduced by limiting it. We run a dedicated $k\tau$ ablation on GSM-Hard (200 items, Qwen2.5-7B) with $k \in \{1, 2, 4, 9, 29\}$ peers, $\tau \in \{0, \ldots, 5\}$ communication rounds, and $N \in \{3, 10, 30\}$.

\begin{table}[H]
\caption{Scaling exponent $\beta$ by communication density $k$
(Qwen2.5-7B, 200 items, final round $\tau{=}5$ for GSM-Hard, $\tau{=}6$ for MMLU-Hard). $\beta_\text{ent}$: entropy scale ($N_\text{eff}^{\text{ans}} = 2^H$). $\beta_\text{agr}$: pairwise agreement scale ($N_\text{eff}^{\text{agr}} = N/(1{+}(N{-}1)\rho)$). The theorem operates on $\rho$. On free-form GSM-Hard the two scales diverge (large $\beta_\text{ent}$ from wrong-answer mass spread, \S~\ref{sec:framework}). On MCQA MMLU-Hard the two scales coincide because the answer space is bounded ($K{=}4$).}
\label{tab:k_sweep}
\centering
\small
\begin{tabular}{lcccc}
\toprule
$k$ & GSM $\beta_\text{agr}$ & GSM $\beta_\text{ent}$ &
      MMLU $\beta_\text{agr}$ & MMLU $\beta_\text{ent}$ \\
\midrule
 1  & 0.17 & 0.39 & 0.06 & 0.11 \\
 2  & 0.14 & 0.31 & 0.06 & 0.10 \\
 4  & 0.11$^*$ & 0.33 & 0.02$^*$ & 0.05 \\
 9  & 0.09$^*$ & 0.25 & 0.03$^*$ & 0.05 \\
\bottomrule
\multicolumn{5}{l}{\footnotesize $^*$2-point log-slope estimate.}
\end{tabular}
\end{table}

Table~\ref{tab:k_sweep} reveals a scale-dependent picture and confirms the theorem on a second task. On the \emph{pairwise agreement} scale (the theorem's native variable), $\beta_\text{agr}$ is small on both tasks: $0.09$--$0.17$ on GSM-Hard and $0.02$--$0.06$ on MMLU-Hard, decreasing monotonically with $k$ and converging toward the theorem's prediction of $\beta = 0$. On the \emph{entropy} scale, the two tasks separate sharply. On bounded-answer-space MMLU-Hard ($K{=}4$ choices), $\beta_\text{ent} \approx \beta_\text{agr}$: the two scales coincide, as predicted in \S~\ref{sec:framework}. On free-form GSM-Hard, $\beta_\text{ent}$ is much larger ($0.25$--$0.39$) because distinct wrong numeric answers inflate answer entropy $H(\mathbf{a})$ without proportionally reducing pairwise agreement, and at fixed $k$ this inflation is amplified because minority answers spread across more agents in larger teams. The theorem is thus confirmed on its native scale across both task types. The GSM--MMLU contrast in $\beta_\text{ent}$ visualizes the wrong-answer-mass-spread phenomenon as a task-structural property, not a violation of the theorem.

\textbf{The $k\tau$ product structure.} The matched-$k\tau$ collapse test is reported in Table~\ref{tab:ktau_collapse} of \S\ref{sec:ktau}. The aggregate pattern is consistent across both tasks: spread stays at $\Delta\rho \approx 0.04$--$0.08$ on GSM-Hard and contracts on MMLU-Hard from $\Delta\rho \approx 0.13$ at $k\tau{=}4$ to $\Delta\rho \approx 0.015$ at $k\tau{=}12$. The looseness at low $k\tau$ is driven by single-round configurations $(k, 1)$, where the theorem's binary-coarsening derivation introduces second-order corrections that its linearization does not capture; restricting to $\tau \ge 2$ tightens the spread on both tasks.

\textbf{Stratifying by initial agreement.} The aggregate $\Delta\rho$ in Table~\ref{tab:ktau_collapse} averages over all items per cell. Conditioning instead on per-item initial agreement $\bar\rho^{(0)}$ (a difficulty proxy) reveals a residual structure on GSM-Hard ($N{=}10$): restricting to fair multi-round comparisons ($\tau \ge 2$), median $\Delta\rho$ across realizations of the same $k\tau$ is $0.027$ in high-agreement items ($\bar\rho^{(0)} > 0.7$, $80$ items), $0.022$ in mid-agreement items ($\bar\rho^{(0)} \in [0.4, 0.7]$, $40$ items), and $0.079$ in low-agreement items ($\bar\rho^{(0)} < 0.4$, $80$ items, max $\Delta\rho = 0.10$). The product law thus holds tightly where the high-agreement linearization of Thm.~\ref{thm:comm-density} is valid and loosens by roughly $3\times$ where it is least valid. Single-round ($\tau{=}1$) realizations also deviate from multi-round outcomes at the same product, consistent with larger second-order corrections during the $\tau{=}1$ initial transient. Figure~\ref{fig:ktau_stratified} visualizes the three regimes.

\begin{figure}[H]
\centering
\includegraphics[width=\textwidth]{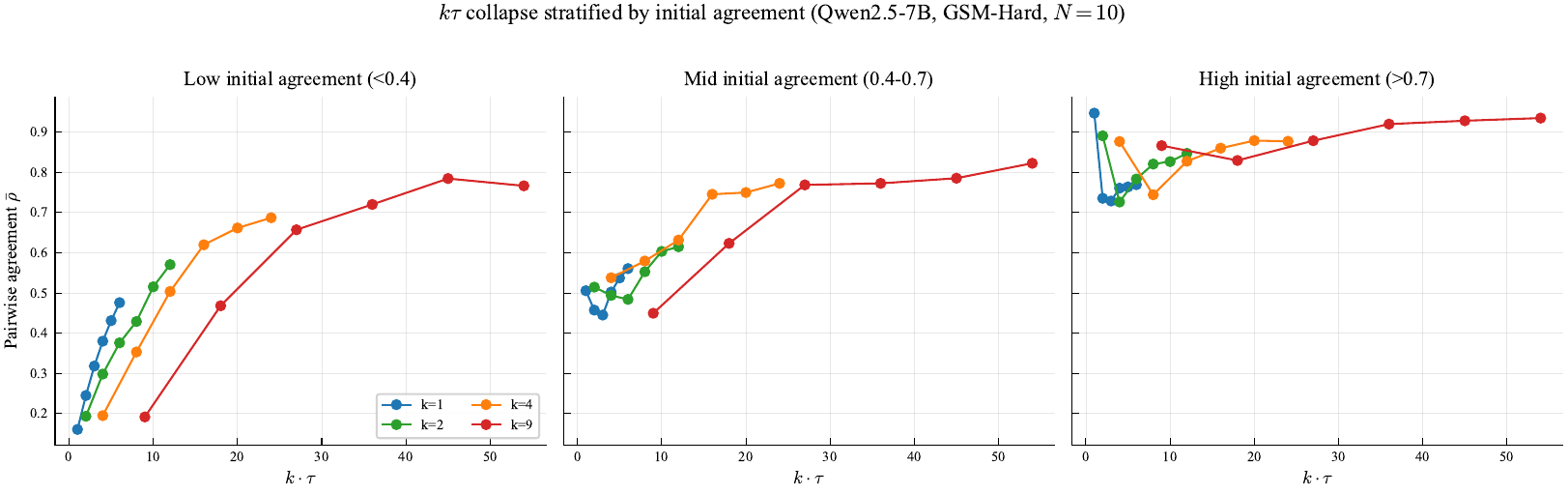}
\caption{$k\tau$ collapse stratified by initial item-level agreement
(Qwen2.5-7B, GSM-Hard, $N{=}10$). Curves are pairwise agreement $\bar\rho$ vs.\ $k\tau$, one per peer count $k$. Right panel (high agreement, $\bar\rho^{(0)} > 0.7$): curves nearly coincide, consistent with the theorem's high-agreement regime. Left panel (low agreement, $\bar\rho^{(0)} < 0.4$): visible cross-$k$ separation at small $k\tau$, narrowing as $k\tau$ grows.}
\label{fig:ktau_stratified}
\end{figure}

\textbf{Why the large $\beta_\text{ent}$ does not violate the theorem.} The theorem predicts contraction of $\rho$, not entropy. At dense communication ($k \to N{-}1$), $\beta_\text{agr} \to 0$ on both tasks while $\beta_\text{ent}$ persists only on GSM-Hard, isolating answer-space inflation as a task-structural quantity independent of communication dynamics.

\section{Two-rate mean-field extension}
\label{app:two_rate}

This appendix supports C2 by extending the mean-field model with asymmetric conformity rates and grounding assumption (A2) in agent-level transition diagnostics. The single conformity coefficient $\alpha$ in Eq.~\ref{eq:rho} averages over two distinct processes with empirically different magnitudes.

\begin{definition}[Asymmetric Conformity Rates]
\label{def:two_rate}
Let $\alpha_s := P(\text{correct agent adopts wrong peer majority})$ denote the \emph{sycophancy rate} and $\alpha_l := P(\text{wrong agent adopts correct peer majority})$ denote the \emph{learning rate}.
\end{definition}

Table~\ref{tab:two_rate} reports these rates from the agent-level debate data. On both tasks, $\alpha_l > \alpha_s$: agents are more readily corrected than corrupted. The asymmetry is moderate on MMLU-Hard ($\alpha_s/\alpha_l \approx 0.67$) and pronounced on GSM-Hard ($\approx 0.30$), consistent with the stronger persuasive signal carried by correct mathematical reasoning.

\begin{table}[H]
\caption{Asymmetric conformity rates at $N{=}10$, round~3 (Qwen2.5-7B, debate).}
\label{tab:two_rate}
\centering
\small
\begin{tabular}{lcccc}
\toprule
Task & $\alpha_s$ ($n$) & $\alpha_l$ ($n$) & Ratio & Unconditional C$\to$W / W$\to$C \\
\midrule
MMLU-Hard & 0.60 (119) & 0.89 (141) & 0.67 & 0.025 / 0.045 \\
GSM-Hard  & 0.24 (170) & 0.79 (233) & 0.30 & 0.005 / 0.088 \\
\bottomrule
\end{tabular}
\end{table}

\textbf{Accuracy dynamics (item-conditioned heuristic).} When the majority is correct, wrong-minority agents switch at rate~$\alpha_l$, and when the majority is wrong, correct-minority agents switch at rate~$\alpha_s$:
\begin{equation}
p^{(\tau+1)} \approx
\begin{cases}
p^{(\tau)} + (1 - p^{(\tau)})\,\alpha_l & \text{if } p^{(\tau)} > 0.5, \\[4pt]
p^{(\tau)}(1 - \alpha_s) & \text{if } p^{(\tau)} < 0.5.
\end{cases}
\label{eq:accuracy_two_rate}
\end{equation}
When $\alpha_l > \alpha_s$, accuracy grows toward~1 faster than it shrinks toward~0.

\textbf{Agreement dynamics and robustness of scaling predictions.} On a per-item basis, items where the modal answer is correct contract pairwise disagreement at rate $\alpha_l$ (non-modal agents adopting correct modal), and items where the modal answer is wrong contract at rate $\alpha_s$ (correct minority adopting wrong modal). Let $f_c$ denote the fraction of items where the modal answer is correct (the population's modal-level accuracy). The population-mean disagreement contracts as
\[
d^{(\tau+1)} \approx d^{(\tau)}(1 - \alpha_\text{eff}), \qquad \alpha_\text{eff} = f_c\,\alpha_l + (1{-}f_c)\,\alpha_s.
\]
As long as $\alpha_\text{eff} > 0$ (which holds whenever $\alpha_l, \alpha_s \in (0,1)$), the contraction is geometric, so $\rho^{(\tau)} \to 1$ as $\tau \to \infty$ and the hard-ceiling prediction is preserved. The $k\tau$ product structure extends by the same argument as Thm.~\ref{thm:comm-density}: replacing the single $\alpha_1$ with $\alpha_\text{eff}$ in Eq.~\ref{eq:contraction} gives the same $k\tau$-product collapse with a renormalized contraction rate. The single-$\alpha$ model is therefore sufficient for \emph{agreement dynamics and scaling predictions}, while the two-rate extension is needed for \emph{accuracy-trajectory predictions}, where the asymmetry $\alpha_l > \alpha_s$ drives the convergence of $f_c \to 1$.

\textbf{Empirical transition matrices grounding Thm.~\ref{thm:comm-density}.} Thm.~\ref{thm:comm-density} idealizes the modal class as stable (assumption A2). Table~\ref{tab:transitions} reports the empirical modal/non-modal transition rates on Qwen2.5-7B across all three tasks and conditions, computed by anchoring on the round-1 modal answer per item and tracking persistence to round~3. Three observations: (i)~Modal-stay rates are high on MCQA ($85$--$91\%$) and moderate on GSM-Hard ($54$--$63\%$), placing the modal-leak rate ($1 - \text{modal-stay}$) in $[0.09, 0.46]$ and confirming that A2 is well-satisfied on bounded-answer tasks and approximate on free-form math. (ii)~Non-modal-to-modal rates are substantially higher under debate ($38$--$62\%$) than under self-correction or noise ($10$--$25\%$), consistent with the conformity mechanism. (iii)~The per-peer learning rate $\alpha_l := P(\text{nm}\to\text{m}) / k$ shrinks with $N$ on every cell, consistent with the high-agreement saturation argument: as the modal class grows, the marginal information from each additional modal peer decays, and the contraction enters the plateau regime predicted by Eq.~\ref{eq:contraction}.

\begin{table}[H]
\caption{Empirical modal/non-modal transition rates on Qwen2.5-7B
(round~1 $\to$ round~3, anchored on round-1 modal answer per item). Modal-stay $:= P(\text{R1-modal stays at the same answer through R3})$. The modal-leak rate is $1 - \text{modal-stay}$. nm$\to$m $:=$ probability that an R1-non-modal agent moves to the R1-modal answer by R3. $\alpha_l := $ nm$\to$m $/ k$ (per-peer learning rate).}
\label{tab:transitions}
\centering
\small
\begin{tabular}{ll|ccc|ccc|ccc}
\toprule
& & \multicolumn{3}{c|}{Modal-stay} & \multicolumn{3}{c|}{nm$\to$m} & \multicolumn{3}{c}{$\alpha_l$ per peer} \\
Task & Cond. & $N{=}5$ & $N{=}10$ & $N{=}30$ & $N{=}5$ & $N{=}10$ & $N{=}30$ & $N{=}5$ & $N{=}10$ & $N{=}30$ \\
\midrule
MMLU & deb. & .85 & .88 & .90 & .38 & .53 & .62 & .096 & .059 & .021 \\
     & self & .89 & .90 & .90 & .18 & .24 & .24 & .045 & .027 & .008 \\
     & noi. & .83 & .84 & .88 & .15 & .28 & .19 & .038 & .031 & .007 \\
\midrule
GSM  & deb. & .54 & .59 & .63 & .25 & .34 & .41 & .062 & .038 & .014 \\
     & self & .59 & .60 & .60 & .02 & .03 & .03 & .006 & .003 & .001 \\
     & noi. & .63 & .68 & .68 & .02 & .02 & .02 & .004 & .002 & .001 \\
\midrule
GPQA & deb. & .88 & .90 & .91 & .52 & .45 & .55 & .130 & .050 & .019 \\
     & self & .88 & .88 & .87 & .17 & .25 & .20 & .044 & .028 & .007 \\
     & noi. & .79 & .79 & .80 & .11 & .21 & .11 & .027 & .024 & .004 \\
\bottomrule
\end{tabular}
\end{table}

These diagnostics support A2 as an approximation rather than an exact identity: the modal class is leaky at non-trivial rates, especially on GSM-Hard, and the robustness analysis above shows that the $k\tau$ structure survives this leakiness with a renormalized contraction rate.

\section{Deployment: cost, throughput, and accuracy projection}
\label{app:deployment}

This appendix covers the deployment-facing implications of the scaling law: the small-$N$ calibration procedure used to estimate $(c, \beta)$ on a new model or task (\S\ref{app:measurement}), token-cost scaling under dense communication (\S\ref{app:cost}), a heuristic accuracy projection from small-team calibration (\S\ref{app:accuracy_projection}), and the detailed economic argument behind the deployment claims in \S~\ref{sec:results} (\S\ref{app:deployment_econ}).

\subsection{Measurement procedure: estimating \texorpdfstring{$(c, \beta)$}{(c, beta)} on a new model}
\label{app:measurement}

The recipe below is what the released script \texttt{analyses/held\_out\_audit.py} implements. It calibrates $(c, \beta)$ from a small-team pilot and reports the predicted extrapolation to a target $N$.

\textbf{Algorithm.}
\begin{enumerate}[leftmargin=1.4em, itemsep=2pt, topsep=2pt]
\item \emph{Pick a model and task.} Pick a small-$N$ grid. The paper
uses $\{2, 3, 5\}$ for the calibration set and a held-out target $N \in \{7, 10, 15, 20, 30\}$.
\item \emph{Run teams.} For each $N$, draw $M$ items (we use
$M{=}200$--$1017$). For each item, sample $N$ independent agent responses (single-shot ensembling) or run $\tau$ rounds of debate with $k = N{-}1$ peers, recording the final answer of each agent.
\item \emph{Per-team effective count.} For each item $i$, compute the
Shannon entropy $H_i$ in bits over the $N$ agent answers and set $N_\text{eff}^{\text{ans}}(i) = 2^{H_i}$ (\S~\ref{sec:framework}).
\item \emph{Aggregate to $R(N)$.} Average across items at each $N$:
$R(N) = \frac{1}{M} \sum_i N_\text{eff}^{\text{ans}}(i) / N$.
\item \emph{Fit $(c, \beta)$.} Nonlinear least squares of
$R(N) = 1 / (1 + c(N{-}1) N^{-\beta})$ on the calibration grid, with bounds $c \in [0, 20]$, $\beta \in [0, 1]$. The $c$ bound is permissive for optimizer stability. The theoretical upper bound from $\rho_N \le 1$ at $N=2$ is $c \le 2^\beta \le 2$, and all $44$ fits in this paper land at $c \le 1$.
\item \emph{Read the regime.} $\beta \approx 0$ is hard ceiling at
$1/c$, $0 < \beta < 1$ is sublinear scaling ($N_\text{eff} \sim N^\beta / c$), and $\beta \approx 1$ is linear, discounted by $1/(1{+}c)$.
\end{enumerate}

\textbf{Choice of $N_\text{eff}$ scale.} On bounded-answer (MCQA) tasks, $N_\text{eff}^{\text{ans}} = 2^H$ is tight: the entropy $H \le \log_2 K$ saturates and matches the agreement-based $N_\text{eff}^{\text{agr}} = N / (1 + (N{-}1)\rho)$ to within a few percent. On free-form tasks (GSM-Hard), distinct wrong numeric strings inflate $H$ without reducing $\rho$, so $N_\text{eff}^{\text{ans}}$ is an upper bound on the effective independent count rather than a tight measure. The theorem-native variable is $\rho$. For free-form tasks, replace Step~3 with: per item, compute pairwise agreement $\rho_i = \frac{2}{N(N{-}1)} \sum_{j<k} \mathbb{1}[a_j = a_k]$ and use $N_\text{eff}^{\text{agr}}(i) = N / (1 + (N{-}1)\rho_i)$. The two scales coincide on MCQA and decouple on free-form math, and the paper reports both, using $\rho$ for the theorem (see App.~\ref{app:k_sweep}).

\textbf{Worked example: Qwen2.5-7B on GSM-Hard, debate condition.} With $M{=}1017$ items per $N$ and $\tau{=}3$:
\begin{itemize}[leftmargin=1.4em, itemsep=0pt, topsep=2pt]
\item Step 4 outputs (mean $N_\text{eff}^{\text{ans}}$ per $N$):
$R(2) = 0.93$, $R(3) = 0.85$, $R(5) = 0.65$.
\item Step 5 yields $c = 0.57$, $\beta = 0.014$, $R^2 = 0.999$.
\item Step 6: $\beta \approx 0$, hard-ceiling regime, predicted ceiling
$1/c = 1.74$.
\item Held-out check at $N{=}30$: predicted $N_\text{eff} = 1.72$, observed $1.79$, relative error $4\%$. Across all nine (task, condition) cells, the mean relative error is $5.5\%$ (Table~\ref{tab:held_out_audit}, App.~\ref{app:cross_model}).
\end{itemize}

\textbf{Practical defaults.} $M \ge 200$ items per $N$ is sufficient for stable means. Calibration grid $\{2, 3, 5\}$ minimizes compute while keeping the fit identified (three points for two parameters), and larger grids $\{2, 3, 5, 7, 10\}$ tighten the fit at modest extra cost. The same procedure fits $(c, \beta)$ on the answer-diversity ($N_\text{eff}^{\text{ans}}$) and correctness-redundancy ($N_\text{eff}^{\text{corr}}$) levels separately. The two coincide on bounded-answer MCQA and decouple on free-form math (\S~\ref{sec:framework}).

\subsection{Token cost scaling}
\label{app:cost}

The Ringelmann ceiling translates directly into wasted compute. We report cost as prompt tokens per effective agent ($\text{Prompt}/\widehat{N}_\text{eff}$), a hardware-independent measure that mirrors per-token API pricing and isolates the scaling claim from batch-size and utilization effects. At $N{=}30$, debate spends roughly $91$K prompt tokens per effective agent on MMLU-Hard and $64$K on GSM-Hard, versus $7.5$K and $1.4$K for self-correction, a $12{\times}$ and $46{\times}$ ratio. On GPQA the gap is similar ($101$K vs.\ $10$K, ${\approx}10{\times}$). Output-token cost follows the same pattern but is dominated by the prompt term in debate, where each agent must read $N{-}1$ peer messages. Figure~\ref{fig:prompt_scaling} shows the super-linear growth of debate prompt tokens versus the linear growth of self-correction across all three tasks, which drives the cost asymmetry.

\begin{table}[H]
\caption{Scaling summary at $N{=}30$ (Qwen2.5-7B).
$\Delta$ acc: team accuracy at $N{=}30$ minus solo accuracy at $N{=}1$ (the round-1 ensemble baseline; $0.555$ on MMLU-Hard, $0.100$ on GSM-Hard, $0.303$ on GPQA). Prompt/$N_\text{eff}$: prompt tokens per effective agent, quantifying redundant compute.}
\label{tab:scaling}
\centering
\small
\begin{tabular}{llcccc}
\toprule
Task & Condition & Acc & $\Delta$ acc & $N_\text{eff}$ & Prompt/$N_\text{eff}$ \\
\midrule
\multirow{2}{*}{MMLU} & Debate & .579 & +.023 & 1.12 & 91K \\
                       & Self-corr. & .587 & +.032 & 1.21 & 7.5K \\
\midrule
\multirow{2}{*}{GSM} & Debate & .264 & +.164 & 1.79 & 64K \\
                      & Self-corr. & .282 & +.182 & 6.10 & 1.4K \\
\midrule
\multirow{2}{*}{GPQA} & Debate & .318 & +.015 & 1.15 & 101K \\
                       & Self-corr. & .313 & +.010 & 1.29 & 10K \\
\bottomrule
\end{tabular}
\end{table}

\begin{figure}[H]
\centering
\includegraphics[width=0.5\textwidth]{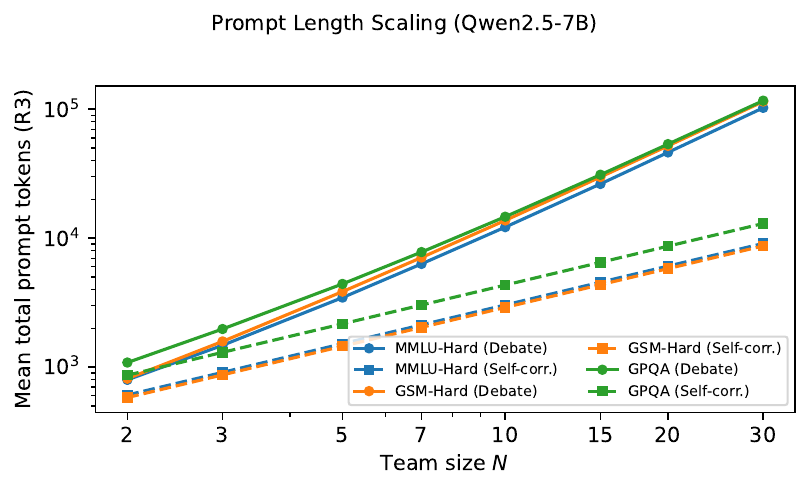}
\caption{Total prompt tokens per item at R3 (Qwen2.5-7B), across three tasks and two conditions. Solid lines: debate. Dashed lines: self-correction. Debate prompt cost grows super-linearly in $N$ because each agent reads $N{-}1$ peer messages, while self-correction grows linearly. The resulting $\sim 10$--$45\times$ gap at $N{=}30$ reflects this scaling difference.}
\label{fig:prompt_scaling}
\end{figure}

\subsection{Heuristic accuracy projection}
\label{app:accuracy_projection}

\textbf{Solo accuracy and the Condorcet bridge.} On a separate axis from the scaling regime classification, solo accuracy $p$ interacts with majority voting through the Condorcet CDF. When the model is already strong ($p > 0.5$, MMLU-Hard), correlation prevents the majority-vote gain from materializing: even a small $\rho > 0$ collapses $N_\text{eff}$ near $1$, and accuracy gains over solo are at most a few points. When individual accuracy is low ($p \approx 0.10$--$0.30$, GSM-Hard and GPQA), absolute accuracy still improves substantially from $10\%$ to $26$--$28\%$ on GSM-Hard via re-evaluation rather than peer information (self-correction $+18.2$~pp, debate $+16.4$~pp). On GSM-Hard, this gain is possible despite $p \ll 0.5$ because wrong answers scatter across many distinct numeric strings, so the correct answer wins by plurality even at low individual accuracy: this is the regime where free-form math diverges from the binary Condorcet threshold.

The calibration protocol in \S~\ref{sec:results} predicts the structural ceiling $N_\text{eff}$, which is the paper's primary quantitative prediction. As a heuristic, one can plug the global mean individual accuracy $\bar{p}$ into the correlated-Condorcet approximation: $\text{PP}(N_\text{eff}, \bar{p})$.

\textbf{Theoretical limitation.} Because $\text{PP}(\cdot, p)$ is nonlinear in $p$, applying it at the population mean $\bar{p}$ ignores item-difficulty heterogeneity: $\text{PP}(N_\text{eff},\, \mathbb{E}[p_i]) \neq \mathbb{E}[\text{PP}(N_\text{eff},\, p_i)]$. The heuristic is best understood as a practical projection layered on top of the structural ceiling law.

\textbf{$\rho$ quantities.} See \S~\ref{sec:framework} for the full notation guide. On MCQA tasks all three measures track each other (spread 0.05--0.10 for debate and self-correction, up to 0.15 for noise, Table~\ref{tab:c_vs_rho}). On free-form GSM-Hard they diverge substantially: $\rho_\text{ICC}^{\text{corr}} \approx 0.7$--$0.9$ while $\rho_\text{eff}^{\text{ans}} \approx 0.1$--$0.5$, because many distinct wrong numeric answers inflate answer entropy without contributing useful diversity. The scaling law describes the answer-level quantity, and the correctness-level ICC confirms the condition ordering (debate $>$ self-correction $\approx$ noise) independently.

\textbf{Prediction recipe (MCQA tasks).} On MCQA tasks where $\rho_\text{eff}^{\text{ans}} \approx \rho_\text{ICC}^{\text{corr}}$, the scaling law supports a direct accuracy projection. Calibrate two parameters from small teams ($N \le 5$): $\bar{p}$ (mean solo accuracy) and $\rho$ (answer-level correlation, Eq.~\ref{eq:neff}). For any target team size $N$, compute $N_\text{eff} = N / (1 + (N{-}1)\rho)$ and predict team accuracy as $\text{PP}(N_\text{eff},\, \bar{p})$. On free-form tasks where the answer-level and correctness-level correlations diverge, the Condorcet bridge becomes approximate. The scaling law still describes answer-diversity dynamics but should not be chained to accuracy via $\text{PP}$ without additional calibration.

\textbf{Empirical validation.} Per-agent accuracy is nearly constant across team sizes ($\pm 1.3$~pp across $N{=}2$--$30$ on all tasks), so the two-parameter calibration is stable. For homogeneous debate, projecting from $N \le 5$ to $N{=}7$--$30$ gives mean errors ${\le}1.4$~pp on all three tasks (Table~\ref{tab:accuracy_prediction}). Heterogeneous teams show larger errors (up to $7.2$~pp on GSM-Hard), modestly exceeding the per-item sampling noise of 200-item subsets ($\pm 3.5$~pp at $p \approx 0.3$). The excess error is consistent with the wider item-difficulty distribution in heterogeneous mixes: different model families struggle on different items, and the population-mean projection $\text{PP}(N_\text{eff}, \bar p)$ does not capture this within-item heterogeneity. For accuracy-critical applications on heterogeneous teams, item-level calibration would be preferable. The structural ceiling prediction ($N_\text{eff}$ as a function of $N$) remains accurate even when the accuracy projection through $\text{PP}$ degrades.

\begin{table}[H]
\caption{Heuristic accuracy projection from small-team calibration (debate).
$\rho$ calibrated from $N \le 5$ only, so $N{=}30$ predictions are fully out of sample (OOS).}
\label{tab:accuracy_prediction}
\centering
\small
\begin{tabular}{llc@{\hskip 8pt}cc@{\hskip 8pt}cc}
\toprule
& & Calibration & \multicolumn{2}{c}{Accuracy at $N{=}30$} & \multicolumn{2}{c}{OOS error (pp)} \\
\cmidrule(r){3-3} \cmidrule(lr){4-5} \cmidrule(l){6-7}
Setting & Task & $\rho$ & Predicted & Observed & Mean & Max \\
\midrule
\multirow{3}{*}{Homogeneous} & MMLU-Hard & .846 & .567 & .579 & 1.3 & 1.9 \\
                              & GSM-Hard  & .566 & .249 & .264 & 1.4 & 1.7 \\
                              & GPQA      & .796 & .331 & .318 & 1.1 & 1.9 \\
\midrule
\multirow{3}{*}{Heterogeneous} & MMLU-Hard & .560 & .421 & .445 & 3.4 & 4.8 \\
                                & GSM-Hard  & .349 & .272 & .320 & 5.9 & 7.2 \\
                                & GPQA      & .588 & .297 & .303 & 0.9 & 1.2 \\
\bottomrule
\end{tabular}
\end{table}

\subsection{Detailed deployment economics}
\label{app:deployment_econ}

The condensed argument in \S~\ref{sec:results} contrasts instruct-class and reasoning-model multi-agent debate. This appendix expands the reasoning behind the deployment claims.

\textbf{Cost arithmetic.} A 30-agent, 3-round debate costs roughly $30 \times 3 \times 1500 \times \$10/\text{M}\approx\$1.35$ per problem on a frontier reasoning model at low thinking budget (typical Tier-1 output pricing), and runs in minutes per query under serial generation. The same debate on a self-hosted 7--8B instruct model costs roughly $\$10^{-3}$ per problem (amortized hardware plus power) and finishes in tens of seconds with batched inference. The latency gap is similar: instruct-class generation runs at $0.3$--$1$~s per response on local A100/L40S hardware with vLLM, while reasoning models at low thinking budget take $3$--$15$~s per response under serial generation, giving wall-clock contrasts of one to two orders of magnitude.

\textbf{Where instruct multi-agent actually deploys.} The cost asymmetry concentrates multi-agent debate at the instruct-class tier in practice. Real-time, high-volume, on-prem, and on-device deployments preferentially run open-weight $7$--$8$B instruct models because of latency, per-query budget, or data-residency constraints. Reasoning models are typically reserved for one or two hard sub-tasks within a pipeline, not used as the bulk of fan-out.

\textbf{Throughput trade-off at high solo accuracy.} For configurations with high solo accuracy, the framework predicts $c \to 1$. If the prediction were to hold at multi-agent scale (untested in this paper), the deployment consequence would be straightforward: $N$ agents debating one problem and $N$ agents working independently on $N$ problems would achieve near-identical aggregate accuracy at $1/N$ throughput, in which case the framework's prescription becomes \emph{parallelize, not team}. The solo-accuracy threshold at which $c$ approaches $1$ is itself identifiable from a small-$N$ pilot.

\section{Thinking model: stronger reasoning tightens the ceiling}
\label{app:thinking}

This appendix supports C3 by evaluating whether models with extended internal reasoning might resist peer conformity better, escaping the hard-ceiling regime. We use Qwen3-8B in thinking mode on MMLU-Hard (724 items, $N \in \{2,3,5\}$, all three communication conditions).

The results are the opposite of the hypothesis. The thinking model has substantially higher solo accuracy ($87.2\%$ vs.\ $55.6\%$ for Qwen2.5-7B), yet debate drives correlation \emph{higher} ($\rho = 0.983$ at $N{=}5$ vs.\ $0.911$ for the non-thinking model) and effective team size \emph{lower} ($\widehat{N}_\text{eff} = 1.02$ vs.\ $1.12$). The estimated Ringelmann parameters confirm this: debate gives $c = 1.00$, $\beta \approx 0$, the hard-ceiling regime with ceiling $1/c = 1.00$ (Table~\ref{tab:thinking}).

\begin{table}[H]
\caption{Ringelmann comparison: thinking vs.\ non-thinking model (MMLU-Hard).
The thinking model has higher solo accuracy but a tighter Ringelmann ceiling under debate.}
\label{tab:thinking}
\centering
\small
\begin{tabular}{llccccc}
\toprule
Model & Condition & Solo acc. & $\rho$ ($N{=}5$) & $N_\text{eff}$ ($N{=}5$) & $c$ & Ceiling \\
\midrule
\multirow{2}{*}{Qwen2.5-7B} & Debate & .556 & .911 & 1.12 & .85 & 1.18 \\
                              & Self-corr. & .556 & .879 & 1.17 & .84 & 1.19 \\
\midrule
\multirow{2}{*}{Qwen3-8B-think} & Debate & .872 & .983 & 1.02 & 1.00 & 1.00 \\
                                  & Self-corr. & .872 & .920 & 1.11 & .89 & 1.12 \\
\bottomrule
\end{tabular}
\end{table}

The mechanism is straightforward at the agreement scale. Initial pairwise agreement is already high for both models on MMLU-Hard at $N{=}5$ ($\rho^{(0)} \approx 0.91$ for Qwen3-think, $\rho^{(0)} \approx 0.96$ for Qwen2.5-7B), but a single round of debate drives the thinking model to tighter consensus ($\rho = 0.983$ vs.\ $\rho = 0.911$ for the non-thinking model). The thinking model's stronger per-agent reasoning amplifies rather than dampens conformity once peer information is on the table, leaving less room for independent evidence. The framework's predictions hold without modification, and the practical implication is \emph{stronger models need fewer debating agents, not more}.

\section{Aggregator robustness}
\label{app:aggregator}

We re-aggregate final-round team answers from the canonical homogeneous data (Qwen2.5-7B and Llama-3.1-8B, three tasks, three communication conditions: debate, self-correction, noise) under five aggregation rules, using only already-logged columns:
\begin{itemize}[leftmargin=1em, itemsep=0pt, topsep=0pt]
\item \emph{plurality}: majority vote over agent answers (paper baseline).
\item \emph{conf-weighted}: votes weighted by self-reported confidence
(a $0$--$100$ value the model is prompted to emit).
\item \emph{prob-weighted}: votes weighted by softmax probability of the
chosen answer.
\item \emph{filtered plurality}: drop the bottom $25\%$ of votes by
softmax confidence, then plurality.
\item \emph{top-1 confident}: select the single answer with maximum
softmax probability.
\end{itemize}

Self-reported confidence is poorly calibrated against the softmax (Pearson $r = 0.03$ on Qwen2.5-7B, the model emits $\ge\!85$ for almost all answers), so \emph{conf-weighted} is essentially identical to plurality and serves as a negative control. The other aggregators differ because they use the logit signal rather than the discrete answer.

Across all $(\text{model}, \text{task}, \text{condition}, N)$ cells, alternative aggregators shift team accuracy by at most $\sim 5$~pp from plurality, with mean absolute shifts under $1.3$~pp on every task. The largest shifts come from \emph{top-1-confident}: up to $3.9$~pp on MMLU-Hard, $3.7$~pp on GSM-Hard, and $5.0$~pp on GPQA, concentrated at small $N$. On the canonical Qwen2.5-7B GSM-Hard noise cell at $N{=}30$, top-1-confident lifts accuracy by $+2.4$~pp over plurality. The re-estimated $\beta$ remains $\le 10^{-9}$ (numerical noise of zero) across debate, self-correction, and noise conditions on Qwen2.5-7B MMLU-Hard, so the regime classification is aggregator-invariant on the evaluated grid even when the accuracy shifts.

Heterogeneous teams (App.~\ref{app:hetero}) use balanced compositions: $N\in\{3,6,9,15,30\}$ split equally across Qwen, Llama, and Ministral. Under this balance, two natural alternatives to plurality become algebraically trivial:
\begin{itemize}[leftmargin=1em, itemsep=0pt, topsep=0pt]
\item \emph{block-weighted} (each model family gets total weight $1$):
collapses exactly to plurality when family counts are equal.
\item \emph{family-then-vote} (each family takes its internal plurality,
then majority across families): differs from plurality by at most $1$~pp across all $(\text{task}, k, N)$ cells.
\end{itemize}
We find no aggregator effect on heterogeneous accuracies under the balanced design. \emph{Imbalanced} hetero designs (e.g.\ a single Mistral seeded into a Qwen-dominant team) are needed to discriminate the block-exchangeable prediction (App.~\ref{app:hetero_theory}) empirically, and remain future work.

\textbf{Adjudication-based aggregators.} Aggregators that score peer rationales (debate-with-cross-examination, judge-mediated consensus, verifier-feedback aggregation) require additional generation beyond what the present sweep collects, since they need a separate adjudicator output per round. Our framework predicts they lower $\rho^{(0)}$ when the adjudicator is independent of the underlying agents, and leave $\rho^{(0)}$ unchanged when the adjudicator is the same model. Empirical confirmation requires a dedicated experiment.

\section{HumanEval pilot: open-ended generation}
\label{app:humaneval}

\textbf{Setup.} We extend the framework to open-ended code generation as a pilot. Qwen2.5-7B-Instruct on 30 randomly sampled HumanEval problems \citep{chen2021codex} at $N \in \{1, 2, 3, 5, 7, 10\}$, two communication conditions (\emph{comm\_true}: peer code; \emph{comm\_self}: own previous code only), 3 communication rounds, 30 items per cell. We attempted $N{=}30$ at \emph{comm\_true\_k29} but the input from 29 peers' code exceeded the 1024-token output budget on revision rounds; those rows are excluded. Noise placebo (\emph{comm\_random}) was not run.

\textbf{Two measurement axes.} HumanEval admits a discrete \emph{pass-vector} measurement absent in MCQA or numeric math. The problem's \texttt{check} function contains $T$ assertions; an agent's pass-vector is the binary length-$T$ vector indicating which assertions its code passes. We compute pass-vectors deterministically by running each assertion in isolation through \texttt{human\_eval.execution.check\_correctness} (per-assertion sandboxed execution with timeout). Mean $T = 6.9$ across our 30 problems, giving a nominal answer space of ${\sim}2^{6.9}$ per problem. We compute two correlation profiles, both via the variance-matching identity (\S\ref{sec:framework}). On the \emph{pass-vector axis}, agents agree iff their pass-vectors match (functional-outcome axis). On the \emph{code axis}, agents agree iff their canonicalized code matches: each completion runs through Python's AST round-trip (\texttt{ast.parse} $\to$ docstring stripping $\to$ \texttt{ast.unparse}), so cosmetic differences (whitespace, comments, docstring formatting) collapse but algorithmic differences (variable naming, control flow, helper structure) are preserved. This is the open-ended axis.

\textbf{Same task, two regimes (Table~\ref{tab:humaneval}).} The pass-vector axis collapses HumanEval to a hard-ceiling regime ($\beta \in [0.05, 0.08]$, $c \in [0.92, 0.99]$), matching the GPQA pattern. Assertions are highly correlated since most problems are all-pass or all-fail at the function level, so the projection from ${\sim}2^{6.9}$ pass-vectors to the empirical distribution is to one or two modal vectors per problem. The code axis recovers a sublinear regime ($\beta \in [0.12, 0.31]$, $c \in [0.39, 0.59]$), comparable to GSM-Hard. The Ringelmann form fits both axes with $R^2 \ge 0.99$.

\textbf{What this earns.} The same task, measured on a discrete-outcome axis, lands in MCQA-like regime; measured on an open-ended axis, lands in GSM-like regime. The variance-matching identity holds in both, supporting the claim that open-ended generation extends the framework rather than threatening it: $\rho_y$ on a continuous quality measure leaves the form intact. \emph{comm\_self} preserves more code diversity than \emph{comm\_true} ($\beta = 0.31$ vs $\beta = 0.12$ on the code axis): without peer code, agents retain individual algorithmic variation. This contrast is invisible on the pass-vector axis ($\beta_\text{self} = 0.08$ vs $\beta_\text{true} = 0.05$) because both collapse to the same all-pass or all-fail outcome.

\textbf{Caveats.} 30 items per condition is sufficient for $R^2 > 0.99$ fits but marginal for bootstrap CIs (we report point fits only). $N$ is capped at 10. Two conditions only (\emph{comm\_random} / noise placebo not run), so the inert-peer claim from the main paper is not directly replicated here. Cross-model and cross-scale replication on HumanEval are future work. The code-canonicalization metric treats variable-name choices (\texttt{sequence} vs \texttt{tribonacci}) as semantic; an alpha-renaming variant that further normalizes variable identifiers is a natural alternative we have not pursued.

\begin{table}[H]
\centering
\small
\caption{HumanEval pilot: per-$N$ efficiency under two measurement axes (final round, 30 items per cell). $R^{\text{pv}}_N$ uses the pass-vector axis (functional outcome); $R^{\text{code}}_N$ uses the canonicalized-AST code axis (open-ended).}
\label{tab:humaneval}
\begin{tabular}{r rrr rrr}
\toprule
& \multicolumn{3}{c}{comm\_true} & \multicolumn{3}{c}{comm\_self} \\
\cmidrule(lr){2-4} \cmidrule(lr){5-7}
$N$ & $R^{\text{pv}}_N$ & $R^{\text{code}}_N$ & acc & $R^{\text{pv}}_N$ & $R^{\text{code}}_N$ & acc \\
\midrule
1  & \multicolumn{6}{c}{ensemble: $R^{\text{pv}} = R^{\text{code}} = 1.00$, acc $= 0.73$} \\
2  & 0.52 & 0.68 & 0.67 & 0.53 & 0.77 & 0.73 \\
3  & 0.34 & 0.45 & 0.73 & 0.37 & 0.64 & 0.77 \\
5  & 0.22 & 0.34 & 0.73 & 0.23 & 0.52 & 0.73 \\
7  & 0.16 & 0.29 & 0.70 & 0.19 & 0.45 & 0.73 \\
10 & 0.11 & 0.19 & 0.77 & 0.12 & 0.37 & 0.77 \\
\bottomrule
\end{tabular}

\smallskip
\textit{Ringelmann fits.} comm\_true: pass-vec $(c, \beta) = (0.99, 0.05)$, $R^2 = 1.00$; code $(0.59, 0.12)$, $R^2 = 0.99$. comm\_self: pass-vec $(0.92, 0.08)$, $R^2 = 1.00$; code $(0.39, 0.31)$, $R^2 = 1.00$.
\end{table}


\end{document}